\documentclass[aps,prx,english,superscriptaddress,twocolumn,tightenlines,showpacs,longbibliography,floatfix,amsfonts]{revtex4-2}
\usepackage[T1]{fontenc}
\usepackage[utf8]{inputenc}
\usepackage[normalem]{ulem}
\usepackage{textcomp}
\usepackage{graphicx}
\usepackage{subfigure}
\usepackage[dvipsnames,svgnames]{xcolor}
\usepackage{mathrsfs,mathtools,xfrac}
\usepackage{amsmath,amssymb,amsthm,amscd,bm,bbm}
\usepackage{txfonts} 
\usepackage{afterpackage,multirow}
\usepackage{booktabs,minibox}
\usepackage[colorlinks,bookmarks=false,citecolor=blue,linkcolor=blue,urlcolor=blue,filecolor=blue]{hyperref}
\usepackage{tikz-cd}

\newtheorem{theorem}{Theorem}
\newtheorem{corollary}{Corollary}
\newtheorem{proposition}{Proposition}
\newtheorem{lemma}{Lemma}

\newtheorem{fact}{Fact}

\theoremstyle{remark}
\newtheorem{remark}{Remark}

\theoremstyle{definition}
\newtheorem{definition}{Definition}
\newtheorem{example}{Example}
\newtheorem{construction}{Construction}

\newcommand{\im}{\operatorname{im}}

\begin{document}

	\title{From Koszul-Complex Stabilizer Models to Superselection Profiles:\\ Topological Rigidity and Nonsplit Extensions}
	\author{Hao Song}
	\email{songhao@itp.ac.cn}
	\affiliation{Institute of Theoretical Physics, Chinese Academy of Sciences, Beijing 100190, China}
	\date{August 10, 2026}

	\begin{abstract}
		Koszul-complex stabilizer models unify the toric-code hierarchy and bivariate-bicycle codes in one homological framework. To study translation-symmetry-enriched topological (SET) phases in these models and, more generally, in translation-invariant Calderbank--Shor--Steane (CSS) codes with stabilizer maps $(\varphi_X,\varphi_Z)$, we introduce the associated topological superselection profile $\mathscr S_\sigma\coloneqq\tau_{\geq1}\mathbf R\!\operatorname{Hom}_R(\overline{\operatorname{coker}\varphi_\sigma},R)$, $\sigma=X,Z$. Its cohomology layers $E_\sigma^\ell\coloneqq H^\ell(\mathscr S_\sigma)\cong_R\operatorname{Ext}_R^\ell(\overline{\operatorname{coker}\varphi_\sigma},R)$ encode sectors, fusion, and translation action. When finite, the $\ell=1,2,\ldots$ layers describe pointlike, looplike, and higher-dimensional excitations; $\mathscr S_\sigma$ retains inter-layer gluing. For regular Koszul models, we compute $E_\sigma^\ell$ explicitly, find $\mathscr S_\sigma$ single-layer, and obtain finite-size $Z$-logicals from $\operatorname{Tor}$. Single-layer means that at most one positive-degree layer $E_\sigma^\ell$ is nonzero. Our matrix-level Schanuel-lemma method proves rigidity: under a finite-resolution hypothesis, the nonzero layer and degree uniquely determine a topological CSS code's translation SET order at the stable translation-invariant Clifford level. For prime qudits the hypothesis is automatic, and a finite layer's translation kernel gives the exact minimal coarse graining to a toric-code stack. We realize admissible single-layer data. Beyond this regime, split and nonsplit extensions of 3D qubit toric codes yield eight models sharing all identical $E_\sigma^\ell$ layers with trivial translation action but having distinct size-dependent ground-state degeneracies, hence distinct translation SET phases. Thus inter-layer gluing carries SET data invisible to individual layers.
	\end{abstract}

	\maketitle

	\section{Introduction}
	Translation-invariant stabilizer codes lie at the intersection of quantum error
	correction and topological
	quantum matter. Locality and
	translation symmetry turn questions about their phases into concrete algebraic
	problems:
	Local Pauli operators and stabilizers are represented by modules and matrices
	over Laurent polynomial rings, with multiplication by monomials encoding
	lattice translations~\cite{Haah2013}. This viewpoint connects the
	classification of topological stabilizer phases with the construction and
	analysis of translation-invariant quantum low-density parity-check (qLDPC) codes. Homological
	constructions have driven major advances in qLDPC
	parameters~\cite{BreuckmannEberhardt2021,HastingsHaahODonnell2021,
	PanteleevKalachev2022AlmostLinear,PanteleevKalachev2022}, while
	Laurent-polynomial methods
	allow topological and finite-size code data to be extracted directly from
	stabilizer presentations~\cite{Liang2024Extracting,Chen2025PRL,
	Liang2025TwistedTori}. The same language exposes the hierarchy of local
	relations among stabilizer generators, relations among those relations, and
	their higher analogues. The topological superselection profile introduced
	below organizes this hierarchy into a single derived invariant. Two important
	code families already exhibit a common
	Koszul structure: the two-polynomial check maps of bivariate-bicycle (BB)
	codes~\cite{Bravyi2024BB} have the length-two Koszul form, while
	higher-dimensional toric-code complexes
	arise from the coordinate sequence \(x_1-1,\ldots,x_D-1\).

	Lattice translations can enrich a topological order by permuting its
	superselection sectors~\cite{Barkeshli2019}. Such an action can make the torus
	ground-state degeneracy depend on system size even when translation symmetry
	remains unbroken~\cite{Wen2003QuantumOrders,Watanabe2023GSD}. A central
	problem is to determine how much of the resulting
	translation-symmetry-enriched topological (SET) order is fixed by its
	excitation data. Ruba and Yang introduced
	a hierarchy of charge modules expressed through the \(\operatorname{Ext}\)
	functor and showed that their module structures retain the action of lattice
	translations~\cite{Yang2023}. They left open the broader question of how far
	these invariants determine a stabilizer code up to symplectic transformations
	(corresponding to Clifford quantum cellular automata), coarse graining, and
	stabilization. Related work provides an algorithm for extracting anyon types,
	string operators, fusion rules, topological spins, and braiding statistics
	directly from two-dimensional generalized Pauli stabilizer
	presentations~\cite{Liang2024Extracting}. For bivariate-bicycle codes, earlier
	work introduced the \emph{mobility sublattice}---the subgroup of translations
	preserving every anyon type---together with its coordinate spacings, termed
	\emph{anyon periods}, and related them to topological frustration and the
	system-size dependence of the ground-state
	degeneracy~\cite{Chen2025PRL}. This SET framework was subsequently extended
	to \(\mathbb Z_N\) stabilizer codes~\cite{He2026PRB}.

	At the circuit level, two-dimensional prime-qudit
		topological stabilizer codes are also known to reduce to toric-code stacks
		after suitable coarse graining~\cite{Bombin2012Universal,
		Bombin2014Structure,Haah2016Algebraic,Haah2021Classification}. In three
		dimensions, complementary string--membrane diagnostics motivate the
		conjecture that stabilizer models with only fully mobile particles can be
		disentangled into toric-code stacks~\cite{Dua2019Sorting}. For
		composite-dimensional qudits, however, Pauli stabilizer models can realize
	more general Abelian topological orders, including twisted quantum
		doubles~\cite{Ellison2022TwistedQuantumDoubles}. Given the prevalence of
		translation symmetry in lattice codes, it is natural to ask a finer
		circuit-level question: when does the topological superselection profile
		determine the stable Clifford class with this symmetry retained?

	Recent \(L\)-theoretic approaches address complementary classification
	problems for stabilizer systems. Ruba and Yang establish a bulk--boundary
	correspondence and, in two dimensions, characterize stabilizer states by
	their Abelian anyon models with gappable
	boundaries~\cite{RubaYang2025Witt}. Geiko and Shuklin classify invertible
	trivial-charge codes using relative \(L\)-theory, modulo condensation and
	stabilization~\cite{GeikoShuklin2025Invertible}, while Yang and Yu formulate
		an \(L\)-theoretic classification of fully mobile prime-qudit codes up to
		algebraic gapped interfaces and finite coarse
	graining~\cite{YangYu2026Classification}. Here we address this question at the
	level of stabilizer presentations, with translation symmetry retained. Under
	the single-layer and finite-resolution hypotheses, the nonzero layer as an
	\(R\)-module, together with its degree, determines the stable
	translation-invariant Clifford class; under the additional finiteness
	hypotheses for the toric-stack normal form, its translation kernel gives the
	maximal retained symmetry.

	We approach this problem through a concrete testing ground: CSS stabilizer
	models built from Koszul complexes. Given \(\mathbf f=(f_1,\ldots,f_s)\) in
	\(R=\mathbb Z_N[x_1^{\pm1},\ldots,x_D^{\pm1}]\), the degree-\(j\) model
	places qudits at \(K_j(\mathbf f)\) and uses the adjacent Koszul differentials
	as its two stabilizer maps. If \(\mathbf f\) is regular, exactness of the
	complex and its dual gives the topological condition at every intermediate
	degree; the full-length sequence \(f_i=x_i-1\) recovers the standard
	toric-code hierarchy. Its excitation structure motivates the topological
	superselection profile introduced below for arbitrary translation-invariant
	CSS codes.

	Koszul complexes also arise elsewhere in quantum coding. Our earlier work
	used a duality between two Koszul-homology
	computations---equivalently, the natural symmetry of
	\(\operatorname{Tor}\)---to relate anyons and logical operators in BB
	codes~\cite{Chen2025PRL}. Related work by Mian, Gwilliam, and Krastanov
	develops multivariate multicycle codes from Koszul complexes over finite
	binary quotient rings~\cite{Mian2026Multivariate}. Their framework
	encompasses several established families, including BB
	codes~\cite{Bravyi2024BB}, multivariate-bicycle (MB)
	codes~\cite{Voss2025MultivariateBicycle}, trivariate-tricycle (TT)
	codes~\cite{Jacob2025Tricycle,Menon2026MagicTricycles}, generalized-bicycle
	(GB) codes~\cite{KovalevPryadko2013}, and abelian two-block group-algebra
	(2BGA) codes~\cite{LinPryadko2024}. Their middle-degree
	construction is closely related, up to transpose and \(X/Z\) conventions, to
	the central three-term segment in our periodic qubit specialization. Their
	longer complexes also use the neighboring Koszul differentials as metachecks
	for complete single-shot decoding.
	Here we focus instead on regular Koszul complexes over the infinite-lattice
	Laurent ring, their
	superselection and finite-size invariants, and their circuit-level phase
	classification.

	\paragraph*{Main results.}
	Our contributions have four components:
	\begin{itemize}
		\item \textbf{Central definition.}
		Definition~\ref{def:superselection-profile} introduces the topological
		superselection profile, which organizes all positive-degree
		superselection layers and their derived inter-layer gluing.
		Table~\ref{tab:terminology} summarizes the accompanying algebra--physics
		dictionary.

		\item \textbf{General theorems and method.}
		Building on Ruba and Yang's charge-module framework~\cite{Yang2023},
		Theorem~\ref{thm:superselection} recasts the \(\operatorname{Ext}\)-based
		characterization of superselection sectors in terms of support-sensitive CSS
		maps, making its physical meaning explicit. We
		then introduce a matrix-level Schanuel-lemma method for CSS-code
		classification. Together with the algebraic--Clifford correspondence, it
		converts stable module isomorphism into stable Clifford equivalence and yields
		Theorem~\ref{thm:topological-rigidity}, which establishes topological rigidity
		for single-layer profiles of codes whose reduced Pauli-\(X\) module admits a
		finite-length free resolution. For prime qudits,
		Corollaries~\ref{cor:clifford-equivalence-isolated}
		and~\ref{thm:minimal-coarse-graining} give a stable Clifford-equivalence
		criterion and, for a finite single-layer profile, the exact minimal coarse
		graining to a toric-code stack.

		\item \textbf{Constructions.}
		Construction~\ref{constr:single-degree-realization} realizes prescribed
		single-layer data satisfying the grade criterion. Beyond the rigidity
		theorem's single-layer setting,
		Construction~\ref{constr:degree-12-koszul-extension} produces multi-degree
		extensions, including nonsplit examples.

		\item \textbf{Model-specific results.}
		A sequence of Facts establishes the topological and finite-size properties
		of regular Koszul-complex stabilizer models, their full-length toric-stack normal form,
		phase distinctions among the eight three-dimensional qubit extensions, and
		a uniform coarse-splitting construction identifying their common intrinsic
		toric-code order.
		Koszul-complex models---including toric codes---serve as the principal
		testing ground throughout.
	\end{itemize}

	We now state the main conclusions more concretely. Let
	\(S\coloneqq R/(\mathbf f)\). For a degree-\(j\) regular Koszul-complex
	model, all positive-degree superselection layers vanish except
	\begin{equation*}
		\operatorname{Ext}_R^j
		\bigl(\overline{\operatorname{coker}\varphi_X},R\bigr)
		\cong_R S,
		\qquad
		\operatorname{Ext}_R^{s-j}
		\bigl(\overline{\operatorname{coker}\varphi_Z},R\bigr)
		\cong_R\overline S,
	\end{equation*}
	The Abelian-group structure of \(S\) gives excitation fusion, while its
	\(R\)-module structure records the lattice-translation action. On the
	\(\mathbf L\)-periodic torus, the same module determines the finite-size
	\(Z\)-logical space through
	\begin{equation*}
		\mathcal H_j^Z(\mathbf L)
		\cong_{R_{\mathbf L}}
		\operatorname{Tor}_j^R(S,R_{\mathbf L}).
	\end{equation*}
	This yields an all-degree, all-dimensional logical--superselection
	correspondence, together with duality, Euler, and arithmetic finite-size
	relations, extending the earlier anyon--logical result for BB
	codes~\cite{Chen2025PRL}.

	For general CSS codes, Theorem~\ref{thm:superselection} gives a direct
	stabilizer-map derivation of the \(\operatorname{Ext}\) description of
	superselection sectors~\cite{Yang2023}. We package these sectors into the
	positive-degree derived object called the topological superselection profile:
	its cohomology modules encode
	fusion and translation, while the derived structure can retain inter-layer
	gluing. We then reformulate the generalized Schanuel lemma for presentation
	matrices and combine it with the algebraic--Clifford dictionary. The resulting
	Theorem~\ref{thm:topological-rigidity} states that, in the single-layer setting
	with a finite-length free resolution, the nonzero layer and its degree
	determine the stable translation-invariant Clifford class; the resolution
	hypothesis is automatic for prime qudits. The charge-module dimension bound of
	Ruba and Yang~\cite[Prop.~25]{Yang2023} becomes the necessary grade condition
	in our CSS setting. Construction~\ref{constr:single-degree-realization} gives
	an explicit realization for every finitely generated module
	\(E\) with \(\operatorname{grade}_R(E)>j\), proving sufficiency.

	For a prime-qudit code whose \(X\)-sector profile is single-layer with finite
	nonzero layer \(E\), rigidity also gives the exact retained-translation
	threshold for a toric-code-stack normal form with trivial action of the
	retained translations: a translation subgroup is compatible precisely when
	it acts trivially on \(E\). In the overlapping
	two-dimensional qubit setting, this agrees with Wang \emph{et al.}, who
	additionally give a polynomial-time construction of the decoupling map and
	bounds on operator spreading~\cite{Wang2026Decoupling}. Our criterion applies
	in every dimension and admissible degree under the single-layer and finiteness
	hypotheses. In particular, full-length regular Koszul models reduce to finite
	toric-code stacks after the exact coarse graining determined by the translation
	action on \(S\).

	Finally, Construction~\ref{constr:degree-12-koszul-extension} couples
	degree-one and degree-two Koszul models, including nonsplit extensions beyond
	the rigidity theorem. In the three-dimensional qubit toric-code
	specialization, the eight models have identical layers \(E_\sigma^\ell\) for
	\(\sigma=X,Z\) and \(\ell=1,2\), with trivial translation action, but
	different system-size dependences of the ground-state degeneracy. They are
	therefore inequivalent under
	finite-depth local unitaries with the original translation symmetry fixed.
	Uniform \(2\times2\times2\) blocking nevertheless
	reduces all eight to the same decoupled toric-code stack. Thus
	individual layers \(E_\sigma^\ell\) do not classify multi-layer translation
	SET order: inter-layer gluing can carry additional, physically
	detectable information.

	The remainder of the paper develops these results in the same order. We
	first review the polynomial CSS formalism and introduce Koszul-complex
	models. We then formulate and compute the superselection profiles, derive
	the finite-size \(\operatorname{Tor}\) description, prove rigidity and
	realizability, and close with the nonsplit extensions and their physical
	implications.

\section{Algebraic formalism for translation-invariant CSS models}
\label{sec:polynomial-formalism}

We begin by formulating translation-invariant Calderbank--Shor--Steane (CSS)
stabilizer models~\cite{CalderbankShor1996,Steane1996} using the Laurent
polynomial formalism~\cite{Haah2013}. Consider a $\mathbb{Z}_N$-qudit CSS
model on the $D$-dimensional lattice $\mathbb{Z}^D$. Suppose that each unit
cell contains $q$ physical qudits, along with $a$ $X$-type and $b$ $Z$-type
stabilizer generators (also known as check operators).

Recall that for a $\mathbb{Z}_N$ qudit, the generalized Pauli operators $X$ and $Z$ act on the computational basis states $|z \rangle$ (for $z \in \mathbb{Z}_N =\{0, 1, \dots, N-1\}$) as
\begin{equation}
	X| z \rangle = | (z + 1) \bmod N \rangle, \qquad Z|z \rangle = \eta^z |z \rangle,
\end{equation}
where $\eta = \exp(2\pi \mathrm{i}/ N)$ is a primitive $N$-th root of unity. These operators satisfy $X^N = Z^N = I$ and $ZX = \eta XZ$.

	Throughout, \(\mathbb Z_N\) denotes \(\mathbb Z/N\mathbb Z\); in particular,
	\(\mathbb Z_p\) is a field when \(p\) is prime. For \(m>1\),
	\(\mathbb Z_{p^m}\) should not be confused with the finite field
	\(\mathbb F_{p^m}\)~\footnote{Finite-field (Galois) qudits over
		\(k=\mathbb F_{p^m}\) define a different Pauli
		system~\cite{Ketkar2006FiniteFields}. After choosing an
		\(\mathbb F_p\)-basis, this system is on-site equivalent to a system of \(m\)
		\(\mathbb F_p\)-qudits carrying the induced \(k\)-linear structure. The Koszul,
	\(\operatorname{Ext}\), \(\operatorname{Tor}\), and rigidity constructions
	have a parallel formulation over \(k[\mathbb Z^D]\) within the
	\(k\)-linear CSS circuit category. In this category, finite-length free
	resolutions are automatic, and the nilpotent-coefficient obstruction of
	Appendix~\ref{app:pm-obstruction} is absent. We do not pursue this
	formulation here.}. Prime-power qudits always mean cyclic
	\(\mathbb Z_{p^m}\) qudits.

We algebraically encode the translation symmetry and $ \mathbb Z_N $ nature of qudits using the group ring
\begin{equation}
	R\coloneqq \mathbb Z_N[\mathbb Z^D],
\end{equation}
which can be identified with the Laurent polynomial ring
\begin{equation}
	R = \mathbb{Z}_N[x_1^{\pm1},x_2^{\pm1},\dots,x_D^{\pm1}].
\end{equation}
Here \(x_j\) corresponds to a unit translation along the \(j\)-th coordinate direction.
Generally, for  $\mathbf{v}=(v_1,\dots,v_D)\in\mathbb{Z}^D$, let
\begin{equation}
	x^{\mathbf v}
	\coloneqq
	x_1^{v_1}x_2^{v_2}\cdots x_D^{v_D},
\end{equation}
so that multiplication by \(x^{\mathbf v}\) implements translation by \(\mathbf v\).

The ring $ R $ is naturally equipped with the spatial-inversion
involution
\begin{equation}
	\overline{x^{\mathbf v}}=x^{-\mathbf v},
	\qquad
	\overline{c}=c
	\quad
	(c\in\mathbb Z_N),
\end{equation}
which extends $\mathbb Z_N $-linearly.

We write \(M\cong_A N\) for an \(A\)-linear isomorphism; for chain
complexes, the same notation denotes an \(A\)-linear chain isomorphism.

\subsection{Finite-support Pauli modules}
\label{subsec:Pauli}
Fix a reference unit cell containing $ q $ physical qudits. Recall that a free $R$-module is the analogue of a vector space, but with scalars drawn from the ring $R$. Because $R$ consists of Laurent polynomials with only finitely many nonzero terms, this algebraic structure inherently restricts us to operators with finite spatial extent. The  \textit{finitely supported} Pauli-$X$ and Pauli-$Z$ operators are thus elegantly described as two copies of a free $R$-module of rank $q$:
\begin{equation}
	P_X
	\coloneqq
	\bigoplus_{i=1}^q R e_i^X
	\cong_R R^q,
	\qquad
	P_Z
	\coloneqq
	\bigoplus_{i=1}^q R e_i^Z
	\cong_R R^q.
\end{equation}
Here, the canonical bases $\{e_i^X\}$ and $\{e_i^Z\}$ are formal labels for the single-qudit Pauli-$X$ and Pauli-$Z$ operators, respectively, on the $i$-th qudit within the reference cell. Elements of these modules naturally encode any finitely supported Pauli operator. For instance, $e_i^X + (c - x^{\mathbf v}) e_j^X\in P_X$ denotes $X_{i,\mathbf 0}\, X_{j,\mathbf 0}^{\,c}\, X_{j,\mathbf v}^{\,-1}$, where $c \in \mathbb{Z}_N$ and $X_{i,\mathbf v}$ is the Pauli-$X$ operator acting on the $i$-th qudit in cell $\mathbf v$. The analogous interpretation applies to $P_Z$.

The Pauli commutation relations give a non-degenerate $\mathbb{Z}_N$-bilinear pairing $\langle -, - \rangle_{\mathbb{Z}_N} \colon P_Z \times P_X \to \mathbb{Z}_N$, which lifts uniquely to a perfect sesquilinear pairing
\begin{equation}
	\langle -, - \rangle_R \colon P_Z \times P_X \longrightarrow R,
\end{equation}
antilinear in the first argument, $\langle r \,\xi_Z, \xi_X \rangle_R = \overline{r}\,\langle \xi_Z, \xi_X \rangle_R$, and linear in the second, $\langle \xi_Z, r\, \xi_X \rangle_R = r\,\langle \xi_Z, \xi_X \rangle_R$, such that $\operatorname{tr} \circ \langle -, - \rangle_R = \langle -, - \rangle_{\mathbb{Z}_N}$, where $\operatorname{tr}\colon R\to\mathbb{Z}_N$ extracts the constant term of Laurent polynomials. Explicitly, on the qudit-indexed bases $\{e_i^X\}$ and $\{e_i^Z\}$, we have $\langle e_i^Z, e_j^X \rangle_R = \delta_{i,j}$.

For any $R$-module $M$, we define the \emph{dagger dual}
\begin{equation}
	M^\dagger \coloneqq \operatorname{Hom}_R(\overline{M}, R),
\end{equation}
where $\overline{M}$ denotes the same abelian group with the conjugated
$R$-action $r \ast m = \overline{r}\, m$. Elements of $M^\dagger$ are thus
antilinear maps $\lambda: M \to R$, $\lambda(r\, m) = \overline{r}\, \lambda(m)$. The module structure on $M^\dagger$ is specified by the $R$-action on maps $(r \cdot \lambda)(m) = r\, \lambda(m)$.

The $R$-module isomorphisms induced by  $\langle -, - \rangle_R$, namely
\begin{equation}
	\begin{aligned}
		P_X &\xrightarrow{\sim} P_Z^\dagger,\quad \xi_X\mapsto \langle -, \xi_X \rangle_R, \quad \text{and} \\
		P_Z &\xrightarrow{\sim} P_X^\dagger,\quad \xi_Z\mapsto \overline{\langle \xi_Z, - \rangle_R},
	\end{aligned}
\end{equation}
allow us to freely identify
\begin{equation}
	P_Z = P_X^\dagger \quad \text{and} \quad P_X = P_Z^\dagger,
\end{equation}
under which $\{e_j\}$ and $\{e^j\}$ become mutually dual bases. We refer to this as the \emph{finite-support physical duality}.

\subsection{Stabilizer maps and chain complexes}

	To specify a translation-invariant CSS code, we introduce a pair of free $R$-modules to parameterize the abstract $X$- and $Z$-type stabilizers. Analogous to the Pauli modules, these are elegantly expressed as direct sums
	\begin{equation}
		G_X \coloneqq \bigoplus_{i=1}^a R \gamma_i^X \cong_R R^a, \qquad G_Z \coloneqq \bigoplus_{i=1}^b R \gamma_i^Z \cong_R R^b
	\end{equation}
	over canonical bases $\{\gamma_i^X\}_{i=1}^a$ and $\{\gamma_i^Z\}_{i=1}^b$, whose elements are formal labels for the chosen $X$- and $Z$-type stabilizer generators, respectively.

	Due to translation invariance, \emph{stabilizer generating matrices} are promoted to \emph{$R$-linear stabilizer maps}:
	\begin{equation}
		\varphi_X: G_X \to P_X, \qquad \varphi_Z: G_Z \to P_Z.
	\end{equation}
	Over canonical bases,
	these maps are explicitly represented by polynomial matrices $\varphi_X \in M_{q \times a}(R)$ and $\varphi_Z \in M_{q \times b}(R)$, whose columns directly encode the Pauli configurations for a set of stabilizer generators.

To extract error syndromes, we must invert this perspective by mapping physical Pauli configurations back to the stabilizer spaces. In standard coding theory, this is achieved by \emph{parity-check matrices}; correspondingly, translation invariance promotes these to \emph{\(R\)-linear check maps} (often referred to as \emph{excitation maps} in condensed matter physics). Algebraically, they are obtained by taking the dual of stabilizer maps. While the full \(\mathbb{Z}_N\)-linear dual is mathematically available, the inherent locality of the code makes it physically natural to restrict our attention to the finite-support physical (dagger) duals:
\begin{equation}
	\varphi_X^\dagger: P_Z \to G_X^\dagger, \qquad \varphi_Z^\dagger: P_X \to G_Z^\dagger.
\end{equation}
Here, the canonical identifications \(P_X = P_Z^\dagger\) and \(P_Z = P_X^\dagger\) have been applied. Physically, these excitation maps evaluate the violated \(X\)- and \(Z\)-type stabilizers given a Pauli-\(Z\) or Pauli-\(X\) error configuration, respectively. In matrix form, constructing this dual simply corresponds to the conjugate transpose
\begin{equation}
	\varphi^\dagger \coloneqq \overline{\varphi}^{\mathsf{T}}, \label{eq:dual_map}
\end{equation}
where the spatial inversion acts entry-wise on $\overline{\varphi}$.
	For a literal representation in the \(\mathbb{Z}_N\) setting, a minus sign is needed before \(\varphi_X^\dagger\).
For brevity, we omit this sign because it is irrelevant to determining the allowed excitation patterns; it matters only when specifying which operator creates a given excitation configuration.

Any pair of polynomial matrices \(\varphi_X\) and \(\varphi_Z\) defines a valid CSS code if and only if they satisfy the \emph{CSS commutation condition}, which physically requires all \(X\)- and \(Z\)-type stabilizer generators to commute with each other.
Algebraically, this condition can be expressed as
\begin{equation}
	\varphi_X^\dagger\varphi_Z=0,
	\qquad
	\varphi_Z^\dagger\varphi_X=0.
\end{equation}
Since $(\varphi_Z^\dagger\varphi_X)^\dagger = \varphi_X^\dagger\varphi_Z$, either of these equations actually implies the other. Equivalently, this enforces that the sequences
\begin{equation}
	G_Z\xrightarrow{\;\varphi_Z\;}P_Z
	\xrightarrow{\;\varphi_X^\dagger\;}G_X,
	\qquad
	G_X\xrightarrow{\;\varphi_X\;}P_X
	\xrightarrow{\;\varphi_Z^\dagger\;}G_Z
	\label{eq:CSS-two-complexes}
\end{equation}
form well-defined chain complexes.

A CSS code is termed \emph{topological} if the two chain complexes in Eq.~\eqref{eq:CSS-two-complexes} are exact at the physical Pauli modules $P_X$ and $P_Z$:
\begin{equation}
	\operatorname{im}\varphi_X = \ker\varphi_Z^\dagger, \qquad \operatorname{im}\varphi_Z = \ker\varphi_X^\dagger.
	\label{eq:topological-exactness-CSS}
\end{equation}
Because the ring $R$ consists of Laurent polynomials with finite support, Eq.~\eqref{eq:topological-exactness-CSS} constitutes an exactness statement strictly for spatially localized operators. Physically, it guarantees that every finitely supported Pauli operator commuting with all stabilizers is trivially a product of local stabilizers. Equivalently, the infinite system admits no nontrivial finite-support logical operators.

A striking consequence of this topological condition (i.e., Eq.~\eqref{eq:topological-exactness-CSS}) is that the two stabilizer sectors are tightly interlocked: fixing one uniquely determines the other as a submodule. If $\varphi_X$ is fixed, exactness forces $\operatorname{im}\varphi_Z = \ker\varphi_X^\dagger$, meaning $\varphi_Z$ acts merely as a choice of generators for the submodule $\ker\varphi_X^\dagger \subseteq P_Z$. Conversely, $\varphi_Z$ dictates the $\varphi_X$ sector via $\operatorname{im}\varphi_X = \ker\varphi_Z^\dagger$. In this sense, a topological CSS code is entirely specified by any one of $\varphi_X$ or $\varphi_Z$, provided the companion sector is engineered to enforce exactness.

\subsection{Periodic boundary conditions via tensor product}
\label{subsec:periodic-boundary-conditions}

To transition from the infinite translation-invariant model to a finite periodic code with dimensions
\begin{equation}
	\mathbf{L} = (L_1, \dots, L_D) \in \mathbb{Z}_{>0}^D,
\end{equation}
we impose periodic boundary conditions via the quotient ring
\begin{equation}
	R_{\mathbf L} \coloneqq R/(\mathbf{b}),\qquad \mathbf{b}\coloneqq (x_1^{L_1}-1,\dots,x_D^{L_D}-1)\in R^D.
	\label{eq:RL-vector}
\end{equation}
This algebraically identifies spatial translations by $L_\alpha$ lattice sites in the $ x_\alpha $-direction with the identity.

For the periodic system, all the modules and maps are obtained by a base ring change along $R \mapsto R_{\mathbf L}$. Specifically,
\begin{equation}
	\begin{aligned}
		P_X &\mapsto P_X\otimes_R R_{\mathbf L} \cong_{R_{\mathbf L}} R_{\mathbf L}^{\,q}, &\quad P_Z &\mapsto P_Z\otimes_R R_{\mathbf L} \cong_{R_{\mathbf L}} R_{\mathbf L}^{\,q}, \\
		G_X &\mapsto G_X\otimes_R R_{\mathbf L} \cong_{R_{\mathbf L}} R_{\mathbf L}^{\,a}, &\quad G_Z &\mapsto G_Z\otimes_R R_{\mathbf L} \cong_{R_{\mathbf L}} R_{\mathbf L}^{\,b}.
	\end{aligned}
	\label{eq:periodic-modules}
\end{equation}
The stabilizer maps descend to $R_{\mathbf L}$-linear maps
\begin{equation}
	\varphi_{X,Z} \mapsto \varphi_{X,Z} \otimes_R \operatorname{id}_{R_{\mathbf L}},
	\label{eq:periodic-stabilizer-maps-general}
\end{equation}
which in practice amounts to reducing the entries of $\varphi_X$ and $\varphi_Z$ modulo the boundary-condition ideal generated by $\mathbf{b}$.

Since the boundary-condition ideal is invariant under spatial inversion and the modules are finitely generated and free, the dagger dual commutes with the base ring change:
\begin{equation}
	\left(\varphi\otimes_R \operatorname{id}_{R_{\mathbf L}}\right)^\dagger = \varphi^\dagger\otimes_R \operatorname{id}_{R_{\mathbf L}}.
	\label{eq:periodic-dagger-compatibility}
\end{equation}

Applying this base ring change to the infinite-lattice models yields the finite periodic CSS complexes
\begin{equation}
	\begin{split}
		G_Z \otimes_R R_{\mathbf{L}} \xrightarrow{\;\varphi_Z \otimes_R \operatorname{id}_{R_{\mathbf{L}}}\;} &P_Z \otimes_R R_{\mathbf{L}} \xrightarrow{\;\varphi_X^\dagger \otimes_R \operatorname{id}_{R_{\mathbf{L}}}\;} G_X^\dagger \otimes_R R_{\mathbf{L}}, \\
		G_X \otimes_R R_{\mathbf{L}} \xrightarrow{\;\varphi_X \otimes_R \operatorname{id}_{R_{\mathbf{L}}}\;} &P_X \otimes_R R_{\mathbf{L}} \xrightarrow{\;\varphi_Z^\dagger \otimes_R \operatorname{id}_{R_{\mathbf{L}}}\;} G_Z^\dagger \otimes_R R_{\mathbf{L}}.
	\end{split}
	\label{eq:periodic-CSS-two-complexes}
\end{equation}
The CSS commutation condition is preserved by this quotient, ensuring Eq.~\eqref{eq:periodic-CSS-two-complexes} defines a valid finite CSS code. Since $R_{\mathbf L}$ is a free $\mathbb Z_N$-module of rank
\begin{equation}
	|\mathbf L| \coloneqq L_1L_2\cdots L_D,
\end{equation}
the resulting code encodes $n = q|\mathbf L|$ physical $\mathbb Z_N$-qudits.

The finite-size logical operators are given by the homology of these periodic complexes. Explicitly, the modules of $Z$-type and $X$-type logical operators are
\begin{equation}
	\!\!\mathcal{H}_Z(\mathbf{L}) \coloneqq \frac{\ker(\varphi_X^\dagger \otimes_R \operatorname{id}_{R_{\mathbf{L}}})}{\operatorname{im}(\varphi_Z \otimes_R \operatorname{id}_{R_{\mathbf{L}}})},
	\;
	\mathcal{H}_X(\mathbf{L}) \coloneqq \frac{\ker(\varphi_Z^\dagger \otimes_R \operatorname{id}_{R_{\mathbf{L}}})}{\operatorname{im}(\varphi_X \otimes_R \operatorname{id}_{R_{\mathbf{L}}})}.
	\label{eq:periodic-logical-modules}
\end{equation}
Crucially, even if the infinite-lattice model is topologically exact, tensoring with $R_{\mathbf L}$ generally fails to preserve exactness because $R_{\mathbf L}$ is not a flat $R$-module.

\begin{remark}[Models versus codes]
	Throughout this work, we distinguish between stabilizer \emph{models} and stabilizer \emph{codes}. A \emph{stabilizer model}---such as the toric-code model or a Koszul-complex stabilizer model---refers to the translation-invariant local stabilizer interaction pattern, or equivalently, the underlying local Hamiltonian data. A model may be analyzed on the infinite lattice or under specified boundary conditions. Conversely, a \emph{code} refers to a finite periodic realization of a stabilizer model. In this paper, all finite codes are taken with PBCs to strictly preserve translation symmetry. 
\end{remark}

\section{Koszul-complex stabilizer models}
\label{sec:koszul-complex-codes}

The formalism developed above establishes bounded chain complexes of
finite-rank free modules over the Laurent polynomial ring $R$ as the natural
algebraic substrate for translation-invariant CSS models. We now specialize
this construction to Koszul complexes, building the general mathematical
machinery before analyzing specific physical models.
Throughout this section, we restrict our attention to infinite systems, deferring the discussion of periodic boundary conditions to Sec.~\ref{sec:finite-size-logical}.

\subsection{General construction of Koszul-complex models}
\label{subsec:general-koszul-models}

We begin by recalling the definition of a Koszul complex.

\begin{definition}[Koszul complex]
	Given a sequence of $s$ elements in the commutative ring $R$,
	\begin{equation}
		\mathbf{f} = (f_1, f_2, \dots, f_s) \in R^s,
	\end{equation}
	the corresponding Koszul complex $K_\bullet(\mathbf{f})$ is a chain complex
	\begin{equation}
		\!\!0 \to K_s(\mathbf{f}) \xrightarrow{\partial_{s}} K_{s-1}(\mathbf{f}) \xrightarrow{\partial_{s-1}} \cdots \xrightarrow{\partial_{2}} K_1(\mathbf{f}) \xrightarrow{\partial_{1}} K_0(\mathbf{f}) \to 0,
	\end{equation}
	with chain modules defined by the exterior algebra over $R$:
	\begin{equation}
		K_j(\mathbf{f}) \coloneqq \bigwedge^{j} E \cong_R R^{\binom{s}{j}},
	\end{equation}
	where $E \coloneqq R^s$. The differentials are the $R$-linear maps defined on the standard basis $\{e_1, e_2, \dots, e_s\}$ of $E$ by
	\begin{equation}
		\partial_j(e_{a_1} \wedge \cdots \wedge e_{a_j}) = \sum_{i=1}^{j} (-1)^{i-1} f_{a_i} \; e_{a_1} \wedge \cdots \wedge \widehat{e_{a_i}} \wedge \cdots \wedge e_{a_j},
	\end{equation}
	where the wide hat notation $\widehat{e_{a_i}}$ indicates that the specific element $e_{a_i}$ is omitted from the wedge product.
\end{definition}

For explicit matrix representations of the Koszul differentials for sequence lengths $s=2,3,4$, see Sec.~\ref{sec:tc_models} (Eqs.~\eqref{eq:KC2}, \eqref{eq:KC3}, and \eqref{eq:KC4}, respectively).

\begin{remark}
	As an interesting structural property, the Koszul complex of a sequence is isomorphic to the tensor product of the complexes of the individual elements. That is,
	\begin{equation}
		K_\bullet(f_1, f_2, \dots, f_s) \cong_R K_\bullet(f_1) \otimes_R K_\bullet(f_2) \otimes_R \cdots \otimes_R K_\bullet(f_s).
	\end{equation}
	For a detailed treatment of Koszul complexes, see textbooks such as Eisenbud~\cite[Chap.~17]{eisenbud1995commutative} or Weibel~\cite[Sec.~4.5]{weibel1994introduction}.
\end{remark}

By construction, these differentials satisfy $\partial_j  \partial_{j+1} = 0$, ensuring that $K_\bullet(\mathbf{f})$ is indeed a valid chain complex. This fundamental property---that the image of one differential is contained within the kernel of the next---can then be used to construct CSS stabilizer models.

\begin{definition}[Koszul-complex stabilizer model]
	\label{def:koszul-complex-model}
	Let $D$ denote the spatial dimension and $\mathbf{f} \in (\mathbb{Z}_N[\mathbb{Z}^D])^s$ be a sequence of length $s$. The \emph{degree-$j$ Koszul-complex stabilizer model}, denoted by $\mathrm{KC}_j^{(D,s)}(\mathbf{f}; \mathbb{Z}_N)$, is the CSS code defined by the adjacent Koszul-complex differentials:
	\begin{equation}
		\begin{aligned}
			\varphi_X = \partial_j^\dagger &\colon K_{j-1}^\dagger(\mathbf{f}) \to K_j^\dagger(\mathbf{f}), \\
			\varphi_Z = \partial_{j+1} &\colon K_{j+1}(\mathbf{f}) \to K_j(\mathbf{f}).
		\end{aligned}
		\label{eq:general-koszul-stabilizer-maps}
	\end{equation}
	In this model, the physical $\mathbb{Z}_N$ qudits within a reference unit cell are labeled by the basis elements of $K_j(\mathbf{f})$.
\end{definition}

The CSS commutation condition is automatic: $\varphi_Z^\dagger \varphi_X = 0$ follows directly from $\partial_j \partial_{j+1} = 0$. Thus, any sequence $\mathbf{f}$ yields commuting stabilizer models, though not necessarily topological ones.

\subsection{Regular Koszul-complex models}
\label{sec:regular-koszul-complex-models}

For the stabilizer models $\operatorname{KC}_j^{(D,s)}(\mathbf{f};\mathbb{Z}_N)$ with $0 < j < s$ to be topological, the condition expressed in Eq.~\eqref{eq:topological-exactness-CSS} requires the simultaneous exactness of the Koszul complex $K_\bullet(\mathbf{f})$ and its dual $K_\bullet^\dagger(\mathbf{f})$ at degree $j$. That is, the sequences
\begin{equation}
	\begin{split}
		&K_{j+1}(\mathbf{f})\xrightarrow{\;\partial_{j+1}\;}K_{j}(\mathbf{f})
		\xrightarrow{\;\partial_j\;}K_{j-1}(\mathbf{f}), \\
		&K_{j-1}^\dagger(\mathbf{f})\xrightarrow{\;\partial_j^\dagger\;}K_{j}^\dagger(\mathbf{f})
		\xrightarrow{\;\partial_{j+1}^\dagger\;}K_{j+1}^\dagger(\mathbf{f})
	\end{split}
	\label{eq:KC-topological-complexes}
\end{equation}
must both be exact in the middle. In homological terms, this corresponds to the vanishing of the $j$-th homology and the $j$-th cohomology:
\begin{equation}
	H_j(K_\bullet(\mathbf{f})) \coloneqq \frac{\ker \partial_j}{\operatorname{im}\partial_{j+1}} = 0
	\quad \text{and} \quad
	H^j(K_\bullet^\dagger(\mathbf{f})) \coloneqq \frac{\ker\partial_{j+1}^\dagger}{\operatorname{im} \partial_j^\dagger} = 0.
	\label{eq:KC-homology-cohomology}
\end{equation}
On the infinite lattice, this is precisely the statement that there are no local operators capable of distinguishing ground states.

The concept of a regular sequence provides a remarkably elegant way to guarantee this topological condition.

\begin{definition}[Regular sequence]
	\label{def:regular-sequence}
	A sequence $\mathbf{f}=(f_1,\dots,f_{s})$ in a commutative ring $R$ is \emph{regular} if $f_1$ is not a zero divisor in $R$ (meaning $f_1 g = 0$ implies $g = 0$ for any $g \in R$) and, for all $2\le j\le s$, the image of $f_j$ remains a non-zero divisor within the quotient ring $R/(f_1,\dots,f_{j-1})$.
\end{definition}

A fundamental theorem of commutative algebra states that if $\mathbf{f}$ is a regular sequence, its associated Koszul complex $K_\bullet(\mathbf{f})$ is exact at all nonzero degrees~\cite[Chap.~17]{eisenbud1995commutative}. Consequently,
\begin{equation}
	H_j(K_\bullet) \cong_R
	\begin{cases}
		R/(\mathbf{f}), & j = 0, \\
		0, & j \neq 0.
	\end{cases} \label{eq:koszul-homology}
\end{equation}
Because the only nonvanishing homology, located at degree zero, will be shown to govern the physics of Koszul stabilizer models, we assign it a dedicated name.

\begin{definition}[Koszul structure module]
	\label{def:koszul-structure-module}
	We define
	\begin{equation}
		S \coloneqq R/(\mathbf{f}) \label{eq:structure}
	\end{equation}
	as the \emph{Koszul structure module} associated with the regular sequence $\mathbf{f}$.
\end{definition}

Because the Koszul complex consists of free modules and is exact at all $j > 0$, it naturally provides a finite-length free resolution of $S$. That is, the augmented sequence
\begin{equation}
	0 \longrightarrow K_s(\mathbf{f}) \longrightarrow \cdots \longrightarrow K_1(\mathbf{f}) \xrightarrow{\;\partial_1\;} K_0(\mathbf{f}) \longrightarrow S \longrightarrow 0
\end{equation}
is exact. This free resolution is the algebraic origin of our subsequent $\operatorname{Ext}$ and $\operatorname{Tor}$ formalisms for studying the superselection sectors and ground state degeneracies of Koszul stabilizer models.

Furthermore, by the self-duality $K_\bullet^\dagger(\mathbf{f}) \cong_R K_{s-\bullet}(\overline{\mathbf{f}})$ of the Koszul complex and the fact that $\overline{\mathbf{f}} \coloneqq (\overline{f_1}, \dots, \overline{f_s})$ is also regular, the cohomology is similarly concentrated at the top degree:
\begin{equation}
	H^j(K_\bullet^\dagger(\mathbf{f})) \cong_R
	\begin{cases}
		\overline{S}, & j = s, \\
		0, & j < s,
	\end{cases} \label{eq:koszul-cohomology}
\end{equation}
where $\overline{S} = R/(\overline{\mathbf{f}})$.

The simultaneous vanishing of both $H_j$ and $H^j$ for intermediate degrees $0 < j < s$ guarantees that the associated model is strictly topological.

\begin{definition}[Regular Koszul stabilizer model]
	\label{def:regular_KC}
	We refer to $\operatorname{KC}_j^{(D,s)}(\mathbf{f};\mathbb{Z}_N)$ as the \emph{degree-$j$ regular Koszul stabilizer model} associated with $\mathbf{f}$ when $\mathbf{f}$ is a regular sequence.
\end{definition}

Famous topological codes that can be viewed as regular Koszul-complex models include:
\begin{itemize}
	\item Toric codes ($D \geq 2$) are Koszul-complex models generated by
	\(\mathbf{f}=(x_1-1,x_2-1,\dots,x_D-1)\)~\cite{Kitaev2003Anyons}.
	\item The color code ($D = 2$) on the honeycomb lattice is a
	Koszul-complex model generated by
	\(\mathbf{f}=(1+x_1+x_2,1+\overline{x_1}+\overline{x_2})\)
	~\cite{BombinMartinDelgado2006}.
	\item Haah's code ($D = 3$) is a Koszul-complex model generated by
	\(\mathbf{f}=(1+x_1+x_2+x_3,\,1+x_1x_2+x_2x_3+x_3x_1)\)
	~\cite{Haah2011PRA}.
\end{itemize}
In the toric- and color-code examples, the defining sequences are regular
and full-length, $s=D$, and their structure modules are finite. Together,
these properties yield conventional topological order with superselection
sectors that become mobile after finite coarse graining. In Haah's code,
$s<D$, and the model exhibits exotic fracton excitations. This contrast
motivates a formal distinction based on the length of the regular sequence
relative to the spatial dimension $D$, singling out the full-length cases as
a more tractable subclass.

\begin{definition}[Full-length regular sequence and model]
	\label{def:full-length-regular-sequence}
	A regular sequence $\mathbf{f} = (f_1,\dots,f_s)$ in $R$ is called \emph{full-length} if its length exactly matches the spatial dimension of the physical lattice, $s = D$. The corresponding models are referred to as \emph{full-length regular Koszul-complex models} and are denoted by
	\begin{equation}
		\mathrm{KC}_j^{(D)}(\mathbf{f};\mathbb{Z}_N) \coloneqq \mathrm{KC}_j^{(D,D)}(\mathbf{f};\mathbb{Z}_N),
	\end{equation}
	where we suppress the redundant index $s=D$.
\end{definition}

We will demonstrate in subsequent sections that a full-length sequence ensures the finiteness of the Koszul structure module $S=R/(\mathbf{f})$, making a complete topological classification possible. Before doing so, let us look at several examples explicitly to gain familiarity with the formalism.

\subsection{Canonical examples: toric-code models}
\label{sec:tc_models}

We now illustrate the Koszul-complex construction using its simplest nontrivial examples: the toric-code models in various dimensions. By choosing the regular sequence $f_i = x_i-1$, these well-known topological models immediately emerge.

\begin{definition}[Toric-code specialization]
	\label{def:toric-code-model}
	For spatial dimension $D\geq 2$ and intermediate degrees $1\le j\le D-1$, the \emph{$D$-dimensional toric-code model} $\mathrm{TC}_j^{(D)}(\mathbb{Z}_N)$ with qudits assigned to $j$-cells is given exactly by the regular Koszul-complex construction:
	\begin{equation}
		\mathrm{TC}_j^{(D)}(\mathbb{Z}_N) = \mathrm{KC}_j^{(D)}(x_1-1,\dots,x_D-1;\mathbb{Z}_N).
	\end{equation}
\end{definition}

Historically, the ``toric'' nomenclature arises from defining the model on a closed $D$-dimensional torus $T^D$. Our focus in this section, however, is on formulating the bulk interaction: we study these models on the infinite lattice $\mathbb{Z}^D$, deferring the explicit construction of finite periodic systems to Sec.~\ref{subsec:periodic-boundary-conditions}. Toric codes serve as the archetypal full-length ($s = D$) regular Koszul-complex models, naturally exhibiting nontrivial topological order alongside a direct gauge-theoretic interpretation.

\subsubsection{Two dimensions ($D=2$)}

\begin{figure}[t]
	\centering
	\includegraphics[width=1\columnwidth]{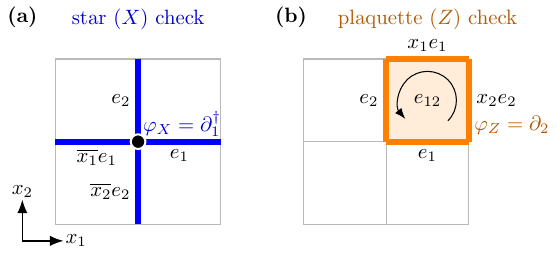}
	\caption{Toric-code stabilizers derived from the Koszul-complex construction $\mathrm{KC}_1^{(2)}(\mathbf{f};\mathbb{Z}_N)$ on the infinite square lattice with $ \mathbb{Z}_N $ qudits assigned to edges. (a) The star ($X$-type) operator, $\varphi_X = \partial_1^\dagger$, acts on four edges sharing a central vertex. (b) The plaquette ($Z$-type) operator, $\varphi_Z = \partial_2$, acts on the four boundary edges of a plaquette $e_{12}$.}
	\label{fig:toric2D}
\end{figure}

The simplest nontrivial Koszul complex is generated by a length-two sequence $\mathbf{f}=(f_1, f_2)$ in $R$. Choosing the default degree-1 basis $\{e_1, e_2\}$ and degree-2 basis $\{e_{12} \coloneqq e_1 \wedge e_2\}$, we can write the complex in the form
\begin{equation}
	0 \longrightarrow R \xrightarrow{\;\partial_2\;} R^2 \xrightarrow{\;\partial_1\;} R \longrightarrow 0,
	\label{eq:2d-toric-koszul}
\end{equation}
with boundary maps explicitly given by the matrices
\begin{equation}
	\partial_1=
	\begin{pmatrix} f_1 & f_2 \end{pmatrix}, \qquad \partial_2= \begin{pmatrix} -f_2\\ f_1
	\end{pmatrix}. \label{eq:KC2}
\end{equation}

The Koszul-complex construction $\mathrm{KC}_1^{(2)}(\mathbf{f};\mathbb{Z}_N)$ exactly reproduces the toric code on the square lattice by specializing to spatial dimension $D=2$ and setting
\begin{equation}
	f_1=x_1-1, \qquad f_2=x_2-1.
\end{equation}
The physical qudit module is $K_1(\mathbf{f})\cong_R R^2$, representing qudits assigned to the orthogonal edges $e_1$ and $e_2$. The stabilizer mappings $\varphi_X=\partial_1^\dagger$ and $\varphi_Z=\partial_2$ take the explicit matrix forms
\begin{equation}
	\varphi_X = \begin{pmatrix} \overline{x_1}-1 \\ \overline{x_2}-1 \end{pmatrix}, \qquad \varphi_Z = \begin{pmatrix} -(x_2-1) \\ x_1-1 \end{pmatrix},
\end{equation}
which precisely encode the standard star and plaquette check operators of the toric-code model, as depicted in Fig.~\ref{fig:toric2D}.

\subsubsection{Three dimensions ($D=3$)}

Moving to a length-three sequence $\mathbf{f}=(f_1, f_2, f_3)$, the general Koszul complex expands to
\begin{equation}
	0 \longrightarrow K_3 \xrightarrow{\;\partial_3\;} K_2 \xrightarrow{\;\partial_2\;} K_1 \xrightarrow{\;\partial_1\;} K_0 \longrightarrow 0,
	\label{eq:3d-toric-koszul}
\end{equation}
where the free modules are given by the exterior algebra of $R^3$. Employing the wedge shorthand $e_{ab}\coloneqq e_a\wedge e_b$ and $e_{abc}\coloneqq e_a\wedge e_b\wedge e_c$, we decompose the modules as
\begin{equation}
	\begin{aligned}
		K_0 &= R,\\
		K_1 &= R e_1 \oplus R e_2 \oplus R e_3,\\
		K_2 &= R e_{23} \oplus R e_{31} \oplus R e_{12},\\
		K_3 &= R e_{123}.
	\end{aligned}
	\label{eq:3d-koszul-bases}
\end{equation}
Note the cyclic ordering chosen for the basis of $K_2$. With respect to these ordered bases, the general Koszul differentials take the matrix forms
\begin{subequations}
	\label{eq:3d-differentials}
	\begin{align}
		\partial_1 &= \begin{pmatrix} f_1 & f_2 & f_3 \end{pmatrix}, \\[1ex]
		\partial_2 &= \begin{pmatrix}
			0    & f_3  & -f_2\\
			-f_3  & 0    & f_1\\
			f_2   & -f_1 & 0
		\end{pmatrix}, \qquad
		\partial_3 = \begin{pmatrix} f_1\\ f_2\\ f_3 \end{pmatrix}.
		\label{eq:KC3}
	\end{align}
\end{subequations}
\noindent
The nilpotency identities $\partial_1\partial_2=0$ and $\partial_2\partial_3=0$ hold automatically for any sequence $\mathbf{f}$, serving as the algebraic embodiment of the geometric principle that the boundary of a boundary is trivial.

To construct 3D toric codes, we specialize the sequence to
\begin{equation}
	f_1=x_1-1, \qquad f_2=x_2-1, \qquad f_3=x_3-1.
\end{equation}
Geometrically, the modules correspond to the cells of the lattice: $K_0$ to vertices, $K_1$ to oriented edges, $K_2$ to oriented plaquettes, and $K_3$ to cubes. In three dimensions, this complex yields two natural choices for the physical qudit module.

\begin{figure}[t]
	\centering
	\includegraphics[width=1\columnwidth]{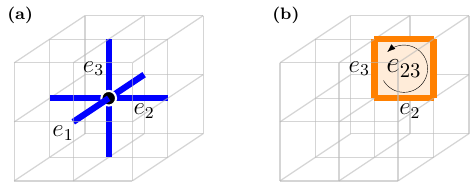}
	\caption{Stabilizer generators for the three-dimensional degree-1 toric code, $\mathrm{TC}_1^{(3)}(\mathbb{Z}_N)$, where physical qudits reside on the lattice edges. The reference edges are denoted by $e_1, e_2$, and $e_3$. (a) The star ($X$-type) check operator, defined by the adjoint map $\varphi_X = \partial_1^\dagger$, acts on the six edges incident to a central vertex. (b) The plaquette ($Z$-type) check operator, defined by the differential $\varphi_Z = \partial_2$, acts on the four edges bounding a plaquette, as illustrated for the $yz$-plane basis element $e_{23} \coloneqq e_2 \wedge e_3$.}
	\label{fig:toric3D}
\end{figure}

For the degree-$1$ model $\mathrm{TC}_1^{(3)}(\mathbb{Z}_N)$, the physical module is $K_1$, meaning the physical qudits reside on the lattice edges. By assigning the stabilizer maps as $\varphi_Z=\partial_2$ and $\varphi_X=\partial_1^\dagger$, we immediately reproduce the standard three-dimensional toric code. Explicitly, the star check operator is specified by
\begin{equation}
	\varphi_X = \partial_1^\dagger = \begin{pmatrix} \overline{f_1} \\ \overline{f_2} \\ \overline{f_3} \end{pmatrix},
\end{equation}
where $\overline{f_i} = x_i^{-1}-1$. Because each $\overline{f_i}$ carries a backward translation $x_i^{-1}$ alongside the constant term, $\varphi_X$ maps a central vertex outward to its six incident edges, defining the $X$-type check illustrated in Fig.~\ref{fig:toric3D}(a). The plaquette check operator, on the other hand, is given directly by the $3 \times 3$ matrix
\begin{equation}
	\varphi_Z = \partial_2 = \begin{pmatrix}
		0 & f_3 & -f_2\\
		-f_3 & 0 & f_1\\
		f_2 & -f_1 & 0
	\end{pmatrix}.
\end{equation}
The three columns of $\varphi_Z$ perfectly map the three fundamental plaquettes ($yz$, $zx$, and $xy$) to their respective boundaries, as depicted in Fig.~\ref{fig:toric3D}(b) for the $yz$-plaquette $e_{23}$.

For the degree-$2$ model $\mathrm{TC}_2^{(3)}(\mathbb{Z}_N)$, the physical module is $K_2$, placing the qudits on the lattice plaquettes. Our construction assigns the stabilizer maps as $\varphi_Z=\partial_3$ and $\varphi_X=\partial_2^\dagger$. Explicitly, the $Z$-type check operator is given by $\varphi_Z = \partial_3 = (f_1\; f_2 \; f_3 )^{\mathsf{T}}$ (see Eq.~\eqref{eq:KC3}), which maps a central cube to its six bounding plaquettes. Conversely, the $X$-type check is given by the adjoint of the $3 \times 3$ matrix $\partial_2$. This construction introduces no new physics relative to the degree-1 model; rather, the two are related by the electric-magnetic duality of the 3D toric code, where $\mathrm{TC}_2^{(3)}(\mathbb{Z}_N) \cong \mathrm{TC}_1^{(3)}(\mathbb{Z}_N)$ under a transformation that swaps the electric ($X$) and magnetic ($Z$) sectors while mapping the primal lattice to its dual.

This physical duality reflects the degree-reversal self-duality of the Koszul
complex. In general,
\begin{equation}
	\mathrm{KC}_j^{(D,s)}(\mathbf f;\mathbb Z_N)
	\cong
	\mathrm{KC}_{s-j}^{(D,s)}(\overline{\mathbf f};\mathbb Z_N),
\end{equation}
where the isomorphism exchanges the \(X\)- and \(Z\)-sectors. For the toric
sequence, \(\overline{f_i}=-x_i^{-1}f_i\), so this recovers
\(\mathrm{TC}_j^{(D)}\cong\mathrm{TC}_{D-j}^{(D)}\).

\subsubsection{Four dimensions ($D=4$)}

The four-dimensional scenario is theoretically illuminating because it is the first dimension where true, physically distinct types of toric codes emerge, providing exact lattice realizations of higher-form gauge theories.

For a general length-four sequence $\mathbf{f}=(f_1, f_2, f_3, f_4)$, the Koszul complex is
\begin{equation}
	0 \longrightarrow K_4 \xrightarrow{\;\partial_4\;} K_3 \xrightarrow{\;\partial_3\;} K_2 \xrightarrow{\;\partial_2\;} K_1 \xrightarrow{\;\partial_1\;} K_0 \longrightarrow 0.
	\label{eq:4d-toric-koszul}
\end{equation}
Continuing with our wedge shorthand, the relevant modules decompose as
\begin{equation}
	\begin{aligned}
		K_1 &= R e_1\oplus R e_2\oplus R e_3\oplus R e_4,\\
		K_2 &= R e_{12}\oplus R e_{13}\oplus R e_{14} \oplus R e_{23}\oplus R e_{24}\oplus R e_{34},\\
		K_3 &= R e_{123}\oplus R e_{124}\oplus R e_{134}\oplus R e_{234}.
	\end{aligned}
\end{equation}
In these ordered bases, the abstract differentials evaluate to
\begin{subequations}
	\label{eq:4d-differentials}
	\begin{align}
		\partial_1 &= \begin{pmatrix} f_1 & f_2 & f_3 & f_4 \end{pmatrix}, \\[1ex]
		\partial_2 &= \begin{pmatrix}
			-f_2 & -f_3 & -f_4 & 0    & 0    & 0\\
			f_1  & 0    & 0    & -f_3 & -f_4 & 0\\
			0    & f_1  & 0    & f_2  & 0    & -f_4\\
			0    & 0    & f_1  & 0    & f_2  & f_3
		\end{pmatrix}, \\[1ex]
		\partial_3 &= \begin{pmatrix}
			f_3  & f_4  & 0    & 0\\
			-f_2 & 0    & f_4  & 0\\
			0    & -f_2 & -f_3 & 0\\
			f_1  & 0    & 0    & f_4\\
			0    & f_1  & 0    & -f_3\\
			0    & 0    & f_1  & f_2
		\end{pmatrix}, \qquad
		\partial_4 = \begin{pmatrix}
			-f_4\\ f_3\\ -f_2\\ f_1
		\end{pmatrix}.
	\end{align} \label{eq:KC4}
\end{subequations}

We specialize to $f_j=x_j-1$ for $1\le j\le 4$ to obtain the 4D toric-code models. Unlike in three dimensions, the choice of module degree here fundamentally alters the resulting topological phase.

If we choose the physical module to be $K_1$, we obtain the degree-$1$ model $\mathrm{TC}_1^{(4)}(\mathbb{Z}_N)$, which places qudits on the lattice edges. This model realizes a $1$-form gauge theory characterized by point-like electric charges and three-dimensional membrane-like magnetic fluxes. Its physical dual is the degree-$3$ model $\mathrm{TC}_3^{(4)}(\mathbb{Z}_N)$ (qudits on cubes), satisfying the degree-reversal duality $\mathrm{TC}_1^{(4)}(\mathbb{Z}_N)\cong \mathrm{TC}_3^{(4)}(\mathbb{Z}_N)$ upon exchanging the electric and magnetic sectors.

In stark contrast, the most symmetric and topologically distinct physical realization occurs in the degree-$2$ model. Here, the physical module $K_2$ governs qudits living on the two-dimensional plaquettes of the 4D lattice, with stabilizer maps assigned as $\varphi_Z=\partial_3$ and $\varphi_X=\partial_2^\dagger$. The resulting code, $\mathrm{TC}_2^{(4)}(\mathbb{Z}_N)$, realizes a true $2$-form gauge theory where both the electric and magnetic excitations are extended, string-like objects. Strikingly, because $K_2$ sits exactly in the middle of the complex, the two distinct stabilizer types—and their corresponding string-like excitations—are functionally exchanged by the intrinsic degree-reversal self-duality of the Koszul complex itself.

From a quantum information perspective, the extended nature of these string-like excitations is of paramount importance. Because a logical error requires a localized string defect to grow and macroscopically span the entire lattice, the energy barrier to logical failure scales linearly with the system size. Consequently, the degree-$2$ model serves as the canonical topological architecture for a self-correcting quantum memory, rigorously proven to maintain its topological order and passively protect quantum information even at finite temperature \cite{dennis2002topological, hastings2011topological, brown2016quantum}.

\subsection{Connections to modern architectures: bivariate-bicycle codes}

While general regular sequences allow for elaborate translation structures,
BB codes provide another important contemporary context~\cite{Bravyi2024BB}.
Their check maps always have the two-element Koszul form, but the defining
polynomial pair need not be regular.

In \(D=2\), the regular Koszul-complex models studied here form the
regular-sequence sector of BB stabilizer maps; in higher dimensions, the
construction extends the same pattern to longer Koszul complexes. For a
full-length regular sequence, Koszul exactness enforces the topological
condition, and Corollary~\ref{cor:koszul-multistack-rigidity} reduces the model,
after the translation reduction fixed by its mobility lattice, to a
toric-code stack. This conclusion relies essentially on regularity and does
not apply to a generic BB polynomial pair.

\section{Ext formulation of topological excitations and superselection profiles}
\label{sec:ext-excitation-framework}

A rigorous classification of topological phases requires identifying excitation types in a strictly presentation-independent manner—ensuring that our labels capture intrinsic physical data rather than artifacts of a chosen stabilizer generating set. We achieve this by formulating these excitations as homological invariants derived via free resolutions. In this section, we establish the precise correspondence between the $\operatorname{Ext}$ functor and the physical phenomena of frustration, local relations, and unbreakable relation violations.

\subsection{Physical dualities on infinite lattices}
\label{subsec:physical-dualities}

To rigorously track the spatial support of physical operators and their resulting excitations, we must distinguish between three notions of duality. Let $F$ be a finite-rank free module over $R = \mathbb{Z}_N[\mathbb{Z}^D]$.

The first is the ordinary $R$-linear dual,
\begin{equation}
	F^\vee
	\coloneqq
	\operatorname{Hom}_R(F,R),
	\label{eq:R-dual}
\end{equation}
which consists of $R$-linear maps and carries the natural left $R$-module structure $(r\cdot \lambda)(m)=r\,\lambda(m)$ for $\lambda\in F^\vee$. For finite-rank free modules, $F^\vee$ remains finite-rank free, and its elements correspond to finitely supported configurations. Because its definition requires no additional involution structure on $R$, this serves as the standard dual utilized in general homological algebra.

The second is the full configuration dual,
\begin{equation}
	F^\ast
	\coloneqq
	\operatorname{Hom}_{\mathbb Z_N}(F,\mathbb Z_N).
	\label{eq:ZN-dual}
\end{equation}
Its elements are arbitrary $\mathbb Z_N$-linear functionals, naturally allowing for infinite spatial support. It carries the contragredient $R$-action
\begin{equation}
	(r\cdot \lambda)(m)
	\coloneqq
	\lambda(\overline r\,m),
	\qquad
	\lambda \in F^\ast,
\end{equation}
where $\overline{x^{\mathbf v}}=x^{-\mathbf v}$ denotes spatial inversion. Notably, $F^\ast$ is the most physically intuitive: it directly reduces to the familiar duality when applied to finite systems. However, it proves mathematically cumbersome when tracking locality.

The third is the finite-support physical dual, or dagger dual,
\begin{equation}
	F^\dagger
	\coloneqq
	\operatorname{Hom}_R(\overline F,R),
	\label{eq:dagger-dual}
\end{equation}
which we introduced in Sec.~\ref{subsec:Pauli}. Recall that $\overline F$ denotes the same abelian group as $F$, but equipped with the twisted $R$-action $r \ast m=\overline r\,m$. Elements of $F^\dagger$ represent finite-support antilinear functionals on $F$, satisfying $\lambda(r\,m)=\overline r\,\lambda(m)$, and are equipped with the $R$-module structure
\begin{equation}
	(r\cdot \lambda )(m) \coloneqq r\, \lambda (m) =
	\lambda (\overline{r}\,m)
\end{equation}
for all $\lambda \in F^\dagger$.

While both $F^\vee$ and $F^\dagger$ embed into $F^\ast$ naturally as the same $\mathbb{Z}_N$-module via composition with the coefficient-of-identity map $\operatorname{tr}:R\to\mathbb Z_N$, only $F^\dagger$ is compatible with the contragredient $R$-action. This compatibility makes $F^\dagger$ the mathematically rigorous choice of finite-support dual for infinite-lattice systems.

By functoriality, duality naturally extends to module homomorphisms. For a given stabilizer map $d_1:F_1\to F_0$, the full $\mathbb Z_N$-linear dual is the map
\begin{equation}
	d_1^\ast:F_0^\ast\to F_1^\ast,
\end{equation}
which sends a potentially infinite Pauli configuration to its complete excitation pattern. In contrast, the dagger dual yields the finite-support excitation map
\begin{equation}
	d_1^\dagger:F_0^\dagger\to F_1^\dagger,
\end{equation}
which is formally defined as $d_1^\dagger \coloneqq \overline{d_1}^\vee$. Under the canonical embedding $F^\dagger\hookrightarrow F^\ast$, the dagger dual $d_1^\dagger$ acts exactly as the restriction of $d_1^\ast$ to finite-support configurations.

\subsection{Pointlike excitations: the $\operatorname{Ext}_R^1$ formulation}
\label{subsec:pointlike-ext}

Armed with the dagger dual to track finite-support physical observables, we now establish the topological data of pointlike excitations. We focus our analysis on the Pauli-$Z$ module without loss of generality; the treatment of the Pauli-$X$ module follows identically. Consider a stabilizer map for a CSS code defined on a $D$-dimensional infinite lattice:
\begin{equation}
	F_1\xrightarrow{\;d_1\;}F_0.
	\label{eq:local-stabilizer-map}
\end{equation}
Here, $F_1$ and $F_0$ are finite-rank free $R$-modules, $F_0$ denotes the Pauli module of physical qudits, and the stabilizer module is given by $\operatorname{im} d_1 \subset F_0$.

If the chosen stabilizer generators have no local relations (i.e., $\ker d_1=0$), the cokernel $\operatorname{coker} d_1^\dagger$ naturally describes the module of \textit{pointlike excitation types} (namely, finite excitation patterns modulo those created by finitely supported Pauli operators). However, when the chosen stabilizer presentation contains local redundancies—or more generally, local relations among generators—this simple cokernel fails to yield an intrinsic excitation module. It becomes presentation-dependent, and redundant generators add spurious summands with no physical content.

\begin{example}[Stabilizer map with redundant generators] \label{eg:redundancy}
	Let $R=\mathbb Z_N[x^{\pm1},y^{\pm1}]$. Consider the standard stabilizer map for the toric code,
	\begin{equation}
		d_Z:R\to R^2, \qquad d_Z= \begin{pmatrix} x-1\\ y-1 \end{pmatrix},
	\end{equation}
	and compare it to the redundant presentation
	\begin{equation}
		d_Z':R^2\to R^2, \qquad d_Z'= \begin{pmatrix} x-1 & 0\\ y-1 & 0 \end{pmatrix}.
	\end{equation}
	Because both maps share the same image in $R^2$, they generate identical physical stabilizer modules. Nevertheless, evaluating their dagger adjoints—which physically represent the syndrome check maps—yields structurally distinct cokernels:
	\begin{equation}
		\operatorname{coker} (d_Z^\dagger: R^2 \to R) \cong_R \frac{R}{(x-1,y-1)},
	\end{equation}
	whereas the redundant presentation gives
	\begin{equation}
		\operatorname{coker} (d_Z^{\prime\dagger}: R^2 \to R^2) \cong_R \frac{R}{(x-1,y-1)}\oplus R.
	\end{equation}
	The additional free summand is entirely an artifact of the redundant
	stabilizer generator, representing no genuine pointlike excitation. This
	local redundancy is encoded by the relation map $d_2^\prime$:
	\begin{equation}
		R \xrightarrow{\;d_2'=\begin{pmatrix}0\\1\end{pmatrix}\;} R^2 \xrightarrow{\;d_Z'\;} R^2.
	\end{equation}
\end{example}

The $\operatorname{Ext}$ functor provides an elegant description of excitations that inherently eliminates such presentation-dependent artifacts. Consider a presentation of the stabilizer structure
\begin{equation}
	F_2 \xrightarrow{\;d_2\;} F_1 \xrightarrow{\;d_1\;} F_0,
	\label{eq:local-relation-sequence}
\end{equation}
which is exact at $F_1$, i.e., $\operatorname{im} d_2 = \ker d_1$. Here,
$d_1$ denotes the stabilizer map, while $d_2$ encodes the local relations
among its generators. (In homological algebra, relations are called
\emph{syzygies}; higher syzygies recursively encode relations among
relations~\cite[Sec.~1.10]{eisenbud1995commutative}.)

Physically, the module of \textit{superselection sectors} (or \textit{pointlike excitation types}) is defined by the quotient:
\begin{equation}
	\frac{\text{admissible finite-support excitation patterns}}{\text{locally generated patterns}}.
\end{equation}
Formally, the denominator is $\operatorname{im} d_1^\dagger$. The numerator is $F_1^\dagger\cap \operatorname{im} d_1^*$: the image $\operatorname{im} d_1^*$ consists of patterns induced by formal, possibly infinite, Pauli configurations, and intersecting with $F_1^\dagger$ enforces finite support so that the excitation pattern can be treated as a pointlike (0‑dimensional) object. This intersection is well-defined via the natural embedding of $F_1^\dagger$ into $F_1^*$. Thus,
\begin{equation}
	\frac{F_1^\dagger\cap \operatorname{im} d_1^*}{\operatorname{im} d_1^\dagger} \label{eq:def_ss}
\end{equation}
serves as the formal mathematical definition of the pointlike superselection sectors.

Haah identified pointlike topological charges with the torsion submodule of
the cokernel of the excitation map~\cite{Haah2013}. Ruba and Yang recast this
charge module as an \(\operatorname{Ext}^1\) group and extended it to a
hierarchy of higher \(\operatorname{Ext}\) charge modules~\cite{Yang2023}.
Here we give a complementary, support-sensitive CSS-map derivation:
distinguishing finite-support syndromes from unrestricted formal configurations
directly identifies the local-relation constraint with physical admissibility
and removes presentation artifacts such as the redundant generator in
Example~\ref{eg:redundancy}.

\begin{theorem}[$\operatorname{Ext}$ formulation of superselection sectors]
	\label{thm:superselection}
	Let $R=\mathbb{Z}_N[\mathbb{Z}^D]$
	and
	\begin{equation*}
		F_2\xrightarrow{\;d_2\;}F_1\xrightarrow{\;d_1\;}F_0
	\end{equation*}
	be a sequence of finite-rank free $R$-modules, exact at $F_1$. Then
	\begin{equation}
		\ker d_2^\dagger = F_1^\dagger\cap \operatorname{im} d_1^*,
		\label{eq:ker_d2}
	\end{equation}
	where the intersection is defined by the natural embedding of $F_1^\dagger$ as a subset of $F_1^*$.

	Consequently, the module of superselection sectors, defined by Eq.~\eqref{eq:def_ss}, can be expressed via the $\operatorname{Ext}$ functor:
	\begin{equation}
		\frac{F_1^\dagger\cap \operatorname{im} d_1^*}{\operatorname{im} d_1^\dagger} \cong_R \operatorname{Ext}_R^1(\overline{\operatorname{coker}d_1},R).
		\label{eq:Ext1-semi-infinite}
	\end{equation}
\end{theorem}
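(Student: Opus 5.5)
The plan has two parts. First prove the set-theoretic identity \eqref{eq:ker_d2}, then read off the Ext isomorphism from a free resolution.

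\textbf{Inclusion $\supseteq$.} Let $\lambda = d_1^\ast(\mu)\in F_1^\dagger$ with $\mu\in F_0^\ast$. Under $F_1^\dagger\hookrightarrow F_1^\ast$, the dagger dual $d_2^\dagger$ is the restriction of $d_2^\ast$. Therefore
\[
d_2^\dagger\lambda = d_2^\ast d_1^\ast\mu = (d_1d_2)^\ast\mu = 0 .
\]

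\textbf{Inclusion $\subseteq$.} This is the substantive direction. Take $\lambda\in F_1^\dagger$ with $d_2^\dagger\lambda=0$, and view it as a $\mathbb Z_N$-functional $\operatorname{tr}\circ\lambda$ on $F_1$. The condition $d_2^\dagger\lambda=0$ says that $\lambda$ vanishes on $\operatorname{im}d_2=\ker d_1$. Indeed, $d_2^\ast\lambda=0$ means $\operatorname{tr}\lambda(d_2 m)=0$ for all $m$; since the pairing is $R$-compatible, this forces $\lambda\circ d_2=0$ as an $R$-valued antilinear map.

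Hence $\lambda$ factors through $F_1/\ker d_1\cong \operatorname{im} d_1\subseteq F_0$, giving a $\mathbb Z_N$-linear functional $\nu$ on the submodule $\operatorname{im}d_1$. We must extend $\nu$ to some $\mu\in F_0^\ast=\operatorname{Hom}_{\mathbb Z_N}(F_0,\mathbb Z_N)$. This is the key point. Over $\mathbb Z_N$ it works because $\mathbb Z_N$ is self-injective: it is a quasi-Frobenius ring, and $\operatorname{Hom}_{\mathbb Z}(-,\mathbb Z_N)$ is exact on $\mathbb Z_N$-modules. Equivalently, $\mathbb Z_N\cong\operatorname{Hom}_{\mathbb Z}(\mathbb Z_N,\mathbb Q/\mathbb Z)$ is injective over $\mathbb Z_N$.

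Baer's criterion, applied to the infinitely generated $\mathbb Z_N$-module $F_0$, gives the extension. Then $d_1^\ast\mu=\operatorname{tr}\circ\lambda$, so $\lambda\in\operatorname{im}d_1^\ast$. The contragredient $R$-action is respected automatically, since the identity is checked on the underlying functionals.

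This extension step is where the infinite-support dual $F^\ast$ is essential: a finite-support extension need not exist, and that failure is exactly what produces nontrivial sectors. I expect this to be the main obstacle. The care needed is in matching conventions: antilinearity, the trace embedding, and the fact that the pairing identifies $F^\dagger$ with the finitely supported part of $F^\ast$ compatibly with the $R$-actions. The injectivity argument itself is standard.

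\textbf{Ext isomorphism.} Extend $F_2\to F_1\to F_0$ to a free resolution
\[
\cdots\to F_3\to F_2\to F_1\to F_0\to \operatorname{coker}d_1\to 0 .
\]
This is possible because $R$ is Noetherian, so all kernels are finitely generated. Only exactness at $F_1$ is needed, and any further terms do not affect degree $1$. Apply conjugation, which is exact and sends free modules to free modules, giving a free resolution of $\overline{\operatorname{coker}d_1}$ by the $\overline{F_i}$. Then apply $\operatorname{Hom}_R(-,R)$; the resulting cochain complex is $F_0^\dagger\xrightarrow{d_1^\dagger}F_1^\dagger\xrightarrow{d_2^\dagger}F_2^\dagger\to\cdots$. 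Its degree-$1$ cohomology is
\[
\operatorname{Ext}^1_R(\overline{\operatorname{coker}d_1},R)\cong_R \ker d_2^\dagger/\operatorname{im}d_1^\dagger .
\]
Substituting \eqref{eq:ker_d2} yields \eqref{eq:Ext1-semi-infinite} as $R$-modules.
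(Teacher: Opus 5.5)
Your proposal is correct and follows the paper's proof. The paper obtains $\ker d_2^\ast=\operatorname{im}d_1^\ast$ in one step from exactness of $\operatorname{Hom}_{\mathbb Z_N}(-,\mathbb Z_N)$, which holds because $\mathbb Z_N$ is self-injective; it then intersects with $F_1^\dagger$ via $d_2^\dagger=d_2^\ast|_{F_1^\dagger}$ and reads off $\operatorname{Ext}^1$ from the conjugated partial resolution, which is what you do after unpacking that exactness into the factor-through-$\operatorname{im}d_1$-and-extend argument. One wording fix: extending from $\operatorname{im}d_1$ to $F_0$ is just the definition of injectivity of $\mathbb Z_N$, whereas Baer's criterion is the tool that verifies this injectivity by checking ideals of $\mathbb Z_N$, not something applied to $F_0$.
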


\begin{proof}
	Because $\mathbb{Z}_N$ is a finite Frobenius ring, it is self-injective as a $\mathbb{Z}_N$-module. Hence, the functor $\operatorname{Hom}_{\mathbb{Z}_N}(-,\mathbb{Z}_N)$ is exact. Applying this functor to the exact sequence preserves exactness at the full-configuration level:
	\begin{equation}
		\ker d_2^* = \operatorname{im} d_1^* \subset F_1^*.
		\label{eq:full-config-exactness}
	\end{equation}
	By the restriction property $d_2^\dagger = d_2^*|_{F_1^\dagger}$, we find $\ker d_2^\dagger = F_1^\dagger \cap \ker d_2^*$. Substituting Eq.~\eqref{eq:full-config-exactness} proves Eq.~\eqref{eq:ker_d2}.

	Consequently, the definition of the module of superselection sectors in Eq.~\eqref{eq:def_ss} takes the claimed form
	\begin{equation}
		\frac{F_1^\dagger\cap \operatorname{im} d_1^*}{\operatorname{im} d_1^\dagger} =
		\frac{\ker d_2^\dagger}{\operatorname{im} d_1^\dagger} =
			H^1(F_\bullet^\dagger) \cong_R \operatorname{Ext}_R^1(\overline{\operatorname{coker}d_1},R),
		\label{eq:SS_dd}
	\end{equation}
	where $H^1(F_\bullet^\dagger)$ denotes the cohomology of $F_\bullet^\dagger$ at $F_1^\dagger$.
	The final isomorphism follows from the standard resolution computation of $\operatorname{Ext}$, alongside the observation that the sequence
	\begin{equation}
			\overline{F_2}\xrightarrow{\;\overline{d_2}\;}\overline{F_1}\xrightarrow{\;\overline{d_1}\;}\overline{F_0}\rightarrow \overline{\operatorname{coker}d_1} \rightarrow 0
		\end{equation}
		provides the initial terms of a free resolution for $\overline{\operatorname{coker}d_1}$. Here, the involution (the overline) is needed to reconcile the dagger adjoint $F_\bullet^\dagger = \overline{F_\bullet}^\vee$ with the standard convention of $\operatorname{Ext}$ defined as the derived functor of the $R$-linear dual.
\end{proof}

\begin{remark}[Emergence of topological order]
	Equation~\eqref{eq:full-config-exactness} asserts that every finite-support
	syndrome satisfying the local stabilizer relations is induced by a formal
	unrestricted Pauli configuration. Such a configuration need not define a
	physical quasi-local operation. The topological obstruction is precisely the
	failure, for some admissible syndromes, to find a \emph{finite-support} Pauli
	operator that creates them.
\end{remark}

\begin{remark}[Presentation independence]
		Since conjugation is exact, $\overline{\operatorname{coker}d_1}$ is canonically determined by $\operatorname{coker} d_1 = F_0 / \operatorname{im} d_1$. Thus, the module of superselection sectors is completely and exclusively determined by the physical stabilizer module $\operatorname{im} d_1 \subset F_0$. This rigorously guarantees that our superselection modules are independent of the chosen stabilizer generators, generalizing foundational algebraic insights \cite{Haah2013, Yang2023} while providing a clean, explicit physical formulation for CSS codes.
\end{remark}

\subsection{The $\operatorname{Ext}_R^\ell$ hierarchy of superselection sectors}
\label{subsec:higher-dim}

Topological stabilizer codes in three spatial dimensions or higher may host extended excitations. We illustrate this using the three-dimensional toric code $\mathrm{TC}_1^{(3)}(\mathbb{Z}_N)$, demonstrating that looplike excitations are formally detected by the second $\operatorname{Ext}$ group.

\begin{figure}
	\includegraphics[width=1\columnwidth]{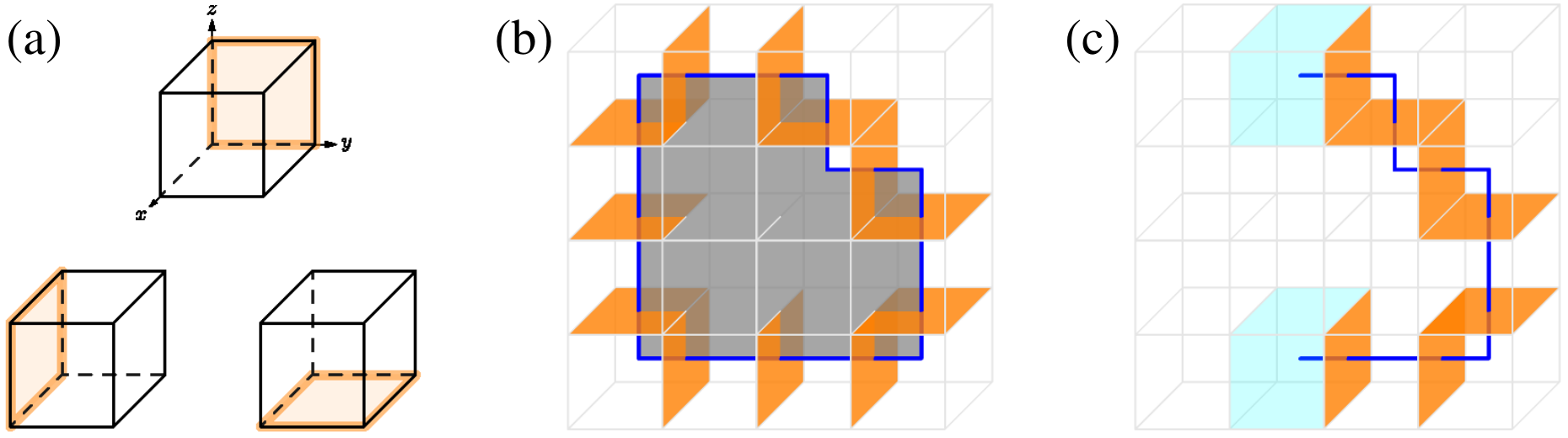}
	\caption{(a) Plaquette terms of the 3D toric code model. (b) A loop excitation created by a membrane operator. (c) The topological obstruction to cutting the loop excitation open.}
	\label{fig:TC3d}
\end{figure}

The $Z$-plaquette interaction shown in Fig.~\ref{fig:TC3d}(a) is described by the stabilizer map
\begin{equation}
	K_2\xrightarrow{\;\varphi_Z\coloneqq \partial_2\;}K_1, \qquad K_1\cong_R K_2\cong_R R^3.
\end{equation}
The subsequent Koszul map $K_3\xrightarrow{\;\partial_3\;}K_2$ encodes the local relation that the product of plaquette terms around any elementary cube is trivial. Dually, a finite plaquette-violation pattern $\varrho\in K_2^\dagger$ is physically admissible precisely when it satisfies
\begin{equation}
	\partial_3^\dagger\varrho=0.
	\label{eq:3d-loop-admissibility}
\end{equation}
Physically, these admissible plaquette-violation patterns correspond to closed magnetic flux loops, as depicted in Fig.~\ref{fig:TC3d}(b).

A closed flux loop can be created locally by applying a finite membrane operator. Therefore, its nontrivial topological nature is not that of a pointlike excitation; rather, it is characterized by the topological inability of the loop to be cut open without creating a localized violation of the underlying stabilizer relations. If we consider an open segment $\zeta\in K_2^\dagger$ of a flux loop, it ceases to be an admissible excitation (since $\partial_3^\dagger \zeta\neq 0$). The violation of the relations among stabilizers, given by $b \coloneqq \partial_3^\dagger\zeta\in K_3^\dagger$, is highly localized near the two endpoints of the open segment (see Fig.~\ref{fig:TC3d}(c)).

The topological type of such an isolated relation violation, which characterizes the obstruction to cutting the loop, is formally captured by the second $\operatorname{Ext}$ module:
\begin{equation}
	\operatorname{Ext}_R^2(\overline{\operatorname{coker}\varphi_Z},R) \cong_R \frac{K_3^\dagger}{\operatorname{im}\partial_3^\dagger} \cong_R \frac{R}{(x-1,y-1,z-1)} \cong_R \mathbb{Z}_N.
	\label{eq:3d-loop-ext2}
\end{equation}
Here \(\mathbb Z_N\) is regarded as an \(R\)-module through the augmentation
\(x_i\mapsto1\), making its translation action trivial.
Each nonzero element in this module signifies the unbreakability of the corresponding flux loop (or stated differently, its indecomposability into a union of pointlike excitations); attempting to terminate it inevitably leaves a nontrivial violation of the relations.

The homological description of pointlike and loop excitations extends through the full $\operatorname{Ext}$ hierarchy. The algebraic structure is encoded in a free resolution of $\operatorname{coker} d_1$:
\begin{equation}
	\cdots \xrightarrow{\;d_4\;} F_3 \xrightarrow{\;d_3\;} F_2 \xrightarrow{\;d_2\;} F_1 \xrightarrow{\;d_1\;} F_0 \rightarrow \operatorname{coker} d_1 \rightarrow 0 ,
	\label{eq:full-resolution}
\end{equation}
where each $F_i$ is a finite‑rank free $R$-module. The stabilizer map $d_1$
defines the first relation layer---direct relations among physical qudits. The syzygy map $d_2$ records relations among
the $d_1$-relations and defines the second relation layer; $d_3$ similarly
defines the third, and so forth.

Accordingly, the ordinary superselection sectors associated with pointlike
excitations form the \emph{first superselection layer}, which
classifies finite-support violations of the first relation layer (the
stabilizers). The
$\operatorname{Ext}_R^2$ module is the \emph{second superselection layer};
it classifies violations of the second relation layer and, in the mobile
setting, captures the obstruction associated with looplike violations of the
first relation layer that cannot be cut into pointlike pieces. More
generally, for each integer $\ell\geq1$, the \(R\)-module
\begin{equation}
	\operatorname{Ext}_R^\ell(\overline{\operatorname{coker}d_1}, R) \cong_R \frac{\ker d_{\ell+1}^\dagger}{\operatorname{im} d_\ell^\dagger} ,
	\label{eq:ext-cohomology}
\end{equation}
is called the \emph{\(\ell\)-th superselection layer}. Its elements are
equivalence classes of finite-support violations of the $\ell$-th relation
layer modulo those generated at lower relation layers. Thus the layer index is
the Ext cohomological degree, and each superselection layer arises from the
corresponding relation layer of the free resolution.

For mobile topological codes---including the finite single-layer setting
below---the \(\ell\)-th superselection layer in
Eq.~\eqref{eq:ext-cohomology} classifies $(\ell-1)$-dimensional excitations:
\begin{equation}
		\operatorname{Ext}_R^\ell(\overline{\operatorname{coker}d_1}, R) \cong_R \frac{\text{admissible $(\ell-1)$-dim excitations}}{\text{decomposable or locally creatable}}. \label{eq:Ext_classication}
\end{equation}
In this setting, the interpretation is inductive: cutting open an intrinsic $(\ell-1)$-dimensional excitation produces a boundary violation one dimension lower. This spatial shift mirrors the shift in cohomological degree,
\begin{equation}
		\operatorname{Ext}_R^{(\ell-1)}(\overline{\operatorname{coker}d_2}, R) \cong_R \operatorname{Ext}_R^\ell(\overline{\operatorname{coker}d_1}, R),
\end{equation}
yielding the desired classification expression.

For completeness, we summarize the full hierarchy of superselection sectors:
\begin{itemize}
	\item \textbf{0-th superselection layer (local order parameters).}
	The module
	\(
	\operatorname{Ext}_R^0
	\bigl(\overline{\operatorname{coker}\varphi_X},R\bigr)
	\cong_R\ker\varphi_X^\dagger
	\)
	consists of the finite-support \(Z\)-type operators that create no
	\(X\)-type excitations. Modulo \(Z\)-stabilizers, these give the local
	order parameters of a classical code. For a topological CSS code,
	topological exactness gives
	\(\ker\varphi_X^\dagger=\operatorname{im}\varphi_Z\), so this quotient
	vanishes.

	\item \textbf{1-st superselection layer (pointlike excitations).} The $\operatorname{Ext}_R^1$ module classifies the genuine pointlike ($0$-dimensional) superselection sectors, namely, finite excitation patterns modulo those that can be locally created.

	\item \textbf{2-nd superselection layer (looplike excitations).} The $\operatorname{Ext}_R^2$ module classifies relations among relations (the first syzygies), which correspond to $1$-dimensional (loop or string) defects. A nontrivial class signals that attempting to cut the loop inevitably creates a pointlike boundary defect in $\operatorname{Ext}_R^1$ at its endpoints.

	\item \textbf{\(\ell\)-th superselection layer ($(\ell-1)$-dimensional excitations).} The $\operatorname{Ext}_R^\ell$ module classifies higher syzygies, corresponding to extended defects of dimension $\ell-1$. Breaking or opening such an object necessarily leaves an $(\ell-2)$-dimensional boundary defect classified by $\operatorname{Ext}_R^{\ell-1}$.
\end{itemize}

Although the free resolution $F_\bullet$ depends on the specific choice of stabilizer generators, $\operatorname{Ext}_R^\ell(\overline{\operatorname{coker}d_1}, R)$ are formal invariants of $\operatorname{coker}d_1$. The hierarchy thus provides a purely intrinsic, presentation‑independent topological classification of excitations.

Because we focus on topological CSS codes, for which the degree-zero quotient
above vanishes, we reserve \emph{topological superselection profile} for the
positive-degree object defined below, abbreviated thereafter to
\emph{superselection profile}.

Up to an index shift, these \(\operatorname{Ext}\) modules are the sectorwise
CSS counterparts of the generalized charge modules of Ref.~\cite{Yang2023}.
We further package them into a derived object, which can retain inter-layer
extension data absent from the graded list alone~\cite[Chap.~10]{weibel1994introduction}.
Here we focus primarily on the rigidity question: when do the individual
\(\operatorname{Ext}\) modules uniquely determine the translation SET order at
the stable Clifford-circuit level? The nonsplit extensions in
Sec.~\ref{sec:non-splitting-extensions} show that they need not do so in the
multi-layer regime, motivating the derived packaging; a systematic treatment
of the inter-layer data is left to future work.

\begin{definition}[Topological superselection profile]
\label{def:superselection-profile}
For \(\sigma\in\{X,Z\}\), the \emph{topological superselection profile} of the
\(\sigma\)-sector is the \emph{positive-degree derived superselection object}
\begin{equation}
	\mathscr S_\sigma
	\coloneqq
	\tau_{\geq1}\mathbf R\!\operatorname{Hom}_R
	\bigl(\overline{\operatorname{coker}\varphi_\sigma},R\bigr).
		\label{eq:superselection-profile}
\end{equation}
In simple terms, \(\mathbf R\!\operatorname{Hom}_R(M,R)\) can be viewed as the
dual of an entire free resolution of \(M\), retaining both its
\(\operatorname{Ext}\) layers and the gluing between them; see
Ref.~\cite[Sec.~10.7]{weibel1994introduction} for details.
Here \(\tau_{\geq1}\) retains only cohomological degrees \(\ell\geq1\). The
\(\ell\)-th \emph{superselection layer} of \(\mathscr S_\sigma\) is the
cohomology module
\begin{equation}
	E_\sigma^\ell
	\coloneqq
	H^\ell(\mathscr S_\sigma)
	\cong_R
	\operatorname{Ext}_R^\ell
	\bigl(\overline{\operatorname{coker}\varphi_\sigma},R\bigr),
	\qquad \ell\geq 1.
	\label{eq:superselection-layers}
\end{equation}
The index \(\ell\) is the cohomological degree and records the corresponding
relation layer. A profile is
\emph{single-layer} if at most one positive-degree layer is nonzero, including
the identically vanishing profile, and \emph{multi-layer} otherwise.
\end{definition}

Here ``layer'' refers to the relation hierarchy rather than spatial stacking.
Algebraically, each layer is an \(R\)-module: its underlying Abelian group
encodes fusion, whereas its \(R\)-module structure encodes the action of
lattice translations.

For reference, Table~\ref{tab:terminology} summarizes the algebra--physics
dictionary used throughout.
\begin{table*}[tbp]
	\centering
	\footnotesize
	\renewcommand{\arraystretch}{1.2}
	\setlength{\tabcolsep}{0pt}
	\begin{tabular}{@{}p{0.30\textwidth}@{\hspace{2pt}}p{\dimexpr0.70\textwidth-2pt\relax}@{}}
		\toprule
		\textbf{Algebraic object} & \textbf{Terminology and physical role} \\
		\midrule
		\(\operatorname{coker}\varphi_\sigma
		\coloneqq P_\sigma/\operatorname{im}\varphi_\sigma\)
		& \textbf{Reduced Pauli-\(\sigma\) module:} finite-support
		  Pauli-\(\sigma\) operators modulo \(\sigma\)-type stabilizers. \\
		\(\mathscr S_\sigma\coloneqq
		  \tau_{\geq1}\mathbf R\!\operatorname{Hom}_R
		  \bigl(\overline{\operatorname{coker}\varphi_\sigma},R\bigr)\)
		& \textbf{Topological superselection profile:}
		  records all superselection layers \(\ell\geq1\) and their gluing information. \\
		\(E_\sigma^\ell\coloneqq H^\ell(\mathscr S_\sigma)
		\cong_R\operatorname{Ext}_R^\ell
		\bigl(\overline{\operatorname{coker}\varphi_\sigma},R\bigr)\)
		& \textbf{\(\ell\)-th superselection layer:} \(\sigma\)-excitation fusion and
		  translations---pointlike \((\ell=1)\) and extended
		  \((\ell>1)\)\textsuperscript{*}. \\
		\(S\coloneqq R/(\mathbf f) = R/(f_1,f_2,\cdots,f_s)\)
		& \textbf{Koszul structure module:} controls, by rigidity, all topological
		  properties of the regular KC model family. \\
		\(R_{\mathbf L}\coloneqq
		R/(x_1^{L_1}-1,\ldots,x_D^{L_D}-1)\)
		& \textbf{Periodic quotient ring:} imposes \(\mathbf L\)-periodic
		  boundary conditions, where \(\mathbf L=(L_1,L_2,\ldots,L_D)\). \\
		\(\mathcal H_j^Z(\mathbf L)
		  \cong_R\operatorname{Tor}_j^R(S,R_{\mathbf L})\)
		& \textbf{Finite-size \(Z\)-logical module:} describes logical
		  operators on the \(\mathbf L\)-periodic torus of a regular Koszul model. \\
		\(R_\Lambda\coloneqq\mathbb Z_N[\Lambda]\subseteq R\)
		& \textbf{Retained-translation ring:} describes codes with translation
		  symmetry restricted to \(\Lambda\subseteq\mathbb Z^D\). \\
		\(\Lambda_E \coloneqq \ker(\rho_E)\)
		& \textbf{Mobility lattice:} kernel of the translation action
		  \(\rho_E:\mathbb Z^D\to\operatorname{Aut}_{\mathbb Z_N}(E)\). \\
		\bottomrule
	\end{tabular}
	\caption{Algebra--physics dictionary for the central algebraic
		objects used throughout. Here \(\sigma\in\{X,Z\}\) labels the Pauli sector.
		\textsuperscript{*}\,For finite profiles
		(\(\lvert E_\sigma^\ell\rvert<\infty\) for all \(\ell\geq1\)), the
		\(\ell\)-th layer describes conventional \((\ell-1)\)-dimensional
		excitations (looplike for \(\ell=2\), membrane-like for \(\ell=3\), etc.).}
	\label{tab:terminology}
\end{table*}

For the finite single-layer profiles used below, an $(\ell-1)$-dimensional
excitation can be constructed from higher relation-layer violations through a
step-by-step spatial
dimension-raising process (corresponding algebraically to a cohomological
descent). A nontrivial class in $\operatorname{Ext}_R^\ell$ is represented by
a finite-support violation pattern in the $\ell$-th relation layer. Sweeping
this violation through the lattice generates a $1$-dimensional chain of
violations in the $(\ell-1)$-th relation layer. Sweeping that chain, in turn,
yields a $2$-dimensional sheet of violations in the $(\ell-2)$-th relation
layer. Continuing this procedure produces an indecomposable
$(\ell-1)$-dimensional violation of the original stabilizer relations $d_1$,
i.e., a macroscopic extended excitation. This explicit construction bridges
the abstract algebraic class in
$\operatorname{Ext}_R^\ell(\overline{\operatorname{coker}d_1},R)$ to a
concrete excitation pattern in the code. We provide the precise algorithmic
formulation of this descent for a single-layer superselection profile whose
nonzero layer is finite in Appendix~\ref{app:cohomological_descent}.

While the algebraic machinery of cohomological descent applies universally to translation-invariant stabilizer codes, the \textit{geometric} interpretation of the resulting extended excitations requires nuance in the context of fracton models. In conventional topological phases, sweeping local violations produces flexible topological strings or membranes. In high-dimensional generalizations of fracton models, however, the analogous spatial dimension-raising process often yields extended excitation patterns supported on discrete fractal geometries (e.g., Sierpinski configurations). Consequently, while a class in $\operatorname{Ext}_R^\ell$ strictly captures algebraic descent through higher relation layers, the heuristic of an ``$(\ell-1)$-dimensional extended excitation'' blurs, requiring a more careful treatment of the geometric notion of dimension. We therefore label this hierarchy by Ext degree rather than by an assumed geometric dimension.

\subsection{Connection with generalized gauging}
\label{subsec:gauging-connection}

This relation-layer hierarchy also clarifies the connection with gauging. In
generalized lattice gauge theory and subsystem gauging, gauge variables are
assigned not only to the original local terms but also to local relations
among them~\cite{Williamson2016Ungauging,Vijay2016,
ShirleySlagleChen2019Gauging}. The free resolution in
Eq.~\eqref{eq:full-resolution} provides an algebraic blueprint for iterating
this construction through higher relation layers.

Gauging converts relation-layer violations of the original model into
dynamical excitations. In the present algebraic formulation, this motivates
interpreting
$\operatorname{Ext}_R^\ell
(\overline{\operatorname{coker}d_1},R)$ as the pointlike excitation module
obtained by gauging the $(\ell-1)$-th relation layer. This viewpoint provides
a common language for cellular topological orders, fractonic stabilizer
models, and generalized subsystem gauging.

\subsection{Translation-symmetry-enriched topological order}
\label{subsec:set-order}

For a topological CSS code, the exactness condition
in Eq.~\eqref{eq:topological-exactness-CSS} makes the two stabilizer sectors
mutually determined as submodules: fixing \(\varphi_X\) fixes
\(\operatorname{im}\varphi_Z=\ker\varphi_X^\dagger\), and conversely. We
therefore use the \(X\)-sector as our chosen description. This convention
removes redundant stabilizer data; it does not identify the physically
distinct \(X\)- and \(Z\)-type excitations.

For any fixed module \(E=E_X^\ell\), translations act through
\begin{equation}
	\rho_E\colon
	\mathbb Z^D
	\longrightarrow
	\operatorname{Aut}_{\mathbb Z_N}(E),
	\qquad
	\mathbf v\longmapsto(e\mapsto x^{\mathbf v}e).
\end{equation}
Following Ref.~\cite{Chen2025PRL}, which introduced the anyon
\emph{mobility sublattice} for bivariate-bicycle codes, we define the
\emph{mobility lattice} of \(E\) by
\begin{equation}
	\Lambda_E\coloneqq\ker(\rho_E)\subseteq\mathbb Z^D.
	\label{eq:mobility}
\end{equation}
Thus \(\Lambda_E\) is the maximal translation subgroup that fixes every
class in \(E\).
For \(E=E_X^1\), \(\Lambda_E\) is the sectorwise form of that notion;
Ref.~\cite{He2026PRB} extended the mobility-group analysis to
\(\mathbb Z_N\) stabilizer codes. Equation~\eqref{eq:mobility} further
applies to every superselection layer.

Without assuming concentration in a single degree, the mobility lattice of
the \(X\)-sector superselection layers is
\begin{equation}
	\Lambda_{\mathrm{ss}}
	\coloneqq
	\bigcap_{\ell\geq 1}\Lambda_{E_X^\ell}.
\end{equation}
It is the maximal translation subgroup acting trivially on every layer of the
positive-degree \(X\)-sector superselection profile.

When the simultaneous translation action on these layers has finite image,
\(\Lambda_{\mathrm{ss}}\) has finite index in \(\mathbb Z^D\). Generalizing
the anyon periods introduced in Ref.~\cite{Chen2025PRL}, for each coordinate
direction \(a\), define the \emph{directional mobility period}
\begin{equation}
	J_a\coloneqq
	\min\bigl\{J\in\mathbb Z_{>0}\bigm|
	x_a^J e=e,\; \forall\,\ell\geq 1,\ \forall\,e\in E_X^\ell\bigr\}.
	\label{eq:coordinate-periods}
\end{equation}
The corresponding mobility-period vector
\begin{equation}
	\mathbf J=(J_1,J_2,\dots,J_D)
\end{equation}
determines the rectangular translation subgroup
\(\Lambda_{\mathrm{rect}}
=J_1\mathbb Z\oplus\cdots\oplus J_D\mathbb Z
\subseteq\Lambda_{\mathrm{ss}}\).
This inclusion can be strict, because \(\Lambda_{\mathrm{ss}}\) may also
contain skewed translation vectors.

\section{Excitations of Koszul-complex models}
\label{sec:regular-koszul-excitations}

Applying the homological framework developed in Sec.~\ref{sec:ext-excitation-framework}, we now compute the superselection sectors for the regular Koszul-complex models of Sec.~\ref{sec:koszul-complex-codes}. The result is particularly simple: each of the \(X\)- and \(Z\)-sector profiles has a single nonzero superselection layer, given by the Koszul structure module $S \coloneqq R/(\mathbf{f})$, up to spatial inversion.

\subsection{Computation of the superselection profile}

\begin{fact}[Superselection profiles of Koszul-complex models]
	\label{fact:koszul-excitation}
	Let $\mathbf{f}=(f_1,\dots,f_s)$ be a regular sequence in $R=\mathbb{Z}_{N}[\mathbb{Z}^D]$. For $1\le j\le s-1$, the Koszul-complex model $\mathrm{KC}_j^{(D,s)}(\mathbf{f}; \mathbb{Z}_N)$ is topological. Furthermore, the only nonzero superselection layers in strictly positive degrees are
	\begin{align}
		\operatorname{Ext}_R^j(\overline{\operatorname{coker}\varphi_X}, R)
		&\cong_R S, \label{eq:X-sector-Ext-rigidity}\\
		\operatorname{Ext}_R^{s-j}(\overline{\operatorname{coker}\varphi_Z}, R)
		&\cong_R \overline{S}, \label{eq:Z-sector-Ext-rigidity}
	\end{align}
	where $\varphi_X = \partial_j^\dagger$ and $\varphi_Z = \partial_{j+1}$, with $\partial_\bullet$ denoting the Koszul differential maps.
\end{fact}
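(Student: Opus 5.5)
Topologicality is already in hand: by Eqs.~\eqref{eq:koszul-homology} and~\eqref{eq:koszul-cohomology}, $H_j(K_\bullet(\mathbf f))=0$ and $H^j(K_\bullet^\dagger(\mathbf f))=0$ for $0<j<s$, which is exactly Eq.~\eqref{eq:topological-exactness-CSS}. For the superselection layers, the plan is to write down explicit free resolutions of the two reduced Pauli modules from truncated Koszul complexes. We then compute $\operatorname{Ext}$ as cohomology of the dagger-dual complex, exactly as in Eq.~\eqref{eq:ext-cohomology} and Theorem~\ref{thm:superselection}. The one bookkeeping point throughout is the conjugation, handled via $F^\dagger=\overline F^\vee$.

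\textbf{$Z$-sector.} Here $\operatorname{coker}\varphi_Z=K_j/\operatorname{im}\partial_{j+1}$. Since $K_\bullet(\mathbf f)$ is exact in positive degrees, the truncation
\[
0\to K_s\xrightarrow{\partial_s}\cdots\to K_{j+1}\xrightarrow{\partial_{j+1}}K_j\to\operatorname{coker}\varphi_Z\to0
\]
is a finite free resolution, with $F_\ell=K_{j+\ell}$ and $d_\ell=\partial_{j+\ell}$. By Eq.~\eqref{eq:ext-cohomology}, $E_Z^\ell\cong\ker d_{\ell+1}^\dagger/\operatorname{im}d_\ell^\dagger$, which is the cohomology of $K_\bullet^\dagger(\mathbf f)$ at degree $j+\ell$. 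By Eq.~\eqref{eq:koszul-cohomology}, this vanishes for $j+\ell<s$ and equals $\overline S$ at $j+\ell=s$, i.e.\ $\ell=s-j$. For $\ell>s-j$ there are no terms, so the layer is zero. This gives Eq.~\eqref{eq:Z-sector-Ext-rigidity}.

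\textbf{$X$-sector.} Here $\operatorname{coker}\varphi_X=K_j^\dagger/\operatorname{im}\partial_j^\dagger$. We use the self-duality $K_\bullet^\dagger(\mathbf f)\cong_R K_{s-\bullet}(\overline{\mathbf f})$, under which $\partial_j^\dagger$ becomes the Koszul differential $\partial'_{s-j+1}$ of $K_\bullet(\overline{\mathbf f})$, up to signs. Since $\overline{\mathbf f}$ is regular, $K_\bullet(\overline{\mathbf f})$ is exact in positive degrees. This gives the resolution
\[
0\to K_0^\dagger\to K_1^\dagger\to\cdots\to K_{j-1}^\dagger\xrightarrow{\partial_j^\dagger}K_j^\dagger\to\operatorname{coker}\varphi_X\to0,
\]
with $F_\ell=K_{j-\ell}^\dagger$ and $d_\ell=\partial_{j-\ell+1}^\dagger$. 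Taking dagger duals and using $(\partial^\dagger)^\dagger=\partial$ and $K^{\dagger\dagger}=K$, the complex $F_\bullet^\dagger$ becomes $K_j\to K_{j-1}\to\cdots\to K_0$. Its cohomology at $F_\ell^\dagger=K_{j-\ell}$ is $H_{j-\ell}(K_\bullet(\mathbf f))$. By Eq.~\eqref{eq:koszul-homology}, this vanishes for $1\le\ell<j$ and equals $S$ at $\ell=j$, and it vanishes beyond. This gives Eq.~\eqref{eq:X-sector-Ext-rigidity}.

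\textbf{Main obstacle.} The delicate step is the conjugation and involution bookkeeping. We must check that the cohomology computed from dagger duals really gives $\operatorname{Ext}_R^\ell(\overline{\operatorname{coker}\varphi_\sigma},R)$ with the stated module structure, namely $S$ rather than $\overline S$ for $X$, and the reverse for $Z$. The plan is to follow the proof of Theorem~\ref{thm:superselection}. Conjugating the resolution gives a free resolution of $\overline{\operatorname{coker}\varphi_\sigma}$, and its $R$-dual is $F_\bullet^\dagger$. The identification $\operatorname{coker}\partial_1=S$ inside $K_0$ then carries the untwisted $R$-action, while the top cohomology of $K^\dagger_\bullet$ is $R/(\overline{\mathbf f})$. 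Sign differences in the self-duality isomorphism do not affect images or kernels, so they are harmless.
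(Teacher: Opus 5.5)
Your proposal is correct and follows essentially the paper's route: truncated Koszul complexes give finite free resolutions of the two reduced Pauli modules, and the dualized cohomology is read off from Eqs.~\eqref{eq:koszul-homology} and~\eqref{eq:koszul-cohomology}. The only difference is cosmetic and lies in the $X$-sector. The paper first rewrites $\overline{\operatorname{coker}\varphi_X}$ as $\operatorname{coker}\partial_{s-j+1}$ via self-duality, resolves it by the Koszul tail, and then applies self-duality again. You instead resolve $\operatorname{coker}\varphi_X$ directly by the truncated dagger-dual complex and use the $R$-linear identification $K^{\dagger\dagger}=K$ to land on $H_{j-\ell}(K_\bullet(\mathbf f))$, which makes the untwisted action on $S$ slightly more transparent.
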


\begin{proof}
	As shown in Sec.~\ref{sec:koszul-complex-codes} (specifically, Eqs.~\eqref{eq:koszul-homology} and \eqref{eq:koszul-cohomology}), the regularity of the sequence $\mathbf{f}$ guarantees the exactness of the Koszul complex and its dual at all intermediate degrees: both $H_i(K_\bullet)$ and $H^i(K_\bullet^\dagger)$ vanish for $1 \le i \le s-1$. This dictates the absence of finite-support $Z$-logicals and $X$-logicals, confirming that $\mathrm{KC}_j^{(D,s)}(\mathbf{f}; \mathbb{Z}_N)$ is topological.

	We now proceed to compute the superselection layers. Set
	\(M_X\coloneqq\operatorname{coker}\varphi_X\) and
	\(M_Z\coloneqq\operatorname{coker}\varphi_Z\) for brevity.

	\textbf{The $\varphi_Z$ excitations.}
	Since \(M_Z=\operatorname{coker}\partial_{j+1}\), conjugating the
	corresponding tail of the Koszul resolution gives
	\[
	0 \to \overline{K_s} \longrightarrow \dots \longrightarrow \overline{K_{j+2}} \xrightarrow{\;\;\overline{\partial_{j+2}}\;\;} \overline{K_{j+1}} \xrightarrow{\;\;\overline{\partial_{j+1}}\;\;} \overline{K_j} \to \overline{M_Z} \to 0.
	\]
	Applying $\operatorname{Hom}_R(-,R)$ to this resolution: the $\ell$-th $\operatorname{Ext}$ module for $\ell>0$ corresponds to the cohomology of $K_\bullet^\dagger (\mathbf{f})$ at position $j+\ell$, which can be obtained from Eq.~\eqref{eq:koszul-cohomology}. Thus,
	\begin{equation}
		\operatorname{Ext}_R^\ell(\overline{M_Z},R)\cong_R
		\begin{cases}
			\overline{S}, & \ell=s-j,\\[2pt]
			0, & \ell\neq s-j,
		\end{cases}\qquad \ell>0.
	\end{equation}

	\textbf{The $\varphi_X$ excitations.}
	By construction, $\varphi_X = \partial_j^\dagger$. We analyze the module
	$\overline{M_X}
	\cong_R\operatorname{coker}\partial_j^\vee$.
	By the self-duality of the Koszul complex ($K_\bullet^\vee \cong_R K_{s-\bullet}$), where the map $\partial_j^\vee$ identifies with the differential $\partial_{s-j+1} : K_{s-j+1} \to K_{s-j}$, up to the standard signs, we have $\overline{M_X} \cong_R \operatorname{coker} \partial_{s-j+1}$. Again, since $\mathbf{f}$ is regular,
	$\overline{M_X}$ is thus resolved by the tail of $K_\bullet(\mathbf{f})$:
	\[
	\begin{aligned}
	0\longrightarrow K_s\longrightarrow\cdots\longrightarrow K_{s-j+2}
	&\xrightarrow{\partial_{s-j+2}}K_{s-j+1}\\
	&\xrightarrow{\partial_{s-j+1}}K_{s-j}
	\longrightarrow\overline{M_X}\longrightarrow0.
	\end{aligned}
	\]
	Applying $\operatorname{Hom}_R(-, R)$ to this resolution: the $\ell$-th $\operatorname{Ext}$ module for $\ell>0$ corresponds to $H^{s-j+\ell}(K_\bullet^\vee(\mathbf{f}))$, which is isomorphic to $H^{j-\ell}(K_\bullet(\mathbf{f}))$
	by self-duality $K_\bullet^\vee \cong_R K_{s-\bullet}$ and hence given by Eq.~\eqref{eq:koszul-homology}. Thus,
	\begin{equation}
		\operatorname{Ext}_R^\ell\!\left(\overline{M_X},R\right)
		\cong_R
		\begin{cases}
			S, & \ell =j,\\
			0, & \ell \neq j,
		\end{cases}
		\qquad \ell >0.
	\end{equation}
	This completes the proof.
\end{proof}

\begin{example}[3D toric code]
	As a concrete double-check of the above computation, consider the 3D toric code $\mathrm{TC}_1^{(3)}(\mathbb{Z}_N)$ on a cubic lattice with $\mathbb{Z}_N$ qudits on the links. The generating sequence is
	\begin{equation}
		f_i = x_i - 1 \qquad \text{for} \quad i=1,2,3.
	\end{equation}
	Accordingly, the Koszul structure module evaluates to $S \cong \mathbb{Z}_N$. Fact~\ref{fact:koszul-excitation} then yields
	\begin{equation}
		\operatorname{Ext}_R^1(\overline{\operatorname{coker}\varphi_X}, R) \cong_R \mathbb{Z}_N,
		\qquad
		\operatorname{Ext}_R^2(\overline{\operatorname{coker}\varphi_Z}, R) \cong_R \mathbb{Z}_N,
	\end{equation}
	recovering precisely the expected $\mathbb{Z}_N$-valued pointlike charge and looplike flux sectors of the 3D toric code.
\end{example}

\subsection{Topological order and symmetry enrichment}

The \(\operatorname{Ext}\) superselection modules encode excitation fusion and
lattice-translation action: their underlying Abelian groups label excitation
classes and fusion, while multiplication by monomials records translations.
Thus their \(R\)-module structures capture the fusion-and-translation part of
translation SET data; in a multilayer profile, additional information can
reside in inter-layer gluing.

To facilitate the analysis, the qudits can be reduced to their prime-power components. Given the prime factorization
\begin{equation}
	N=\prod_\alpha p_\alpha^{m_\alpha},
\end{equation}
the Chinese Remainder Theorem yields the ring isomorphism
\begin{equation}
	\mathbb{Z}_N[x_1^{\pm1},\dots,x_D^{\pm1}]
	\cong
	\prod_\alpha
	\mathbb{Z}_{p_\alpha^{m_\alpha}}[x_1^{\pm1},\dots,x_D^{\pm1}].
	\label{eq:CRT}
\end{equation}
Consequently, an onsite relabeling of the local degrees of freedom decomposes
any stabilizer model into decoupled $\mathbb{Z}_{p_\alpha^{m_\alpha}}$
sectors, and stable translation-invariant Clifford equivalence can be checked
componentwise; see Appendix~\ref{app:crt-stabilizer-decomposition}. In
particular, for Koszul-complex models one obtains
\begin{equation}
	\mathrm{KC}_{j}^{(D,s)}(\mathbf{f};\mathbb{Z}_N)
	\cong
	\bigoplus_\alpha
	\mathrm{KC}_{j}^{(D,s)}(\mathbf{f}^{(\alpha)};\mathbb{Z}_{p_\alpha^{m_\alpha}}),
	\label{eq:CRT-Koszul-decomposition}
\end{equation}
where $\mathbf{f}^{(\alpha)}$ denotes the $\alpha^\text{th}$ prime-power component of $\mathbf{f}$. Regularity is checked componentwise under this decomposition. Hence, it suffices to analyze the $\mathbb{Z}_{p^m}$ situation without loss of generality.

\subsubsection{Fusion rules of superselection sectors}

In each prime-power sector, the Abelian group structure of the excitation modules simplifies significantly: the Koszul structure module $S$ is free over $\mathbb{Z}_{p^m}$.

\begin{lemma}[Freeness over $\mathbb{Z}_{p^m}$ for a regular sequence]
	\label{lem:free-Zpm}
	Let
	\[
	R = \mathbb{Z}_{p^m}[x_1^{\pm1},\dots,x_D^{\pm1}]
	\]
	and let $\mathbf{f}=(f_1,\dots,f_s)$ be an $R$-regular sequence. Then the Koszul structure module $S \coloneqq R/(\mathbf{f})$ is free as a $\mathbb{Z}_{p^m}$-module, possibly of infinite rank.
\end{lemma}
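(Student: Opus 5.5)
We show that \(p\) is a non-zero-divisor on \(S\) and that \(S/pS\) and \(p^kS/p^{k+1}S\) are compatible, and then conclude freeness over the Artinian principal ring \(\mathbb Z_{p^m}\). A module \(M\) over \(\mathbb Z_{p^m}\) is free if and only if it is \emph{flat}, and flatness over \(\mathbb Z_{p^m}\) is equivalent to exactness of \(0\to M\otimes p^{m-1}\mathbb Z_{p^m}\to M\). Concretely, flatness means \(\operatorname{ann}_M(p)=p^{m-1}M\). For flat modules over a local Artinian ring, freeness then follows from the standard fact that flat modules over an Artinian local ring are free, even for infinite rank. So the plan is to prove that \(S\) is flat over \(\mathbb Z_{p^m}\), equivalently \(\operatorname{Tor}_1^{\mathbb Z_{p^m}}(S,\mathbb Z_p)=0\).

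To compute this Tor, I will use the Koszul resolution. \(R\) itself is free over \(\mathbb Z_{p^m}\), since it has the monomial basis. The Koszul complex \(K_\bullet(\mathbf f)\) is a finite free \(R\)-resolution of \(S\), by Eq.~\eqref{eq:koszul-homology}. Hence it is also a resolution of \(S\) by \(\mathbb Z_{p^m}\)-free modules. Therefore
\[
\operatorname{Tor}_i^{\mathbb Z_{p^m}}(S,\mathbb Z_p)\cong H_i\bigl(K_\bullet(\mathbf f)\otimes_{\mathbb Z_{p^m}}\mathbb Z_p\bigr)=H_i\bigl(K_\bullet(\bar{\mathbf f})\bigr),
\]
where \(\bar{\mathbf f}\) is the reduction of \(\mathbf f\) in \(\bar R\coloneqq\mathbb F_p[x_1^{\pm1},\dots,x_D^{\pm1}]\). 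So it suffices to show \(H_1(K_\bullet(\bar{\mathbf f}))=0\).

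The key step is to show that regularity survives reduction mod \(p\). I would argue as follows. Since \(f_1\) is a non-zero-divisor on \(R\), it remains a non-zero-divisor on \(R/pR=\bar R\). Indeed, in \(R\), an element is a zero divisor iff its reduction mod \(p\) is zero divisor: this is McCoy-type for the nilpotent ideal \((p)\); concretely, if \(\bar f_1\bar g=0\) with \(\bar g\neq0\), choose \(k\) maximal with \(p^k g\neq 0\)... An alternative route is cleaner. Proceed by induction on \(s\), proving the stronger statement: \(S_i\coloneqq R/(f_1,\dots,f_i)\) is \(\mathbb Z_{p^m}\)-flat. For \(i=0\) this is clear. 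For the inductive step, there is a short exact sequence \(0\to S_{i-1}\xrightarrow{f_i}S_{i-1}\to S_i\to0\), where injectivity is exactly the regularity hypothesis. The Tor long exact sequence with \(\mathbb Z_p\) then gives
\[
0=\operatorname{Tor}_1(S_{i-1},\mathbb Z_p)\to\operatorname{Tor}_1(S_i,\mathbb Z_p)\to S_{i-1}/p\xrightarrow{\,f_i\,}S_{i-1}/p.
\]
So flatness of \(S_i\) reduces to \(f_i\) being injective on \(S_{i-1}/pS_{i-1}\).

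The main obstacle is this last injectivity. Over a non-reduced base, a non-zero-divisor need not remain one after reduction. However, here \(S_{i-1}\) is flat over \(\mathbb Z_{p^m}\) by induction. I will therefore use the filtration \(p^kS_{i-1}/p^{k+1}S_{i-1}\). By flatness, each graded piece is isomorphic to \(S_{i-1}/pS_{i-1}\) via multiplication by \(p^k\), for \(k<m\). Now suppose \(f_i\bar u=0\) in \(S_{i-1}/p\). Then \(f_i p^{m-1}u=p^{m-1}f_iu\in p^mS_{i-1}=0\). By regularity, \(p^{m-1}u=0\). Flatness (\(\operatorname{ann}(p^{m-1})=pS_{i-1}\)) then gives \(u\in pS_{i-1}\), so \(\bar u=0\). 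This closes the induction. Taking \(i=s\), \(S\) is flat, hence free, over \(\mathbb Z_{p^m}\). The remaining detail is to cite or prove that flat implies free over \(\mathbb Z_{p^m}\). I would do this by lifting an \(\mathbb F_p\)-basis of \(S/pS\) and using Nakayama-type nilpotence of \(p\) together with flatness to get both generation and independence.
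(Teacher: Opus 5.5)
Your argument is correct, but it takes a different route from the paper. The paper never looks at the individual steps of the regular sequence. It uses only that the Koszul complex is a length-$s$ resolution of $S$ by modules that are free over $\mathbb Z_{p^m}$, so $S$ has finite projective dimension over $\mathbb Z_{p^m}$. Because $\mathbb Z_{p^m}$ is self-injective, finite projective dimension forces projectivity, and over the local ring $\mathbb Z_{p^m}$ projective modules are free.

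You instead induct along $R\to S_1\to\cdots\to S_s$, using the sequences $0\to S_{i-1}\xrightarrow{f_i}S_{i-1}\to S_i\to0$ and the residue-field criterion $\operatorname{Tor}_1^{\mathbb Z_{p^m}}(M,\mathbb Z_p)\cong\operatorname{ann}_M(p)/p^{m-1}M$. Your key step is sound: multiply by $p^{m-1}$, cancel $f_i$ by regularity, then use $\operatorname{ann}_{S_{i-1}}(p^{m-1})=pS_{i-1}$. Your Nakayama-type lift for flat $\Rightarrow$ free is also valid for infinite rank, because $p$ is nilpotent.

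The two routes buy different things. The paper's is shorter and more general: it shows that any $R$-module with a finite free $R$-resolution is $\mathbb Z_{p^m}$-free. Yours is more elementary and avoids self-injectivity and Kaplansky's theorem. As a by-product, it shows that $\bar{\mathbf f}$ stays a regular sequence in $\mathbb F_p[x_1^{\pm1},\dots,x_D^{\pm1}]$. In particular $K_\bullet(\bar{\mathbf f})$ is acyclic in positive degrees, which is the statement you first set out to prove.

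Two cleanups. First, your opening sentence says you will show that $p$ is a non-zero-divisor on $S$. That is false whenever $S\neq0$, since $p^m=0$; what you actually prove is $\operatorname{ann}_S(p)=p^{m-1}S$.

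Second, the abandoned McCoy detour is unnecessary, because your first route already closes in one line. For $m\geq2$ the resolution $\cdots\xrightarrow{p}\mathbb Z_{p^m}\xrightarrow{p^{m-1}}\mathbb Z_{p^m}\xrightarrow{p}\mathbb Z_{p^m}\to\mathbb Z_p\to0$ is $2$-periodic (the case $m=1$ is trivial). Hence $\operatorname{Tor}_1^{\mathbb Z_{p^m}}(S,\mathbb Z_p)\cong\operatorname{Tor}_{2k+1}^{\mathbb Z_{p^m}}(S,\mathbb Z_p)$, and this vanishes once $2k+1>s$ because the Koszul complex has length $s$. That is the paper's finite-projective-dimension argument made concrete.
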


\begin{proof}
	Because $\mathbf{f}$ is an $R$-regular sequence, its Koszul complex $K_\bullet(\mathbf{f})$ provides a free resolution of $S$:
	\[
	0 \longrightarrow K_s \xrightarrow{\;\partial_s\;} \dots \xrightarrow{\;\partial_1\;} K_0 \longrightarrow S \longrightarrow 0.
	\]
	Each $K_i$ is a finite-rank free $R$-module. Moreover, $R$ itself is a free $\mathbb{Z}_{p^m}$-module, with a basis given by the Laurent monomials $x^{\mathbf{v}}$ for $\mathbf{v}\in\mathbb{Z}^D$. Thus, each $K_i$ is a free $\mathbb{Z}_{p^m}$-module, possibly of infinite rank. Hence, the resolution above demonstrates that $S$ has finite projective dimension as a $\mathbb{Z}_{p^m}$-module.

	The ring $\mathbb{Z}_{p^m}$ is an Artinian local Frobenius ring, and is thus self-injective. Over a self-injective ring, every module of finite projective dimension is projective. Therefore, $S$ is a projective $\mathbb{Z}_{p^m}$-module. Finally, because $\mathbb{Z}_{p^m}$ is local, every projective $\mathbb{Z}_{p^m}$-module is free. Thus, $S$ is free as a $\mathbb{Z}_{p^m}$-module.
\end{proof}

Our primary focus lies in cases where the regular sequence $\mathbf{f}=(f_1,\dots,f_s)$ achieves the full length $s=D$, exactly matching the Krull dimension of $R$. The resulting models are purely topological and devoid of genuine fractonic excitations. This absence of fractons is guaranteed by the finiteness of $S$, which can be established directly in the general $\mathbb{Z}_N$ setting.

\begin{lemma}[Finiteness for a full-length regular sequence]
	\label{lem:finite-full-length-ZN}
	Let
	\[
	R=\mathbb{Z}_N[x_1^{\pm1},\dots,x_D^{\pm1}]
	\]
	with $N>0$, and let $\mathbf{f}=(f_1,\dots,f_D)$ be a full-length $R$-regular sequence. Then the Koszul structure module $S \coloneqq R/(\mathbf{f})$ is a finite $\mathbb{Z}_N$-module.
\end{lemma}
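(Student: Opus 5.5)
The plan is a Krull-dimension count followed by a finite-length argument. The ring $R=\mathbb Z_N[x_1^{\pm1},\dots,x_D^{\pm1}]$ is Noetherian, being a localization of a polynomial ring over the Noetherian ring $\mathbb Z_N$. Because $\mathbb Z_N$ is Artinian, its Krull dimension is $0$, so $\dim R=D$. The Laurent localization does not change the dimension: every maximal ideal of $\mathbb Z_N[x_1,\dots,x_D]$ is $(p,g_1,\dots,g_D)$ with residue field finite, and inverting the $x_i$ keeps maximal chains of length $D$ whenever the $x_i$ are nonzero mod that ideal. This count only needs $\dim R\le D$, which is immediate. I will show that $S=R/(\mathbf f)$ is a zero-dimensional Noetherian ring, hence of finite length over itself. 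I will then show that every residue field of $R$ at a maximal ideal is finite. Together these give $|S|<\infty$.

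\emph{Step 1 (dimension drop).} Set $R_i\coloneqq R/(f_1,\dots,f_i)$, which is Noetherian. By Definition~\ref{def:regular-sequence}, the image of $f_{i+1}$ is a nonzerodivisor in $R_i$. In a Noetherian ring the zero divisors are the union of the associated primes, and these include all minimal primes. So $f_{i+1}$ lies in no minimal prime of $R_i$. Any chain of primes $\mathfrak p_0\subsetneq\cdots\subsetneq\mathfrak p_k$ in $R_{i+1}=R_i/(f_{i+1})$ pulls back to a chain of primes of $R_i$ containing $f_{i+1}$. Its bottom prime is therefore not minimal, so the chain can be extended downward by a minimal prime. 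This gives $\dim R_{i+1}\le\dim R_i-1$. After $D$ steps, $\dim S\le 0$, so either $S=0$ or $S$ is a Noetherian ring of dimension $0$, i.e.\ Artinian.

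\emph{Step 2 (finite length).} An Artinian ring has finite length as a module over itself. So $S$ has a composition series $0=S_0\subset\cdots\subset S_n=S$ with each factor $S_k/S_{k-1}\cong R/\mathfrak m_k$ for some maximal ideal $\mathfrak m_k$ of $R$ containing $(\mathbf f)$. Hence $|S|=\prod_k|R/\mathfrak m_k|$, and it remains to show that each residue field is finite.

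\emph{Step 3 (residue fields).} Each $R/\mathfrak m$ is a field that is finitely generated as a ring over $\mathbb Z_N$, hence over $\mathbb Z$: the generators are the images of $x_i^{\pm1}$. By the arithmetic Nullstellensatz (a field finitely generated as a $\mathbb Z$-algebra is finite), $R/\mathfrak m$ is finite. More concretely, $R/\mathfrak m$ has characteristic some prime $p\mid N$. It is then a field finitely generated as an $\mathbb F_p$-algebra, so by Zariski's lemma it is a finite extension of $\mathbb F_p$. This yields $|S|<\infty$, so $S$ is a finite $\mathbb Z_N$-module.

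The main obstacle is Step 1: one must justify carefully that a nonzerodivisor lies outside every minimal prime of $R_i$. The fact I will cite is the standard one that in a Noetherian ring minimal primes are associated primes, and the union of the associated primes is the set of zero divisors. With that, the remaining steps are routine commutative algebra.
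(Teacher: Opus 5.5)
Your proof is correct and follows essentially the same route as the paper: a Krull-dimension count showing that $S$ is a zero-dimensional Noetherian ring, hence Artinian, followed by finiteness of the residue fields (via Zariski's lemma) and a finite filtration with finite factors. Your version is slightly more careful than the paper's. You prove only the needed inequality $\dim R_{i+1}\le\dim R_i-1$, whereas the paper asserts an exact drop. You also use a composition series in place of the filtration by powers of the Jacobson radical.
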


\begin{proof}
	Assume the ideal $(\mathbf{f})$ is proper (i.e., $S \neq 0$), as the claim is trivial otherwise.

	The key observation is that quotienting by a proper regular sequence of length $D$ reduces the Krull dimension of $R$ by exactly $D$:
	\[
	\dim S = \dim R - D = D - D = 0.
	\]
	Because $R$ is Noetherian, its quotient $S$ is also Noetherian. Any zero-dimensional Noetherian ring is necessarily Artinian.

	By construction, $S$ is a finitely generated algebra over the finite ring $\mathbb{Z}_{N}$. Because $S$ is Artinian, the quotient $S/\operatorname{Jac}(S)$---where $\operatorname{Jac}(S)$ denotes the Jacobson radical---is a finite product of fields. Furthermore, because these fields are finitely generated algebras over the finite ring $\mathbb{Z}_N$, they must be finite fields. Finally, the filtration by powers of the radical, $\operatorname{Jac}(S)^k$, has finite successive quotients and must terminate in finitely many steps. Consequently, $S$ contains only a finite number of elements, making it a finite $\mathbb{Z}_{N}$-module.
\end{proof}

Combining Lemmas~\ref{lem:free-Zpm} and \ref{lem:finite-full-length-ZN} with Fact~\ref{fact:koszul-excitation} establishes the following fusion rules:

\begin{fact}[Fusion rules for full-length regular models]
	\label{fact:fusion-rules}
	For the full-length regular Koszul-complex model $\mathrm{KC}_j^{(D)}(\mathbf{f};\mathbb{Z}_{p^m})$, the nonzero superselection layers
	\[
	\operatorname{Ext}_R^j(\overline{\operatorname{coker}\varphi_X}, R) \quad \text{and} \quad \operatorname{Ext}_R^{D-j}(\overline{\operatorname{coker}\varphi_Z}, R)
	\]
	are isomorphic as Abelian groups to the Koszul structure module $S$. Furthermore, because $S$ is both finite and free over $\mathbb{Z}_{p^m}$, we have the $\mathbb Z_{p^m}$-module isomorphism
	\begin{equation}
		S \cong_{\mathbb Z_{p^m}} \mathbb{Z}_{p^m}^Q \label{eq:S}
	\end{equation}
	for some integer $Q \ge 0$. These topological fusion rules coincide exactly with those of $Q$ copies of the $D$-dimensional toric code $\mathrm{TC}_j^{(D)}(\mathbb{Z}_{p^m})$.
\end{fact}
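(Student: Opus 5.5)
The plan is to assemble the statement from three earlier results and then identify the toric-code comparison by specializing the same results to the toric sequence.

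First, Fact~\ref{fact:koszul-excitation} with $s=D$ and $R=\mathbb Z_{p^m}[\mathbb Z^D]$ gives the $R$-module isomorphisms $\operatorname{Ext}_R^j(\overline{\operatorname{coker}\varphi_X},R)\cong_R S$ and $\operatorname{Ext}_R^{D-j}(\overline{\operatorname{coker}\varphi_Z},R)\cong_R\overline S$. Forgetting the $R$-action yields Abelian-group isomorphisms. For the $Z$-sector, note that $\overline S$ is literally the same Abelian group as $S$, since conjugation changes only the module structure. Equivalently, $\overline S=R/(\overline{\mathbf f})$, and $x^{\mathbf v}\mapsto x^{-\mathbf v}$ is a ring automorphism carrying $(\mathbf f)$ onto $(\overline{\mathbf f})$. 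Hence both nonzero layers are isomorphic to $S$ as Abelian groups.

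Second, I invoke Lemma~\ref{lem:free-Zpm}, which makes $S$ free over $\mathbb Z_{p^m}$ because $\mathbf f$ is regular. I also invoke Lemma~\ref{lem:finite-full-length-ZN} with $N=p^m$, which makes $S$ finite because $\mathbf f$ has full length. A free $\mathbb Z_{p^m}$-module with finitely many elements has finite rank $Q$, with $|S|=p^{mQ}$, so $S\cong_{\mathbb Z_{p^m}}\mathbb Z_{p^m}^Q$. This is Eq.~\eqref{eq:S}. The case $S=0$ gives $Q=0$.

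Third, I compare with the toric code. Specializing Fact~\ref{fact:koszul-excitation} to $f_i=x_i-1$ gives $S_{\mathrm{TC}}=R/(x_1-1,\dots,x_D-1)\cong\mathbb Z_{p^m}$. The toric code $\mathrm{TC}_j^{(D)}(\mathbb Z_{p^m})$ therefore has its only nonzero layers in the same degrees $j$ and $D-j$, each equal to $\mathbb Z_{p^m}$. For $Q$ stacked copies the stabilizer maps are block-diagonal, so $\operatorname{coker}$ and hence $\operatorname{Ext}$ are additive over direct sums. The stack thus has layers $\mathbb Z_{p^m}^Q$ in exactly the degrees $j$ and $D-j$, which matches the model's layers as Abelian groups.

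I expect no serious obstacle. The only point needing care is to state clearly that the claim concerns Abelian-group (fusion) data only: the translation action on $S$ is generally nontrivial and is not asserted to agree with that of the stack. The other small check is the identification of $\overline S$ with $S$ as groups.
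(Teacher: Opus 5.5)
Your proposal is correct and follows the paper's own argument: the paper obtains this fact by combining Fact~\ref{fact:koszul-excitation}, which identifies the layers with $S$ and $\overline{S}$, with Lemma~\ref{lem:free-Zpm} for freeness and Lemma~\ref{lem:finite-full-length-ZN} for finiteness, and then compares with the toric-code value $R/(x_1-1,\dots,x_D-1)\cong\mathbb Z_{p^m}$. Your additional remarks — that $\overline S$ and $S$ coincide as Abelian groups, and that $\operatorname{Ext}$ is additive over stacked copies — only spell out steps the paper leaves implicit.
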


\subsubsection{Translation enrichment of regular models}

For a full-length regular Koszul-complex model,
Fact~\ref{fact:koszul-excitation} gives \(E_X^j\cong_R S\),
while all other positive-degree \(X\)-sector layers vanish. Thus the
construction of Sec.~\ref{subsec:set-order} specializes directly to \(S\):
\(\Lambda_{\mathrm{ss}}=\Lambda_{E_X^j}\), which we denote by
\(\Lambda_S\). Translation by \(\mathbf v\) acts on \(S\) as multiplication
by \([x^{\mathbf v}]\), so the directional mobility period \(J_a\) is the
smallest positive integer satisfying
\([x_a]^{J_a}=[1]\) in \(S\); it is finite whenever \(S\) is finite. The
\(Z\)-sector module is \(\overline S\); spatial inversion reverses the
translation action but leaves its mobility lattice and directional mobility periods
unchanged.

The rank \(Q\) of \(S\) fixes the fusion group
\(S\cong_{\mathbb Z_{p^m}}\mathbb Z_{p^m}^Q\), while multiplication by
\([x_1],\dots,[x_D]\) records the translation enrichment. A
\(\mathbb Z_{p^m}\)-basis of \(S\), the associated translation matrices,
the mobility-period vector \(\mathbf J\), and \(\Lambda_S\) can be computed algorithmically
using Gr\"obner-basis techniques~\cite{Chen2025PRL, He2026PRB}.

\section{Finite-size dependence of logical space}
\label{sec:finite-size-logical}

We have seen how translation SET order is reflected in the superselection sectors. In this section, we introduce the $\operatorname{Tor}$ formalism to demonstrate how the finite-size dependence of the logical space (or ground state degeneracy) serves as another direct physical manifestation of this symmetry enrichment.

Recall from Sec.~\ref{subsec:periodic-boundary-conditions} that a finite periodic system of size $\mathbf{L} = (L_1,\dots,L_D)$ is obtained via the base ring change
\begin{equation}
	R \longrightarrow R_{\mathbf{L}} \coloneqq R/(\mathbf{b}),
\end{equation}
where $\mathbf{b} \coloneqq (x_1^{L_1}-1,\dots,x_D^{L_D}-1)$ is the regular sequence encoding the periodic boundary conditions.

The relevant periodic Koszul complex is obtained by tensoring the infinite-lattice complex with $R_{\mathbf{L}}$, which enforces the periodic geometry by effectively replacing the base ring $R$.

\begin{definition}
	Following standard homological notation, for any sequence $\mathbf{f}$ and any $R$-module $M$, we denote the Koszul complex of $\mathbf{f}$ with coefficients in $M$ by
	\begin{equation}
		K_\bullet(\mathbf{f}; M) \coloneqq K_\bullet(\mathbf{f}) \otimes_R M.
	\end{equation}
\end{definition}

The base-changed complex for the periodic code is therefore naturally written as $K_\bullet(\mathbf{f}; R_{\mathbf{L}})$.

\begin{definition}[Periodic Koszul-complex code]
	We denote the finite periodic code associated with the Koszul-complex stabilizer model on a lattice of size $\mathbf{L}$ by $\mathrm{KC}_j^{(D,s)}(\mathbf{f};\mathbf{L})$, where $s$ is the length of $\mathbf{f}$ and $j$ denotes the specific homological degree of the model. For brevity, when $\mathbf{f}$ is clear from context, we omit it and write $\mathrm{KC}_j^{(D,s)}(\mathbf{L})$. In the full-length regular case where $s=D$, we drop the redundant index and denote the code simply by $\mathrm{KC}_j^{(D)}(\mathbf{L})$.
\end{definition}

\subsection{Tor formulation of logical module}

A CSS code has dual $X$- and $Z$-logical operators, so counting $Z$-logical operators suffices to compute the size of the logical space.
The module of such operators is the homology
\[
\mathcal{H}_j^Z(\mathbf{L})
\coloneqq
H_j\bigl(K_\bullet(\mathbf{f}; R_{\mathbf{L}}) \bigr)
=
\frac{\ker(\partial_j\otimes_R\operatorname{id}_{R_{\mathbf{L}}})}{\operatorname{im}(\partial_{j+1}\otimes_R \operatorname{id}_{R_{\mathbf{L}}})},
\]
called the \emph{$Z$-logical module} of $\mathrm{KC}_j^{(D,s)}(\mathbf{L})$.

As established in Sec.~\ref{sec:regular-koszul-complex-models}, the Koszul complex $K_\bullet(\mathbf{f})$ of a regular sequence $\mathbf{f}$ provides a free resolution of the Koszul structure module $S = R/(\mathbf{f})$.
The $\operatorname{Tor}$ functor is precisely the homology obtained by tensoring a free resolution with the second argument, hence:

\begin{fact}[Tor formulation of the logical module]
	\label{fact:tor-isomorphism}
	For a regular sequence $\mathbf{f}$,
	\begin{equation}
		\mathcal{H}_j^Z(\mathbf{L})
		\;\cong_{R_{\mathbf L}}\;
		\operatorname{Tor}_j^R\bigl(S, R_{\mathbf{L}}\bigr) = \operatorname{Tor}_j^R\bigl(R/(\mathbf{f}), R/(\mathbf{b})\bigr),
		\label{eq:L_Tor}
	\end{equation}
	where $S \coloneqq R/(\mathbf{f})$, $R_{\mathbf{L}} \coloneqq R/(\mathbf{b})$, and $\mathbf{b} \coloneqq (x_1^{L_1}-1,\dots,x_D^{L_D}-1)$.
\end{fact}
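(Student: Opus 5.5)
The identification is essentially the definition of $\operatorname{Tor}$ computed from a flat (indeed free) resolution of the first argument. I would proceed in three steps.

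\textbf{Step 1: a free resolution of $S$.} Since $\mathbf f$ is regular, Sec.~\ref{sec:regular-koszul-complex-models} (Eq.~\eqref{eq:koszul-homology}) shows that the augmented Koszul complex
\[
0\to K_s(\mathbf f)\to\cdots\to K_1(\mathbf f)\xrightarrow{\partial_1}K_0(\mathbf f)\to S\to0
\]
is exact. Every $K_i(\mathbf f)$ is finite-rank free, so $K_\bullet(\mathbf f)$ is a free resolution of $S$.

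\textbf{Step 2: tensor with $R_{\mathbf L}$.} By the definition of $\operatorname{Tor}$ as the left derived functor of $-\otimes_R R_{\mathbf L}$, evaluated on any projective resolution of the first argument,
\[
\operatorname{Tor}_j^R(S,R_{\mathbf L})\cong H_j\bigl(K_\bullet(\mathbf f)\otimes_R R_{\mathbf L}\bigr)=H_j\bigl(K_\bullet(\mathbf f;R_{\mathbf L})\bigr).
\]
I would also check that the periodic code's $Z$-complex really is $K_{j+1}\otimes R_{\mathbf L}\to K_j\otimes R_{\mathbf L}\to K_{j-1}\otimes R_{\mathbf L}$. The $Z$-stabilizer map is $\partial_{j+1}\otimes\operatorname{id}$ by Eq.~\eqref{eq:periodic-stabilizer-maps-general}. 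The relevant check map is $\varphi_X^\dagger\otimes\operatorname{id}=(\partial_j^\dagger)^\dagger\otimes\operatorname{id}=\partial_j\otimes\operatorname{id}$, using Eq.~\eqref{eq:periodic-dagger-compatibility} and the fact that the dagger is an involution on matrices. Hence $\mathcal H_j^Z(\mathbf L)$, as defined, is literally this homology.

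\textbf{Step 3: linearity.} The isomorphism is $R$-linear, since the derived-functor identification is natural in $R$-modules. Every module involved is killed by $(\mathbf b)$, so the isomorphism is automatically $R_{\mathbf L}$-linear. The second equality in Eq.~\eqref{eq:L_Tor} is just notation.

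\textbf{Expected obstacle.} The only delicate point is independence of the chosen resolution. Computing $\operatorname{Tor}$ from the Koszul resolution, rather than some other projective resolution, is justified by the standard comparison theorem: any two projective resolutions are chain-homotopy equivalent, and tensoring with $R_{\mathbf L}$ preserves this equivalence. I would cite this (e.g.\ Weibel) rather than reprove it.

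A check on the sign convention is also needed. The paper omits the literal $\mathbb Z_N$ minus sign on $\varphi_X^\dagger$. This changes a differential only by a unit, so neither the kernels nor the images change, and the homology is unaffected.
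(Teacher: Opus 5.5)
Your proposal is correct and follows the paper's own argument: the Koszul complex of the regular sequence $\mathbf f$ is a finite free resolution of $S$, and tensoring it with $R_{\mathbf L}$ gives $\operatorname{Tor}_j^R(S,R_{\mathbf L})\cong H_j(K_\bullet(\mathbf f;R_{\mathbf L}))$, which is $\mathcal H_j^Z(\mathbf L)$ by definition. Your extra checks are sound and make explicit what the paper leaves implicit: that $\varphi_X^\dagger=\partial_j$, so the periodic $Z$-complex is the tensored Koszul complex; that the isomorphism is $R_{\mathbf L}$-linear; and that the omitted sign is irrelevant.
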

\begin{proof}
	Because $\mathbf{f}$ is regular, the Koszul complex $K_\bullet(\mathbf{f})$ provides a free resolution of $S$. Computing the $\operatorname{Tor}$ functor via this resolution yields $\operatorname{Tor}_j^R(S,R_{\mathbf{L}}) \cong_{R_{\mathbf L}} H_j\bigl(K_\bullet(\mathbf{f}; R_{\mathbf{L}})\bigr)$, which is precisely $\mathcal{H}_j^Z(\mathbf{L})$ by definition.
\end{proof}

\subsection{Logical-excitation correspondence}

Applying the symmetry $\operatorname{Tor}_j^R(M,N)\cong_R \operatorname{Tor}_j^R(N,M)$ immediately yields a dual expression for the logical space.

\begin{fact}[Logical-excitation correspondence]
	\label{fact:logical-excitation-correspondence}
	For a regular sequence $\mathbf{f}$,
	\begin{equation}
		\mathcal{H}_j^Z(\mathbf{L})
		\;\cong_{R_{\mathbf L}}\;
		\operatorname{Tor}_j^R\bigl(R/(\mathbf{b}), R/(\mathbf{f})\bigr)
		\;\cong_{R_{\mathbf L}}\;
		H_j\bigl(K_\bullet(\mathbf{b}; S)\bigr),
		\label{eq:L_Tor_sym}
	\end{equation}
	where $\mathbf{b} \coloneqq (x_1^{L_1}-1,\dots,x_D^{L_D}-1)$ and $S \coloneqq R/(\mathbf{f})$.
\end{fact}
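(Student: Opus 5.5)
The plan is to chain Fact~\ref{fact:tor-isomorphism} with balancedness of $\operatorname{Tor}$, and then recompute the result using a Koszul resolution of the \emph{other} argument. By Fact~\ref{fact:tor-isomorphism} we already have $\mathcal H_j^Z(\mathbf L)\cong_{R_{\mathbf L}}\operatorname{Tor}_j^R(S,R_{\mathbf L})$. Since $R$ is commutative, the standard symmetry $\operatorname{Tor}_j^R(M,N)\cong_R\operatorname{Tor}_j^R(N,M)$ gives the first claimed isomorphism $\operatorname{Tor}_j^R(R/(\mathbf f),R/(\mathbf b))\cong\operatorname{Tor}_j^R(R/(\mathbf b),R/(\mathbf f))$. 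It is cleanest to realize this through the balancing double complex $K_\bullet(\mathbf f)\otimes_R K_\bullet(\mathbf b)$, which is the Koszul complex $K_\bullet(\mathbf f,\mathbf b)$ by the tensor-product remark.

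For the second isomorphism, I first check that $\mathbf b=(x_1^{L_1}-1,\dots,x_D^{L_D}-1)$ is $R$-regular. Each $x_a^{L_a}-1$ is monic in $x_a$ with unit constant term. Its image in $R/(x_1^{L_1}-1,\dots,x_{a-1}^{L_{a-1}}-1)\cong\mathbb Z_N[\mathbb Z_{L_1}\times\cdots\times\mathbb Z_{L_{a-1}}][x_a^{\pm1},\dots]$ is therefore a non-zero divisor. To see this, multiplying a nonzero Laurent polynomial in $x_a$ by a polynomial whose top and bottom coefficients are units cannot vanish, since one compares extreme $x_a$-degree terms. Consequently $K_\bullet(\mathbf b)$ is a free resolution of $R_{\mathbf L}$, and computing $\operatorname{Tor}_j^R(R_{\mathbf L},S)$ with it gives $H_j(K_\bullet(\mathbf b)\otimes_R S)=H_j(K_\bullet(\mathbf b;S))$.

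The main obstacle is the claim that all of these isomorphisms are $R_{\mathbf L}$-linear and not merely $R$-linear. Every $\operatorname{Tor}_j^R(M,N)$ is naturally an $R$-module. When $M$ or $N$ is annihilated by $(\mathbf b)$, the $R$-action factors through $R_{\mathbf L}$. Moreover, in $\operatorname{Tor}$ the actions of $r$ through either argument coincide, which follows from functoriality and the balancing argument. Hence multiplication by $x_a^{L_a}-1$ is zero on both sides, and $R$-linear isomorphisms between modules killed by $(\mathbf b)$ are automatically $R_{\mathbf L}$-linear. I would note this explicitly for $H_j(K_\bullet(\mathbf b;S))$, where $x_a^{L_a}-1$ acts null-homotopically through the Koszul contraction. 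This completes the argument.
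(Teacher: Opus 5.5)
Your proposal is correct and follows the same route as the paper: combine Fact~\ref{fact:tor-isomorphism} with the symmetry of $\operatorname{Tor}$, then compute $\operatorname{Tor}_j^R(R_{\mathbf L},S)$ by tensoring the Koszul resolution $K_\bullet(\mathbf b)$ of $R_{\mathbf L}$ with $S$. You also supply two details the paper simply asserts: an explicit check that $\mathbf b$ is regular, via the unit extreme coefficients of $x_a^{L_a}-1$, and the observation that every module involved is annihilated by $(\mathbf b)$, so the $R$-linear isomorphisms are automatically $R_{\mathbf L}$-linear.
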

\begin{proof}
	The first isomorphism is the natural symmetry of the Tor functor applied to Fact~\ref{fact:tor-isomorphism}. For the second, because the boundary sequence $\mathbf{b}$ is regular by construction, its Koszul complex $K_\bullet(\mathbf{b})$ provides a free resolution of $R/(\mathbf{b})$. Computing the Tor functor by tensoring this resolution with $S$ yields $H_j\bigl(K_\bullet(\mathbf{b}; S)\bigr)$.
\end{proof}

The symmetry of the Tor functor reveals a profound physical duality. By exchanging the roles of the regular sequences, we obtain an alternative computation of the $Z$-logical space that directly relates to the topological $\varphi_X$-excitations, which are labeled by
\[
\operatorname{Ext}_R^j(\overline{\operatorname{coker}\varphi_X},R) \;\cong_R\; S
\]
(i.e., the $\varphi_X$ superselection layer given in Fact~\ref{fact:koszul-excitation}).

\subsection{Relations among logical dimensions}
\label{subsec:relations-logical-modules}

Using the Chinese remainder decomposition in Eq.~\eqref{eq:CRT-Koszul-decomposition}, we restrict our analysis to $R = \mathbb{Z}_{p^m}[x_1^{\pm1},\dots,x_D^{\pm1}]$ without loss of generality throughout this subsection.

\subsubsection{Generic Koszul-complex codes}

We first derive universal relations for \emph{any} sequence $\mathbf{f}$ (not necessarily regular or full-length).
Let the length of the sequence be $s$; the target model is then $\mathrm{KC}_j^{(D,s)}$.
It is instructive to consider the entire family of $Z$-logical modules $\mathcal{H}_j^Z(\mathbf{L})$ for all degrees $0\le j\le s$ simultaneously. The extreme cases $j=0,s$ yield classical codes (repetition-code like) but still strictly obey the relations below.

The size of the logical space in degree $j$ is measured by the logical dimension
\begin{equation}
	k_j(\mathbf{L}) \;\coloneqq\; \log_{p}\bigl|\mathcal{H}_j^Z(\mathbf{L})\bigr|,
\end{equation}
which is an integer because every finite $\mathbb{Z}_{p^m}$-module has a cardinality that is a power of $p$.

\begin{fact}[Duality relation]
	For any sequence \(\mathbf{f}\) (not necessarily regular) of length $s$, any periodic size \(\mathbf{L}\), and any \(0 \le j \le s\), there is an isomorphism of \(R\)-modules:
	\begin{equation}
		\overline{\mathcal{H}_j^Z(\mathbf{L})^\ast} \;\cong_R\; \mathcal{H}_{s-j}^Z(\mathbf{L}),
		\label{eq:module-duality}
	\end{equation}
	where the overlined dual $\overline{(-)^\ast} = \overline{\operatorname{Hom}_{\mathbb{Z}_{p^m}}(-, \mathbb{Z}_{p^m})}$ is equipped with the $R$-action $(r * \lambda)(m) = \lambda(rm)$. Consequently,
	\begin{equation}
		k_j(\mathbf{L}) = k_{s-j}(\mathbf{L}).
		\label{eq:k_duality}
	\end{equation}
\end{fact}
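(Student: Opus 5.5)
The plan is to dualize the whole periodic Koszul complex $K_\bullet(\mathbf f;R_{\mathbf L})$ over $\mathbb Z_{p^m}$ and compare the result with the same complex in reversed degrees. Regularity of $\mathbf f$ is never used. The only inputs are that $R_{\mathbf L}$ is a finite Frobenius $\mathbb Z_{p^m}$-algebra and that the Koszul complex is self-dual.

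\textbf{Step 1 (trace identification).} Let $\operatorname{tr}\colon R_{\mathbf L}\to\mathbb Z_{p^m}$ take the coefficient of $1$. The group $\mathbb Z^D/\mathbf L\mathbb Z^D$ is finite, and $R_{\mathbf L}$ is its group ring, free of rank $|\mathbf L|$ on the monomials. Hence the pairing $(a,b)\mapsto\operatorname{tr}(ab)$ is perfect over $\mathbb Z_{p^m}$. For a finite free $R_{\mathbf L}$-module $F$ I will check that
\[
\operatorname{Hom}_{R_{\mathbf L}}(F,R_{\mathbf L})\longrightarrow\operatorname{Hom}_{\mathbb Z_{p^m}}(F,\mathbb Z_{p^m}),\qquad \mu\longmapsto\operatorname{tr}\circ\mu,
\]
is a bijection. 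On the right-hand side I use the action $(r*\lambda)(m)=\lambda(rm)$, which is exactly the action on $\overline{(-)^\ast}$ in the statement. With this choice the map is $R$-linear. It is also natural in $F$, so it identifies the $\mathbb Z_{p^m}$-dual complex of $K_\bullet(\mathbf f;R_{\mathbf L})$, with the $*$-action, with the ordinary $R_{\mathbf L}$-dual complex $K_\bullet(\mathbf f)^\vee\otimes_R R_{\mathbf L}$.

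\textbf{Step 2 (dual commutes with homology).} $\mathbb Z_{p^m}$ is self-injective, as already used in the proof of Theorem~\ref{thm:superselection}. Therefore $\operatorname{Hom}_{\mathbb Z_{p^m}}(-,\mathbb Z_{p^m})$ is exact and commutes with homology:
\[
\overline{\mathcal H^Z_j(\mathbf L)^\ast}\;\cong_R\;H^j\bigl(\overline{K_\bullet(\mathbf f;R_{\mathbf L})^\ast}\bigr)\;\cong_R\;H^j\bigl(K_\bullet(\mathbf f)^\vee\otimes_R R_{\mathbf L}\bigr).
\]
Both isomorphisms are $R$-linear for the $*$-action.

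\textbf{Step 3 (Koszul self-duality).} Next I invoke the self-duality $K_\bullet^\vee\cong_R K_{s-\bullet}$, already used in Fact~\ref{fact:koszul-excitation}. The Hodge-type isomorphism $\bigwedge^jE^\vee\cong\bigwedge^{s-j}E$ sends $e^{A}$ to $\pm e_{A^c}$. It intertwines $\partial_{j+1}^\vee$ with $\partial_{s-j}$ up to signs, and those signs can be absorbed by rescaling basis vectors by $\pm1$. This step is $R$-linear, involves no involution, and survives $-\otimes_R R_{\mathbf L}$. Hence $H^j(K^\vee\otimes R_{\mathbf L})\cong_R H_{s-j}(K_\bullet(\mathbf f;R_{\mathbf L}))=\mathcal H^Z_{s-j}(\mathbf L)$, which proves Eq.~\eqref{eq:module-duality}.

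\textbf{Step 4 (dimension count).} For Eq.~\eqref{eq:k_duality}, every finite $\mathbb Z_{p^m}$-module $M$ is a direct sum of cyclic groups $\mathbb Z_{p^a}$ with $a\le m$, and $\operatorname{Hom}(\mathbb Z_{p^a},\mathbb Z_{p^m})\cong\mathbb Z_{p^a}$. So $|M^\ast|=|M|$, and taking $\log_p$ gives $k_j(\mathbf L)=k_{s-j}(\mathbf L)$.

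The step I expect to need the most care is the bookkeeping in Steps 1 and 3. Two things must be checked: that the trace map is $R$-linear for the stated action, with no stray conjugation (a conjugation would give $\overline{\mathcal H}_{s-j}$ instead), and that the Koszul signs really do assemble into a chain isomorphism. I would verify the first directly on monomials, using $\operatorname{tr}(x^{\mathbf v}\mu(m))=\operatorname{tr}(\mu(x^{\mathbf v}m))$. For the second I would rely on the standard statement in \cite[Chap.~17]{eisenbud1995commutative}.
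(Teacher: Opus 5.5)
Your proposal is correct and is essentially the paper's argument made explicit: the paper's perfect degree-reversing pairing $(a,b)\mapsto\operatorname{tr}(a\wedge b)$ on $K_\bullet(\mathbf f;R_{\mathbf L})$ is exactly the composite of your trace identification (Step 1) and the Koszul self-duality (Step 3), and its descent to homology is your Step 2 via the self-injectivity of $\mathbb Z_{p^m}$. Your check that the trace map is linear for the $*$-action is precisely what ensures that no extra conjugation appears on $\mathcal H^Z_{s-j}(\mathbf L)$.
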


\begin{proof}
	The ring $R_{\mathbf{L}}$ is the group ring of a finite abelian group, hence a Frobenius $\mathbb{Z}_{p^m}$-algebra. The Koszul complex $K_\bullet(\mathbf{f}; R_{\mathbf{L}})$ carries a perfect degree-reversing pairing induced by the exterior product composed with the coefficient-of-identity map $\operatorname{tr}\colon R_{\mathbf{L}}\to\mathbb{Z}_{p^m}$. This pairing descends to homology and yields the stated isomorphism.
\end{proof}

\begin{fact}[Euler relation]
	\label{fact:euler-logical}
	The logical dimensions satisfy
	\begin{equation}
		\sum_{j=0}^{s} (-1)^j k_j(\mathbf{L}) = 0 . \label{eq:k_Euler}
	\end{equation}
\end{fact}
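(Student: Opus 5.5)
The plan is to read the alternating sum as the Euler characteristic of a finite complex of finite $\mathbb{Z}_{p^m}$-modules. Under the Chinese remainder reduction, the periodic Koszul complex $K_\bullet(\mathbf f;R_{\mathbf L})$ is a bounded complex
\[
0\to R_{\mathbf L}^{\binom{s}{s}}\to\cdots\to R_{\mathbf L}^{\binom{s}{1}}\to R_{\mathbf L}^{\binom{s}{0}}\to0 .
\]
Each term $R_{\mathbf L}^{\binom{s}{j}}$ is a finite set of cardinality $p^{m|\mathbf L|\binom{s}{j}}$, since $R_{\mathbf L}$ is free of rank $|\mathbf L|$ over $\mathbb Z_{p^m}$. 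By definition $\mathcal H_j^Z(\mathbf L)$ is the $j$-th homology of this complex, so $k_j(\mathbf L)=\log_p|H_j|$. No regularity of $\mathbf f$ is used, which matches the statement's claim of validity for arbitrary sequences.

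The key step is the multiplicativity of cardinality along the complex. Write $Z_j=\ker\partial_j$ and $B_j=\operatorname{im}\partial_{j+1}$, with all maps taken after tensoring with $R_{\mathbf L}$. The short exact sequences $0\to Z_j\to C_j\to B_{j-1}\to0$ and $0\to B_j\to Z_j\to H_j\to0$ give
\[
|C_j|=|Z_j|\,|B_{j-1}|, \qquad |H_j|=|Z_j|/|B_j| .
\]
Taking $\log_p$ and forming the alternating sum, the terms in $B$ telescope, because $B_{-1}=0$ and $B_s=0$. This yields
\[
\sum_j(-1)^j k_j(\mathbf L)=\sum_j(-1)^j\log_p|C_j| = m|\mathbf L|\sum_{j=0}^s(-1)^j\binom{s}{j}.
\]

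It remains to evaluate the binomial sum. For $s\ge1$ it equals $(1-1)^s=0$, which proves the relation. The only point needing care is the degenerate case $s=0$. There the complex is just $R_{\mathbf L}$ in degree $0$, so the sum equals $m|\mathbf L|\neq0$. I would note that the statement implicitly assumes $s\ge1$, since a sequence of length zero defines no stabilizer model, and state that assumption explicitly.

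As a cross-check, for odd $s$ the relation already follows from the duality $k_j=k_{s-j}$, because the terms pair up with opposite signs. For even $s$ the counting argument is genuinely needed. I expect no real obstacle; the only step needing attention is the bookkeeping of the boundary terms $B_{-1}$ and $B_s$ in the telescoping sum.
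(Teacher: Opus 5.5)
Your proposal is correct and follows essentially the same route as the paper. Both compute the Euler characteristic of $K_\bullet(\mathbf f;R_{\mathbf L})$ from its chain groups $R_{\mathbf L}^{\binom{s}{j}}$ and transfer it to homology by additivity of $\log_p|\cdot|$ on short exact sequences of finite modules; you simply spell that transfer out through the $Z_j$/$B_j$ telescoping, and your caveat that the binomial sum vanishes only for $s\ge1$ is a valid refinement the paper leaves implicit.
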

\begin{proof}
	The chain groups are $K_j(\mathbf{f}; R_{\mathbf{L}}) \cong_{R_{\mathbf L}} R_{\mathbf{L}}^{\binom{s}{j}}$.
	Hence
	\[
	\sum_{j=0}^{s} (-1)^j \log_{p}\bigl|K_j(\mathbf{f}; R_{\mathbf{L}})\bigr|
	= \log_{p}|R_{\mathbf{L}}| \sum_{j=0}^{s} (-1)^j \binom{s}{j} = 0 .
	\]
	Because \(\log_p |\cdot|\) is additive over short exact sequences of finite modules (i.e., \(\log_p|B| = \log_p|A| + \log_p|C|\) for \(0 \to A \to B \to C \to 0\)), the Euler characteristic is preserved when passing from the chain complex to its homology. This establishes Eq.~\eqref{eq:k_Euler}.
\end{proof}

\begin{remark}
	The duality and Euler relations determine the full set
	$\{k_j\}_{j=0}^s$ from its first $\lceil s/2\rceil$ members. For
	$s=2n$, duality leaves the middle value $k_n$, which the Euler relation
	fixes; for $s=2n+1$, the first $n+1$ values determine the rest, and the
	Euler relation is then automatic.
\end{remark}

For \(s=2\), Eqs.~\eqref{eq:k_duality} and~\eqref{eq:k_Euler} give
\(k_2(\mathbf L)=k_0(\mathbf L)\) and
\(k_1(\mathbf L)=2k_0(\mathbf L)\). Our earlier study of the fractal Ising
model associated with Haah's cubic code provides an explicit three-dimensional
check of this general Euler relation in the non-full-length case; see Eq.~(20)
of Ref.~\cite{Canossa2024Exotic} and the discussion below it.

\subsubsection{Full-length regular Koszul-complex codes}
\label{subsubsec:full-length-logicals}

We now specialize to models
\(\mathrm{KC}_j^{(D)}(\mathbf{f};\mathbf{L})\)
defined by a full-length regular sequence \(\mathbf f\), so that \(s=D\).
By Eq.~\eqref{eq:S}, the Koszul structure module
\(S\coloneqq R/(\mathbf f)\) is free of finite rank \(Q\) over
\(\mathbb Z_{p^m}\), with
\begin{equation}
	S\cong_{\mathbb Z_{p^m}}(\mathbb Z_{p^m})^Q,
	\qquad
	|S|=p^{mQ}.
	\label{eq:full-length-structure-size}
\end{equation}

The dual form of Eq.~\eqref{eq:L_Tor_sym} reads
\begin{equation}
	\mathcal H_j^Z(\mathbf L)
	\cong_R H_j\bigl(K_\bullet(\mathbf b;S)\bigr).
	\label{eq:LZ-full-length-KbS}
\end{equation}
Thus the finite-size logical module is determined entirely by the action of
the periodicity operators \(\mathbf b\) on the finite module \(S\).

\begin{fact}[Uniform upper bound]
	\label{fact:logical-upper-bound}
	For a full-length regular code
	\(\mathrm{KC}_j^{(D)}(\mathbf f;\mathbf L)\), the
	logical module obeys the system-size-independent bound
	\begin{equation}
		|\mathcal H_j^Z(\mathbf L)|
		\leq
		|S|^{\binom{D}{j}},
		\quad\text{equivalently}\quad
		k_j(\mathbf L)\leq mQ\binom{D}{j}.
		\label{eq:logical-upper-bound}
	\end{equation}
\end{fact}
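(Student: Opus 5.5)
The bound follows from the dual description in Eq.~\eqref{eq:LZ-full-length-KbS}, which identifies $\mathcal H_j^Z(\mathbf L)$ with $H_j\bigl(K_\bullet(\mathbf b;S)\bigr)$. I would first record this identification; it comes from Fact~\ref{fact:logical-excitation-correspondence} and needs $\mathbf f$ to be regular. Then I would write out the chain modules explicitly:
\[
K_j(\mathbf b;S)=K_j(\mathbf b)\otimes_R S\cong_R S^{\binom{D}{j}},
\]
since $K_j(\mathbf b)\cong_R R^{\binom{D}{j}}$ is free. Here the length of $\mathbf b$ equals $D$, which is why the binomial coefficient involves $D$.

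The homology in degree $j$ is a subquotient of this chain module, namely $\ker/\operatorname{im}$ inside $K_j(\mathbf b;S)$. Its cardinality is therefore at most $|K_j(\mathbf b;S)|=|S|^{\binom{D}{j}}$. This bound is finite and independent of $\mathbf L$, because Lemma~\ref{lem:finite-full-length-ZN} shows that $S$ is finite and depends only on $\mathbf f$.

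For the equivalent form, I would take $\log_p$ and use Eq.~\eqref{eq:full-length-structure-size}, which gives $|S|=p^{mQ}$. This yields
\[
k_j(\mathbf L)\le mQ\binom{D}{j}.
\]
There is no real obstacle. The only point to check is that the direction of the Tor symmetry is used correctly. The original form $H_j(K_\bullet(\mathbf f;R_{\mathbf L}))$ would only give the size-dependent bound $|R_{\mathbf L}|^{\binom{D}{j}}$. The swapped form puts the finite module $S$ in the coefficients, and that is what makes the bound uniform in $\mathbf L$.
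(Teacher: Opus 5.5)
Your proposal is correct and matches the paper's proof: both pass to the swapped description $\mathcal H_j^Z(\mathbf L)\cong H_j\bigl(K_\bullet(\mathbf b;S)\bigr)$ and bound the homology by its chain group $S^{\binom{D}{j}}$. The logarithmic form then follows from $|S|=p^{mQ}$, exactly as you say.
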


\begin{proof}
	The degree-\(j\) chain group of \(K_\bullet(\mathbf b;S)\) is
	\(S^{\binom{D}{j}}\). Its homology is a subquotient of this chain group,
	so it cannot contain more elements.
\end{proof}

Here the directional mobility periods defined in
Eq.~\eqref{eq:coordinate-periods} take the form
\begin{equation}
	J_a\coloneqq
	\min\bigl\{\ell\in\mathbb Z_{>0}\bigm|
	[x_a]^\ell=[1]\text{ in }S\bigr\}.
	\label{eq:regular-coordinate-periods}
\end{equation}
They are finite because \(S\) is finite and each \([x_a]\) is invertible.

\begin{fact}[GCD reduction]
	\label{fact:logical-gcd-reduction}
	Let \(g_a=\gcd(L_a,J_a)\) and
	\(\mathbf g=(g_1,\dots,g_D)\). Then
	\begin{equation}
		\mathcal H_j^Z(\mathbf L)
		\cong_R\mathcal H_j^Z(\mathbf g),
		\qquad\text{and hence}\qquad
		k_j(\mathbf L)=k_j(\mathbf g).
		\label{eq:logical-gcd-reduction}
	\end{equation}
\end{fact}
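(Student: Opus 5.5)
The plan is to work with the dual form \eqref{eq:LZ-full-length-KbS}, $\mathcal H_j^Z(\mathbf L)\cong_R H_j(K_\bullet(\mathbf b;S))$. Here $S$ is a finite module on which each $[x_a]$ acts as an automorphism $T_a$ of order exactly $J_a$. The differentials of $K_\bullet(\mathbf b;S)$ are built from the commuting endomorphisms $T_a^{L_a}-1$ of $S$. I will compare this with $K_\bullet(\mathbf g;S)$, where $\mathbf g$ stands for the sequence $(x_1^{g_1}-1,\dots,x_D^{g_D}-1)$. The idea is to write $x_a^{L_a}-1 = u_a\,(x_a^{g_a}-1)$ with $u_a$ acting invertibly on $S$, and then to use the fact that scaling the elements of a sequence by units gives an isomorphic Koszul complex.

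The key algebraic step goes as follows. Choose integers $\alpha,\beta$ with $\alpha L_a+\beta J_a=g_a$; since $g_a>0$ we may arrange $\alpha>0$ by adding multiples of $J_a$ to $\alpha$. Because $T_a^{J_a}=1$ on $S$, we get $T_a^{g_a}=T_a^{\alpha L_a}$. Hence $T_a^{g_a}-1=(T_a^{L_a}-1)\,P_\alpha(T_a^{L_a})$ with $P_\alpha(y)=1+y+\cdots+y^{\alpha-1}$. Conversely, $g_a\mid L_a$ gives $T_a^{L_a}-1=(T_a^{g_a}-1)\,P_{L_a/g_a}(T_a^{g_a})$. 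So the two endomorphisms $A=T_a^{L_a}-1$ and $B=T_a^{g_a}-1$ divide each other in the commutative subring $\mathbb Z_{p^m}[T_a]\subset\operatorname{End}(S)$. This alone does not give a unit multiple. I would therefore argue directly with $c\coloneqq P_{L_a/g_a}(T_a^{g_a})$, which satisfies $A=Bc$. Working over the finite commutative ring $\mathcal A=\mathbb Z_{p^m}[T_1^{\pm1},\dots,T_D^{\pm1}]\subseteq\operatorname{End}(S)$ (a quotient of $R$, hence Artinian), I would localize at each maximal ideal $\mathfrak m$ of $\mathcal A$, where $\mathcal A=\prod_{\mathfrak m}\mathcal A_{\mathfrak m}$. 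In each local factor there are two cases. If $B\in\mathfrak m$, then $c\equiv L_a/g_a \pmod{\mathfrak m}$, which is a unit precisely when $p\nmid L_a/g_a$; this looks problematic. If $B\notin\mathfrak m$, then $B$ and $A$ are both units and both Koszul factors are contractible locally.

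This case analysis is the main obstacle, and it forces an honest check of the $p$-part. The order $J_a$ of the unipotent component of $T_a$ is a power of $p$, and the invertibility of $c$ must come from the tame part. So I would refine the argument using the Jordan–Chevalley-type splitting $T_a=T_{a,s}T_{a,u}$ in the finite group $\langle T_a\rangle$, separating the $p'$-order semisimple part from the $p$-power order unipotent part. On a local factor where $B\in\mathfrak m$, the semisimple part is trivial modulo $\mathfrak m$, so $T_a$ is unipotent there with order $p^e$ dividing $J_a$. If $p\mid L_a/g_a$, then $p^e\mid g_a$, hence $T_a^{g_a}=1$ on that factor and $B=0$ there; likewise $A=0$. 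The two Koszul factors then coincide, as both are zero maps. Otherwise $c$ is a unit. In every case, on each local factor $\mathcal A_{\mathfrak m}S$, the factor complexes $K_\bullet(x_a^{L_a}-1;\,\cdot)$ and $K_\bullet(x_a^{g_a}-1;\,\cdot)$ are isomorphic as $R$-linear complexes: either via a unit rescaling, by both being zero, or by both being acyclic, in which case the homology vanishes for both.

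Finally, I will assemble the result. Using $K_\bullet(\mathbf b;S)\cong K_\bullet(b_1)\otimes\cdots\otimes K_\bullet(b_D)\otimes S$ and the decomposition $S=\bigoplus_{\mathfrak m}S_{\mathfrak m}$ into $\mathcal A$-local summands (which are $R$-submodules), I can replace each $b_a$ by $x_a^{g_a}-1$ one direction at a time. On each summand this gives an $R$-linear isomorphism of homology. The acyclic case is handled by noting that a Koszul complex containing a factor that acts invertibly is exact, on both sides. Summing over $\mathfrak m$ yields $\mathcal H_j^Z(\mathbf L)\cong_R\mathcal H_j^Z(\mathbf g)$, and taking cardinalities gives $k_j(\mathbf L)=k_j(\mathbf g)$.
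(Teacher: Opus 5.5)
Your reduction to $H_j(K_\bullet(\mathbf b;S))$ matches the paper's proof, and so do the two divisibility relations $A=Bc$ and $B=A\,P_\alpha(T_a^{L_a})$ and the final unit-rescaling of Koszul generators. The step that fails as written is the local case $B\in\mathfrak m$ with $p\mid L_a/g_a$. There you assert that the semisimple part of $T_a$ is trivial modulo $\mathfrak m$, so that $T_a$ is unipotent on that factor. But $B\in\mathfrak m$ only gives $T_a^{g_a}\equiv1\pmod{\mathfrak m}$, not $T_a\equiv1\pmod{\mathfrak m}$.

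Here is a counterexample. Take $p=2$, $D=2$, and the full-length regular sequence $\mathbf f=(1+x_1+x_1^2,\,x_2-1)$. Then $S\cong\mathbb F_4$ with $[x_1]=\omega$, so $J_1=3$. With $L_1=6$ you get $g_1=3$, $B=\omega^3-1=0$, and $p\mid L_1/g_1=2$. Yet $T_1$ has order $3$ and is not unipotent; this also shows $J_a$ itself need not be a power of $p$. The conclusion you actually need, $T_a^{g_a}=1$ on that factor, is nevertheless true, but it needs a different justification:
\begin{itemize}
\item Write $T_a=su$ with $s$ of order prime to $p$ and $u$ of $p$-power order.
\item $s^{g_a}=1$, because reduction modulo $\mathfrak m$ is injective on units of order prime to $p$: its kernel $1+\mathfrak m$ is a $p$-group.
\item $u^{g_a}=1$, because $p\mid L_a/g_a$ forces $v_p(g_a)=v_p(J_a)$, where $v_p$ is the $p$-adic valuation.
\end{itemize}

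More to the point, the Jordan--Chevalley detour is unnecessary. Your remark that mutual divisibility ``alone does not give a unit multiple'' is wrong in this setting. In a finite (hence Artinian) commutative ring, two elements generating the same principal ideal are associates, and this is exactly the one-line step the paper's proof uses. Your own local decomposition proves it directly. On each factor $\mathcal A_{\mathfrak m}$, write $B=Ad$; then $A=Adc$, so $A(1-dc)=0$.
\begin{itemize}
\item If $c\notin\mathfrak m$, then $c$ is a unit.
\item If $c\in\mathfrak m$, then $1-dc$ is a unit, which forces $A=0$ and $B=Ad=0$.
\end{itemize}
This handles all your cases at once. In particular, when $B\notin\mathfrak m$, $c=AB^{-1}$ is a unit, so no acyclicity argument is needed. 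The local units assemble into a global unit $w$ with $[x_a^{L_a}-1]=w\,[x_a^{g_a}-1]$ on $S$. Rescaling generators then gives $K_\bullet(\mathbf b;S)\cong_R K_\bullet(x_1^{g_1}-1,\dots,x_D^{g_D}-1;S)$, exactly as in the paper.
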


\begin{proof}
	Fix \(a\), and write \(u=[x_a]\), \(L=L_a\), \(J=J_a\), and
	\(g=g_a\). Since \(g\mid L\), the element \(u^L-[1]\) is divisible by
	\(u^g-[1]\). Conversely, choose \(r>0\) such that
	\(r(L/g)\equiv1\pmod{J/g}\). Since \(u^J=[1]\), we have
	\(u^g=(u^L)^r\), so \(u^g-[1]\) is divisible by \(u^L-[1]\).
	These elements therefore generate the same principal ideal in the finite
	subring of \(S\) generated by \(u\), and hence are associates. Applying
	this argument in every direction, the Koszul sequences for
	\(\mathbf L\) and \(\mathbf g\) differ entrywise by units. Rescaling the
	Koszul generators by these units gives a chain isomorphism and therefore
	the claimed homology isomorphism.
\end{proof}

\begin{corollary}[Periodicity]
	\label{cor:logical-periodicity}
	For each direction \(a\), \(k_j(\mathbf L)\) is \(J_a\)-periodic in
	\(L_a\):
	\begin{equation}
		k_j(L_1,\dots,L_a+J_a,\dots,L_D)
		=
		k_j(L_1,\dots,L_a,\dots,L_D).
		\label{eq:logical-coordinate-periodicity}
	\end{equation}
\end{corollary}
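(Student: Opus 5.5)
The plan is to derive the periodicity directly from the GCD reduction in Fact~\ref{fact:logical-gcd-reduction}, which has already done all the algebraic work. Fix a direction \(a\) and write \(\mathbf L'\) for the size vector obtained from \(\mathbf L\) by replacing \(L_a\) with \(L_a+J_a\), leaving all other coordinates unchanged. Both \(\mathbf L\) and \(\mathbf L'\) are admissible periodic sizes, since \(L_a+J_a>0\). Therefore Fact~\ref{fact:logical-gcd-reduction} applies to each, giving
\[
k_j(\mathbf L)=k_j(\mathbf g),\qquad k_j(\mathbf L')=k_j(\mathbf g'),
\]
with \(g_b=\gcd(L_b,J_b)\) and \(g'_b=\gcd(L'_b,J_b)\).

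The only step is then to compare \(\mathbf g\) and \(\mathbf g'\) coordinatewise. For \(b\neq a\) we have \(L'_b=L_b\), so \(g'_b=g_b\) trivially. For \(b=a\) we use the elementary identity \(\gcd(L_a+J_a,J_a)=\gcd(L_a,J_a)\), which follows from the Euclidean algorithm. This gives \(g'_a=g_a\), hence \(\mathbf g'=\mathbf g\), and so \(k_j(\mathbf L')=k_j(\mathbf g)=k_j(\mathbf L)\). In fact the argument yields the stronger module-level statement \(\mathcal H_j^Z(\mathbf L')\cong_R\mathcal H_j^Z(\mathbf L)\), since Fact~\ref{fact:logical-gcd-reduction} is itself an \(R\)-module isomorphism.

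There is no genuine obstacle here. The substantive input is already contained in Fact~\ref{fact:logical-gcd-reduction}, namely that \(u^{L}-[1]\) and \(u^{g}-[1]\) are associates in \(S\) whenever \(u^J=[1]\). The only point needing a moment's care is that the GCD reduction is applied to the whole vector, with all directions reduced simultaneously. Because the reduction in each direction depends only on \(L_b\) and \(J_b\), shifting one coordinate does not affect the others. Finiteness of each \(J_a\), guaranteed by the finiteness of \(S\) (Lemma~\ref{lem:finite-full-length-ZN}), ensures that the shift \(L_a\mapsto L_a+J_a\) is well defined.
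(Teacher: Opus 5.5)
Your proof is correct and follows the paper's route: the corollary is stated as an immediate consequence of the GCD reduction in Fact~\ref{fact:logical-gcd-reduction}, and the only additional input is \(\gcd(L_a+J_a,J_a)=\gcd(L_a,J_a)\), so \(\mathbf g\) is unchanged. Your module-level strengthening \(\mathcal H_j^Z(\mathbf L')\cong_R\mathcal H_j^Z(\mathbf L)\) also holds, because the GCD reduction is \(R\)-linear.
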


\begin{remark}[Toric-code saturation]
	\label{rem:toric-saturation}
	For the standard toric code,
	\(\mathbf f=(x_1-1,\dots,x_D-1)\), so
	\(S\cong\mathbb Z_{p^m}\) and every \([x_a]\) acts as the identity.
	Thus \(\mathbf b=0\) in \(S\), all Koszul differentials vanish, and
	\begin{equation}
		\mathcal H_j^Z(\mathbf L)
		\cong_R S^{\binom{D}{j}}
		\cong_{\mathbb Z_{p^m}}(\mathbb Z_{p^m})^{\binom{D}{j}},
		\qquad
		k_j(\mathbf L)=m\binom{D}{j}.
		\label{eq:toric-logical-saturation}
	\end{equation}
	This saturates the bound in Eq.~\eqref{eq:logical-upper-bound}.
\end{remark}

\begin{remark}[Topological frustration]
	\label{rem:topological-frustration}
	The finite torus can support fewer logical degrees of freedom than the
	maximum allowed by \(S\), so the bound in
	Eq.~\eqref{eq:logical-upper-bound} need not be saturated. Following
	Ref.~\cite{Chen2025PRL}, we call this deficit
	\emph{topological frustration}.
	For regular Koszul models, it is a translation-enriched finite-size effect
	determined by the action of the \([x_a]\) on \(S\), with the system-size
	dependence entering through the integers \(\gcd(L_a,J_a)\).
\end{remark}

In summary, the logical modules of a full-length regular Koszul code are
bounded independently of system size and depend on \(\mathbf L\) only through
\(\mathbf g=(\gcd(L_1,J_1),\dots,\gcd(L_D,J_D))\). Their bounded periodicity
is a finite-size signature of the rigid superselection structure of these
models.

\section{Topological rigidity and realizability of single-layer superselection profiles}
\label{sec:structure_module}

The preceding sections determined the superselection modules of regular
Koszul-complex models and showed that these modules control their finite-size
ground-state degeneracy. We now ask a stronger question: \emph{under what conditions does the superselection profile determine a translation-invariant topological CSS code up to a circuit-level transformation?}
We provide an affirmative answer for regular Koszul-complex models by developing a general rigidity-and-realizability theory for topological CSS codes with single-layer profiles.

Recall from Definition~\ref{def:superselection-profile} that the \(X\)-sector
profile \(\mathscr S_X\) is single-layer when at most one of its positive-degree
layers \(E_X^\ell\) is nonzero; this includes the identically vanishing profile.
Fact~\ref{fact:koszul-excitation} shows that regular Koszul-complex models have
this form.

The theory has complementary rigidity and realizability components, formulated
in Theorem~\ref{thm:topological-rigidity} and
Proposition~\ref{prop:single-degree-realizability}, respectively. We then derive
criteria for phase identification and minimal coarse graining and apply the
framework to regular Koszul-complex models. We begin by translating the
relevant physical operations into algebraic terms.

\subsection{Algebraic formulation of phase classification}

The standard equivalence relation for gapped quantum phases is characterized
by finite-depth local unitary (FDLU) circuits~\cite{Chen2010, Hastings2005,
Osborne2007, Verstraete2005}. Throughout, a \emph{Clifford circuit} means a
finite-depth circuit of Clifford gates with uniformly bounded geometric range.
We focus on phase identifications realizable by translation-invariant Clifford
circuits, allowing the addition or removal of product-state ancillas. 

\subsubsection{Translation-invariant Clifford circuits}
\label{subsubsec:TI-clifford}

For CSS-preserving operations, the basic circuit element is the two-qudit $\operatorname{CNOT}$ gate~\footnote{For $\mathbb Z_N$ qudits, this generalized CNOT is also commonly called the controlled-addition or qudit SUM gate; for $N=2$, it reduces to the usual qubit CNOT.}
($|z_1,z_2\rangle \mapsto |z_1, (z_1+z_2) \bmod N\rangle$), whose adjoint
action on the generalized Pauli operators is
\begin{equation}
	\begin{aligned}
		X \otimes I &\mapsto X \otimes X, \qquad & I \otimes X &\mapsto I \otimes X,\\
		Z \otimes I &\mapsto Z \otimes I, \qquad & I \otimes Z &\mapsto Z^{-1} \otimes Z .
	\end{aligned}
\end{equation}
Applying this gate to control-target pairs $(e_j, x^{\mathbf u} e_i)$ simultaneously at every lattice translate---with $i\neq j$ to keep the supports disjoint---yields a translation-invariant circuit layer. Because the operation is $R$-linear, it induces a module automorphism of the physical Pauli modules. In the canonical bases $\{e_i^X\}$ and $\{e_i^Z\}$, this automorphism is represented by the elementary polynomial matrices
\begin{equation}
	W_X = I + x^{\mathbf u} E_{i,j}, \qquad W_Z = I - x^{-\mathbf u} E_{j,i}, \label{eq:Wxz}
\end{equation}
where $E_{i,j}$ denotes the matrix with a single $1$ in the $(i,j)$ entry and zeros elsewhere. These matrices automatically satisfy the symplectic compatibility condition $W_X^\dagger W_Z = I$, which encodes preservation of the Pauli commutation relations.

The constraint  $W_Z = (W_X^\dagger)^{-1}$ implies that a general finite-depth CNOT circuit on $P_X \oplus P_Z$ is completely determined by its $X$-sector action, described by the \textbf{elementary linear group}
\begin{equation}
	\operatorname{E}_q(R) \coloneqq \bigl\langle\, I + rE_{i,j} \mid r\in R,\; i\neq j \,\bigr\rangle,
\end{equation}
which is generated by matrices $W_X$ of the form in Eq.~\eqref{eq:Wxz}, and hence includes matrices $I+rE_{i,j}$ as well as their finite products.

Mathematically, for any fixed $q$, the elementary group $\operatorname{E}_q(R)$ is only a \textit{proper} subgroup of the general linear group $\operatorname{GL}_q(R)$.
Furthermore, \(\operatorname{E}_2(R)\) is not basis-independent when \(q=2\). This presents an apparent algebraic hurdle: describing Clifford circuits requires this rigid subgroup, whereas standard mathematical classification allows arbitrary invertible \(R\)-linear changes of coordinates, encoded by $\operatorname{GL}_q(R)$.

Fortunately, we do not need to distinguish between these two groups when classifying topological phases. As we establish in the following section, we are rescued by the physical freedom to add unentangled ancillas. Through the algebraic mechanism of \textit{stabilization}, the basis-dependent restrictions of finite-depth circuits dissolve entirely, safely allowing us to utilize basis-independent mathematics.

\subsubsection{Stable equivalence of stabilizer maps}

In practice, a topological CSS code is completely specified by an $X$-sector stabilizer map
\begin{equation}
	\varphi_X \colon R^a \to R^q,
\end{equation}
acting between coordinate modules defined with respect to the canonical bases of the stabilizer and Pauli modules. The $Z$-sector is completely determined by the topological condition $\im \varphi_Z = \ker \varphi_X^\dagger$. Explicitly, $\varphi_X$ is represented by a polynomial matrix $\varphi_X \in M_{q \times a}(R)$, whose columns directly encode the Pauli configurations for a set of stabilizer generators.

The problem of classifying topological CSS codes therefore reduces to classifying matrices over $R$.

Standard topological equivalence allows adiabatic evolution as well as adding or removing trivial degrees of freedom. In terms of stabilizer matrices, these physically allowed operations act as the following matrix transformations:
\begin{enumerate}
	\item \textbf{Finite-depth Clifford circuit:} Applying a translation-invariant finite-depth CNOT circuit yields an automorphism $W_X \in \operatorname{E}_q(R)$, transforming the matrix as $\varphi_X \mapsto W_X \varphi_X$.
	\item \textbf{Stabilizer generator relabeling:} Applying an isomorphism $\Phi \in \operatorname{GL}_a(R)$ corresponds to a change of basis for the stabilizer generators, transforming the matrix as $\varphi_X \mapsto \varphi_X \Phi$.
	\item \textbf{Trivial generator addition:} Adding trivial generators $0_{0,\eta}\colon R^\eta \to 0$ leaves the stabilizer module unchanged, yielding $\varphi_X \mapsto \varphi_X \oplus 0_{0,\eta}$.
	\item \textbf{Trivial code addition ($Z$-polarized):} Stacking the lattice with unentangled ancilla qudits polarized in the $Z$-basis (which possess trivial $Z$-stabilizers but no $X$-stabilizers) adds a trivial block $0_{\beta,0}\colon 0 \to R^{\beta}$, yielding $\varphi_X \mapsto \varphi_X \oplus 0_{\beta,0}$.
	\item \textbf{Trivial code addition ($X$-polarized):} Stacking the lattice with unentangled ancilla qudits polarized in the $X$-basis adds a trivial block $I_{\alpha}\colon R^{\alpha} \to R^{\alpha}$, yielding $\varphi_X \mapsto \varphi_X \oplus I_{\alpha}$.
\end{enumerate}

Because translation invariance (encoded by $R$-linearity) is preserved during all of these matrix deformations, this framework classifies translation-invariant topological codes.

Formalizing these physical transformations leads to algebraic notions of equivalence for \(R\)-linear maps. First, restricting to operations 1 and 2---which preserve the number of physical qudits and stabilizer generators---yields the following equivalence.

\begin{definition}[Clifford \(R\)-equivalence]
	Let $\varphi, \varphi' \in M_{q \times a}(R)$. We call $\varphi$ and $\varphi'$ \textit{Clifford \(R\)-equivalent} and write $\varphi \sim_{R,\mathrm{Cl}} \varphi'$ if there exist
	\begin{equation}
		W \in \operatorname{E}_q(R), \qquad \Phi \in \operatorname{GL}_a(R)
	\end{equation}
	such that $W \varphi \Phi = \varphi'$.
\end{definition}

The ring subscript records the translation symmetry retained by the
comparison. Because
$\operatorname{E}_q(R) \subsetneq \operatorname{GL}_q(R)$, Clifford
\(R\)-equivalence is strictly stronger than \(R\)-equivalence.

\begin{definition}[\(R\)-equivalence]
	Let $\varphi, \varphi' \in M_{q \times a}(R)$. We call $\varphi$ and $\varphi'$ \textit{\(R\)-equivalent}, or \textit{equivalent over \(R\)}, and write $\varphi \sim_R \varphi'$ if there exist
	\begin{equation}
		\Xi \in \operatorname{GL}_q(R), \qquad \Phi \in \operatorname{GL}_a(R)
	\end{equation}
	such that $\Xi \varphi \Phi = \varphi'$.
\end{definition}

By incorporating the addition or removal of trivial generators and unentangled qudits (Operations 3, 4, and 5), we can extend these relations to matrices of different dimensions. Algebraically, this corresponds to taking direct sums with matrix blocks of the form $I_\alpha \oplus 0_{\beta, \eta}$ (where $0_{\beta, \eta}$ is the $\beta \times \eta$ zero matrix), representing trivial maps between free modules.

\begin{definition}[Free stable Clifford \(R\)-equivalence for matrices]
	Let $\varphi \in M_{q \times a}(R)$ and $\varphi' \in M_{q' \times a'}(R)$. We say that $\varphi$ and $\varphi'$ are \textit{freely stably Clifford \(R\)-equivalent}, and write $\varphi \sim_{R,\mathrm{Cl},\mathrm{st}} \varphi'$, if there exist non-negative integers $\alpha, \beta, \eta$ and $\alpha', \beta', \eta'$ such that
	\begin{equation}
		\varphi \oplus I_{\alpha} \oplus 0_{\beta, \eta} \sim_{R,\mathrm{Cl}} \varphi' \oplus I_{\alpha'} \oplus 0_{\beta', \eta'}.
	\end{equation}
\end{definition}

\begin{definition}[Free stable \(R\)-equivalence for matrices]
	Similarly, two matrices $\varphi$ and $\varphi'$ are \textit{freely stably \(R\)-equivalent}, denoted $\varphi \sim_{R,\mathrm{st}} \varphi'$, if there exist trivial blocks such that
	\begin{equation}
		\varphi \oplus I_{\alpha} \oplus 0_{\beta, \eta} \sim_R \varphi' \oplus I_{\alpha'} \oplus 0_{\beta', \eta'}.
	\end{equation}
\end{definition}

Physically, $\varphi_X \sim_{R,\mathrm{Cl},\mathrm{st}} \varphi_X'$ implies that the two topological CSS codes can be mapped into one another via a finite-depth translation-invariant Clifford circuit, up to the addition or removal of unentangled ancilla qudits and redundant stabilizer generators.

A remarkable consequence of this algebraic setup is that, in the freely stable limit, free stable Clifford \(R\)-equivalence coincides exactly with free stable \(R\)-equivalence, vastly simplifying the mathematical classification. This coincidence follows from the standard elementary factorization of the balanced block \(\operatorname{diag}(\Xi,\Xi^{-1})\), a consequence of Whitehead's lemma; see Milnor~\cite[Sec.~3]{milnor1971introduction} and Weibel~\cite[Chap.~III, Secs.~1.2--1.3]{weibel2013kbook}.

\begin{proposition}[Coincidence of algebraic and Clifford stable \(R\)-equivalence]
	\label{lem:stable-clifford-equivalence}
	Let $\varphi$ and $\varphi'$ be matrices over $R$. Then $\varphi$ and $\varphi'$ are freely stably \(R\)-equivalent ($\varphi \sim_{R,\mathrm{st}} \varphi'$) if and only if they are freely stably Clifford \(R\)-equivalent ($\varphi \sim_{R,\mathrm{Cl},\mathrm{st}} \varphi'$).
\end{proposition}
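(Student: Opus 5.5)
The reverse implication is immediate, because $\operatorname{E}_q(R)\subseteq\operatorname{GL}_q(R)$: every Clifford $R$-equivalence $W\varphi\Phi=\varphi'$ is in particular an $R$-equivalence. The same trivial blocks then witness free stable $R$-equivalence. All the work is in the forward direction. Suppose
\[
\Xi\,(\varphi\oplus I_\alpha\oplus 0_{\beta,\eta})\,\Phi=\varphi'\oplus I_{\alpha'}\oplus 0_{\beta',\eta'}
\]
with $\Xi\in\operatorname{GL}_n(R)$ and $\Phi\in\operatorname{GL}(R)$. We must replace $\Xi$ by an elementary matrix after adding further trivial blocks. The column factor $\Phi$ is unrestricted in Clifford $R$-equivalence and needs no treatment. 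Write $\psi$ and $\psi'$ for the two stabilized matrices, both $n$ rows tall.

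The key step is to stabilize once more by $I_n$ on both sides, i.e.\ append $n$ $X$-polarized ancillas (Operation~5). We then use the factorization
\[
\begin{pmatrix}\Xi&0\\0&\Xi^{-1}\end{pmatrix}
=\begin{pmatrix}I&\Xi\\0&I\end{pmatrix}
\begin{pmatrix}I&0\\-\Xi^{-1}&I\end{pmatrix}
\begin{pmatrix}I&\Xi\\0&I\end{pmatrix}
\begin{pmatrix}0&-I\\I&0\end{pmatrix},
\]
together with the standard fact that $\begin{pmatrix}0&-I\\I&0\end{pmatrix}$ is itself a product of three block-elementary matrices. Each block-unitriangular matrix $\begin{pmatrix}I&A\\0&I\end{pmatrix}$ is a finite product of matrices $I+rE_{i,j}$ with $i\neq j$, one for each entry of $A$. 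This places $\operatorname{diag}(\Xi,\Xi^{-1})$ in $\operatorname{E}_{2n}(R)$, which is Whitehead's lemma as cited.

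Applying this elementary matrix to $\psi\oplus I_n$ gives
\[
\operatorname{diag}(\Xi,\Xi^{-1})\,(\psi\oplus I_n)\,\operatorname{diag}(\Phi,\Xi)=\Xi\psi\Phi\oplus I_n=\psi'\oplus I_n .
\]
The right factor $\operatorname{diag}(\Phi,\Xi)$ lies in $\operatorname{GL}(R)$ and is therefore allowed as a generator relabeling (Operation~2). Hence $\psi\oplus I_n\sim_{R,\mathrm{Cl}}\psi'\oplus I_n$. Since $\psi\oplus I_n=\varphi\oplus I_{\alpha+n}\oplus 0_{\beta,\eta}$, up to a permutation of rows and columns, this is exactly free stable Clifford $R$-equivalence.

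The expected obstacle is bookkeeping rather than mathematics. First, the $I_n$ block must be inserted in a position compatible with the ordering in the definition, $\varphi\oplus I_\alpha\oplus 0_{\beta,\eta}$. The needed block permutation of rows is not itself in $\operatorname{E}$, since it may have determinant $-1$. I would handle this by using signed permutations, which are elementary (the $2\times2$ swap-with-sign is a product of three elementary matrices). The leftover sign can be absorbed on the column side into $\Phi$, or via the $I$ blocks, because $\operatorname{diag}(-1,1)$ acting on an identity block can be compensated by the column change $\operatorname{diag}(-1,1)$. I would also check that the zero block $0_{\beta,\eta}$ raises no issue: its rows simply join $\psi$, and its columns are absorbed into $\Phi$.
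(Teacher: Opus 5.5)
Your proposal is correct and is essentially the paper's proof: stabilize both sides by $I_n$, act on rows with $\operatorname{diag}(\Xi,\Xi^{-1})$, which lies in $\operatorname{E}_{2n}(R)$ by Whitehead's lemma, and on columns with $\operatorname{diag}(\Phi,\Xi)$. Your explicit factorization of $\operatorname{diag}(\Xi,\Xi^{-1})$, and your signed-permutation handling of where the new identity block sits (absorbing the leftover sign by a column flip on an identity row), fill in details the paper leaves implicit, and both check out.
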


\begin{proof}
		Since $\operatorname{E}_n(R)$ is a subgroup of $\operatorname{GL}_n(R)$ for any dimension $n$, every free stable Clifford \(R\)-relation automatically implies a free stable \(R\)-equivalence relation.

		Conversely, assume $\varphi \sim_{R,\mathrm{st}} \varphi'$. By definition, there exist suitably padded matrices $\tilde{\varphi}$ and $\tilde{\varphi}'$ (obtained by adding trivial identity and zero blocks to $\varphi$ and $\varphi'$, respectively) of the same dimensions $\tilde{q} \times \tilde{a}$, and invertible matrices $\Xi \in \operatorname{GL}_{\tilde{q}}(R)$ and $\Phi \in \operatorname{GL}_{\tilde{a}}(R)$ such that
	\begin{equation}
		\Xi \tilde{\varphi} \Phi = \tilde{\varphi}'.
	\end{equation}
	We stabilize both sides further by taking the direct sum with the identity matrix $I_{\tilde{q}}$, yielding the matrices $\tilde{\varphi} \oplus I_{\tilde{q}}$ and $\tilde{\varphi}' \oplus I_{\tilde{q}}$. We may then define the block diagonal transformation matrices
	\begin{equation}
		W := \operatorname{diag}(\Xi, \Xi^{-1})
		\qquad \text{and} \qquad
		\tilde{\Phi} := \operatorname{diag}(\Phi, \Xi).
	\end{equation}
	It is straightforward to verify that applying these transformations yields
	\begin{equation}
		W (\tilde{\varphi} \oplus I_{\tilde{q}}) \tilde{\Phi}
		= \operatorname{diag}(\Xi \tilde{\varphi} \Phi,\; I_{\tilde{q}})
		= \tilde{\varphi}' \oplus I_{\tilde{q}}.
	\end{equation}
	By directly invoking Whitehead's Lemma, the block-diagonal matrix $W = \operatorname{diag}(\Xi, \Xi^{-1})$ admits an elementary factorization, forcing it strictly into the elementary group $\operatorname{E}_{2\tilde{q}}(R)$. Since the right transformation matrix $\tilde{\Phi}$ remains invertible in $\operatorname{GL}_{\tilde{a}+\tilde{q}}(R)$, the extended matrices $\tilde{\varphi} \oplus I_{\tilde{q}}$ and $\tilde{\varphi}' \oplus I_{\tilde{q}}$ satisfy the Clifford \(R\)-equivalence relation ($\sim_{R,\mathrm{Cl}}$). Because these matrices are constructed purely by adding trivial blocks to $\varphi$ and $\varphi'$, it follows that $\varphi$ and $\varphi'$ are freely stably Clifford \(R\)-equivalent ($\sim_{R,\mathrm{Cl},\mathrm{st}}$).
\end{proof}

Consequently, the phase identification of
topological CSS codes is captured by the mathematically simpler relation of
free stable \(R\)-equivalence (\(\sim_{R,\mathrm{st}}\)). This reduction provides a conceptual advantage: free stable Clifford \(R\)-equivalence
(\(\sim_{R,\mathrm{Cl},\mathrm{st}}\)) is formulated in terms of elementary operations in a chosen basis of the physical Pauli module, whereas \(\sim_{R,\mathrm{st}}\) permits arbitrary invertible changes of coordinates.
We may therefore work basis-independently and freely pass between a concrete matrix and the abstract \(R\)-linear map it represents. Henceforth, we use ``matrix over \(R\)'' and ``\(R\)-linear map'' interchangeably.

\subsubsection{Free stable \(R\)-isomorphism of modules}

Although \(\sim_{R,\mathrm{st}}\) is basis-independent,
it is still expressed in terms of presentation maps that retain explicit choices of stabilizer generators.  Passing to their
cokernels recasts the problem in module-theoretic language and makes the
tools of homological algebra available.  For a topological CSS code with
stabilizer maps $\varphi_X$ and $\varphi_Z$, the cokernels
$\operatorname{coker}\varphi_X=P_X/\operatorname{im}\varphi_X$ and
$\operatorname{coker}\varphi_Z=P_Z/\operatorname{im}\varphi_Z$ describe
$X$- and $Z$-type Pauli operators modulo local stabilizers.  We call them the
\emph{reduced Pauli-$X$} and \emph{reduced Pauli-$Z$ modules}, respectively.

\begin{definition}[Free stable \(R\)-isomorphism for modules]
	Two $R$-modules $M$ and $N$ are \emph{freely stably \(R\)-isomorphic}, written $M \cong_{R,\mathrm{st}} N$, if there exist integers $m,n \ge 0$ such that $M \oplus R^m \cong_R N \oplus R^n$.
\end{definition}

\begin{proposition}[Coincidence of matrix and module stable relations]
	\label{prop:stable-module-matrix}
	For matrices $\varphi \colon R^a \to R^q$ and $\varphi' \colon R^{a'} \to R^{q'}$ over $R$, the following are equivalent:
	\begin{enumerate}
			\item The matrices $\varphi$ and $\varphi'$ are freely stably \(R\)-equivalent ($\varphi \sim_{R,\mathrm{st}} \varphi'$).
		\item The modules $\operatorname{coker}\varphi$ and $\operatorname{coker}\varphi'$ are freely stably \(R\)-isomorphic ($\operatorname{coker}\varphi \cong_{R,\mathrm{st}} \operatorname{coker}\varphi'$).
	\end{enumerate}
\end{proposition}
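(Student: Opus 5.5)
The direction (1)\(\Rightarrow\)(2) is direct. The direction (2)\(\Rightarrow\)(1) needs a matrix-level version of Schanuel's lemma.

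\textbf{(1)\(\Rightarrow\)(2).} Suppose
\(\varphi\oplus I_\alpha\oplus 0_{\beta,\eta}\sim_R\varphi'\oplus I_{\alpha'}\oplus 0_{\beta',\eta'}\).
The cokernel of a direct sum is the direct sum of cokernels. Moreover \(\operatorname{coker} I_\alpha=0\) and \(\operatorname{coker}0_{\beta,\eta}=R^\beta\). Finally, \(\Xi\psi\Phi=\psi'\) with \(\Xi,\Phi\) invertible induces an isomorphism \(\operatorname{coker}\psi\cong_R\operatorname{coker}\psi'\), since \(\Xi\) carries \(\operatorname{im}\psi\) onto \(\operatorname{im}\psi'\). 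Together these give \(\operatorname{coker}\varphi\oplus R^\beta\cong_R\operatorname{coker}\varphi'\oplus R^{\beta'}\).

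\textbf{(2)\(\Rightarrow\)(1).} I would proceed in three steps.

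First, absorb the free summands into the matrices. Since \(\operatorname{coker}(\varphi\oplus 0_{m,0})=\operatorname{coker}\varphi\oplus R^m\), we may replace \(\varphi\) by \(\varphi\oplus 0_{m,0}\) and \(\varphi'\) by \(\varphi'\oplus 0_{n,0}\). After this we may assume \(M\coloneqq\operatorname{coker}\varphi\cong_R\operatorname{coker}\varphi'\) via some isomorphism \(\theta\).

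Second, compare the two presentations through the surjections \(\pi\colon R^q\to M\) and \(\pi'=\theta^{-1}\circ(\text{projection})\colon R^{q'}\to M\). Because \(R^q\) is free, \(\pi\) lifts through \(\pi'\) to \(A\colon R^q\to R^{q'}\) with \(\pi' A=\pi\). Symmetrically we obtain \(B\colon R^{q'}\to R^q\) with \(\pi B=\pi'\). Then the block matrix
\begin{equation*}
\Xi\coloneqq\begin{pmatrix} I_q & -B\\ A & I_{q'}-AB\end{pmatrix}
\end{equation*}
is invertible over \(R\). It factors as a product of elementary block matrices \(\begin{pmatrix} I&0\\ A&I\end{pmatrix}\begin{pmatrix} I&-B\\ 0&I\end{pmatrix}\), up to sign conventions I will check. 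I will then verify that \(\Xi\) carries \(\operatorname{im}(\varphi\oplus I_{q'})\) onto \(\operatorname{im}(I_q\oplus\varphi')\).

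The key containments come from the lifting identities. We have \(\pi'(A-\text{anything in }\operatorname{im}\varphi')\) compatibility, and \(\pi(I-BA)=0\), so \((I-BA)\) maps \(R^q\) into \(\operatorname{im}\varphi\). Similarly \((I-AB)\) maps \(R^{q'}\) into \(\operatorname{im}\varphi'\), and \(A\operatorname{im}\varphi\subseteq\operatorname{im}\varphi'\). These facts let me write \(\Xi(\varphi\oplus I_{q'})=(I_q\oplus\varphi')\,C\) and \(\Xi^{-1}(I_q\oplus\varphi')=(\varphi\oplus I_{q'})\,C'\) for suitable matrices \(C,C'\) over \(R\). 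Both \(C\) and \(C'\) are obtained by choosing preimages, which is possible because the relevant columns lie in the images and \(R^a,R^{a'}\) are free.

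Third, upgrade "same image" to right equivalence. The conclusion so far is that \(\psi\coloneqq\Xi(\varphi\oplus I_{q'})\) and \(\psi'\coloneqq I_q\oplus\varphi'\) are two matrices with the same column span, i.e., \(\psi=\psi'C\) and \(\psi'=\psi C''\). I expect this step to be the main obstacle: two generating sets of the same submodule are generally not related by an invertible \(\Phi\) without padding. I would handle it with the standard trick using zero columns. On \(\psi\oplus 0\) and \(\psi'\oplus 0\), padded so both have \(\text{cols}(\psi)+\text{cols}(\psi')\) columns, the matrices
\begin{equation*}
\begin{pmatrix} I & C''\\ 0 & I\end{pmatrix},\qquad \begin{pmatrix} I&0\\ -C&I\end{pmatrix}
\end{equation*}
transform \((\psi\;|\;0)\) into \((\psi\;|\;\psi')\). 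They then transform \((\psi\;|\;\psi')\) into \((0\;|\;\psi')\) after a column swap, and by symmetry this gives \(\psi\oplus 0_{0,\text{cols}(\psi')}\sim_R\psi'\oplus 0_{0,\text{cols}(\psi)}\). Here the zero columns are exactly trivial generators \(0_{0,\eta}\).

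Assembling the three steps, \(\varphi\oplus 0_{m,0}\oplus I_{q'}\oplus 0_{0,\eta}\) and \(\varphi'\oplus 0_{n,0}\oplus I_q\oplus 0_{0,\eta'}\) are \(R\)-equivalent, up to a permutation of blocks, which is itself an element of \(\operatorname{GL}\). This establishes \(\varphi\sim_{R,\mathrm{st}}\varphi'\).
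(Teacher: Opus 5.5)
Your proof is correct and follows essentially the same route as the paper's proof in Appendix~\ref{app:stable-module-matrix-proof}. Your automorphism $\Xi=\begin{pmatrix} I&0\\ A&I\end{pmatrix}\begin{pmatrix} I&-B\\ 0&I\end{pmatrix}$ is exactly the composite of the two elementary row operations that take $\varphi\oplus I_{q'}$ and $I_q\oplus\varphi'$ to the paper's matrices $\varphi_1$ and $\varphi_1'$ (with the paper's $C=B$ and $D=A$), which the paper phrases as the two presentations sharing a column space; your zero-column padding in step three is the paper's $C_0=[\varphi_1\ \varphi_1']$ trick, and the facts you actually use ($\pi(I-BA)=0$, $\pi'(I-AB)=0$, $A\operatorname{im}\varphi\subseteq\operatorname{im}\varphi'$ and its symmetric counterpart) suffice despite the garbled sentence about ``$\pi'(A-\dots)$ compatibility.''
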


\begin{proof}
	The implication $(1) \Rightarrow (2)$ follows because free stable \(R\)-equivalence allows padding $\varphi$ and $\varphi'$ with identity blocks ($I_\beta$) or zero blocks ($0_{\alpha, \eta}$), which distributes over the cokernel and maps directly to adding free module summands ($R^\alpha$) to the cokernel.

	The reverse implication $(2) \Rightarrow (1)$ requires an explicit algebraic construction: adjoining new generators and relations to build extended block matrices, and applying elementary row and column operations to establish strict \(R\)-equivalence. We defer this constructive proof to Appendix~\ref{app:stable-module-matrix-proof}.
\end{proof}

\begin{remark}[Stable classification dictionary]
	\label{rem:physical-dictionary-modules}
	For topological CSS stabilizer maps $\varphi_X$ and $\varphi_X'$,
	Propositions~\ref{prop:stable-module-matrix} and
	\ref{lem:stable-clifford-equivalence} give
	\begin{equation*}
		\begin{aligned}
			\operatorname{coker}\varphi_X
			\cong_{R,\mathrm{st}}
			\operatorname{coker}\varphi_X'
			&\;\Longleftrightarrow\;
			\varphi_X\sim_{R,\mathrm{st}}\varphi_X'\\
			&\;\Longleftrightarrow\;
			\varphi_X\sim_{R,\mathrm{Cl},\mathrm{st}}\varphi_X'.
		\end{aligned}
	\end{equation*}
	The last relation means that the two codes are connected by a
	translation-invariant finite-depth Clifford circuit after allowing
	unentangled ancillas and changes of stabilizer generators.
\end{remark}

\subsection{Rigidity and realizability}

Equipped with this algebraic dictionary, we now address the core questions of uniqueness (rigidity) and existence (realizability) for single-layer superselection profiles in specifying translation SET orders.

\subsubsection{Rigidity theorem}

Topological exactness fixes the $Z$-stabilizer submodule via $\mathrm{im}\,\varphi_Z = \ker\varphi_X^\dagger$, reducing the analysis entirely to the $X$-sector. Crucially, we observe that a generalized version of Schanuel's lemma provides a powerful tool for classifying topological codes, once recast in terms of maps and translated into physics using Propositions~\ref{lem:stable-clifford-equivalence} and~\ref{prop:stable-module-matrix}. The lemma reads as follows.

\begin{lemma}[Generalized Schanuel's lemma for maps]
    \label{lem:generalized-schanuel}
    Let $R$ be a unital commutative ring, let $n \ge 1$, and let $N$ be an $R$-module. Suppose
    \begin{equation}
        F_n \xrightarrow{\alpha_n} F_{n-1} \to \cdots \to F_0 \to N \to 0
    \end{equation}
    and
    \begin{equation}
        G_n \xrightarrow{\beta_n} G_{n-1} \to \cdots \to G_0 \to N \to 0
    \end{equation}
    are two sequences of finitely generated free $R$-modules, exact except possibly at $F_n$ and $G_n$.
    Then the leftmost maps
	\begin{equation}
		\alpha_n \colon F_n \to F_{n-1},
		\qquad
		\beta_n \colon G_n \to G_{n-1}
	\end{equation}
	are freely stably \(R\)-equivalent ($\alpha_n \sim_{R,\mathrm{st}} \beta_n$).
\end{lemma}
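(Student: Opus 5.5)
The plan is to reduce the statement to the module-level criterion of Proposition~\ref{prop:stable-module-matrix}. It suffices to show that $\operatorname{coker}\alpha_n\cong_{R,\mathrm{st}}\operatorname{coker}\beta_n$. Proposition~\ref{prop:stable-module-matrix} then gives $\alpha_n\sim_{R,\mathrm{st}}\beta_n$ directly. The cokernels of the leftmost maps are (up to a free shift) syzygy modules of $N$, and the classical generalized Schanuel lemma compares such syzygies.

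\textbf{Step 1: identify the cokernels.} For $n=1$, exactness at $F_0$ gives $\operatorname{im}\alpha_1=\ker(F_0\to N)$. Surjectivity of $F_0\to N$ then gives $\operatorname{coker}\alpha_1\cong_R N$. Likewise $\operatorname{coker}\beta_1\cong_R N$, so the two cokernels are already isomorphic and we are done.

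For $n\ge2$, exactness at $F_{n-1}$ gives $\operatorname{im}\alpha_n=\ker\alpha_{n-1}$. Hence
\[
\operatorname{coker}\alpha_n=F_{n-1}/\ker\alpha_{n-1}\cong_R\operatorname{im}\alpha_{n-1}=:K_{n-2}\subseteq F_{n-2}.
\]
Here $K_{n-2}$ is the kernel of $F_{n-2}\to F_{n-3}$, where we set $F_{-1}:=N$. Similarly $\operatorname{coker}\beta_n\cong_R L_{n-2}$, the corresponding kernel in the $G$-sequence. So both cokernels are the $(n-1)$-st syzygy modules of $N$ computed from the two partial resolutions.

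\textbf{Step 2: Schanuel's lemma by induction.} The base case compares $0\to K_0\to F_0\to N\to0$ with $0\to L_0\to G_0\to N\to0$. The pullback $P=F_0\times_N G_0$ surjects onto $F_0$ with kernel $L_0$, and onto $G_0$ with kernel $K_0$. Both sequences split because $F_0$ and $G_0$ are projective, so $K_0\oplus G_0\cong P\cong L_0\oplus F_0$.

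For the inductive step, assume $K_{i-1}\oplus \mathcal F'\cong L_{i-1}\oplus\mathcal G'$ with $\mathcal F',\mathcal G'$ finitely generated free. Then
\[
0\to K_i\to F_i\oplus\mathcal G'\to K_{i-1}\oplus\mathcal G'\to0,
\qquad
0\to L_i\to G_i\oplus\mathcal F'\to L_{i-1}\oplus\mathcal F'\to0
\]
are exact, with identity maps on the added free summands. Their targets are isomorphic by the inductive hypothesis. The base-case argument therefore gives $K_i\oplus G_i\oplus\mathcal F'\cong L_i\oplus F_i\oplus\mathcal G'$. Iterating up to $i=n-2$ yields
\[
K_{n-2}\oplus G_{n-2}\oplus F_{n-3}\oplus\cdots\cong_R L_{n-2}\oplus F_{n-2}\oplus G_{n-3}\oplus\cdots,
\]
with alternating summands on each side. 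All added summands are finite-rank free modules, so $K_{n-2}\cong_{R,\mathrm{st}}L_{n-2}$.

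\textbf{Step 3: conclude.} Combining Steps 1 and 2 gives $\operatorname{coker}\alpha_n\cong_{R,\mathrm{st}}\operatorname{coker}\beta_n$, and Proposition~\ref{prop:stable-module-matrix} then yields $\alpha_n\sim_{R,\mathrm{st}}\beta_n$.

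Exactness at $F_n$ and $G_n$ is never used, which matches the hypothesis. Finite generation is needed so that the free summands are of the form $R^m$, as required by the definition of $\cong_{R,\mathrm{st}}$.

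The main point of care is the bookkeeping in the inductive pullback step. In particular, one must ensure the stabilizing summands remain free (not merely projective). This holds automatically here because every added summand is one of the $F_i$ or $G_i$.
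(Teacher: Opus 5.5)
Your proposal is correct and follows the paper's proof. You identify $\operatorname{coker}\alpha_n$ and $\operatorname{coker}\beta_n$ with $(n-1)$-st syzygies of $N$, compare them by the pullback-and-induction form of the generalized Schanuel lemma exactly as in the paper's appendix, and return to matrices via Proposition~\ref{prop:stable-module-matrix}. One bookkeeping slip: in your inductive step the padding is swapped. You should pad $F_i$ with $\mathcal F'$ and $G_i$ with $\mathcal G'$, so that the targets are $K_{i-1}\oplus\mathcal F'$ and $L_{i-1}\oplus\mathcal G'$, which the hypothesis identifies; this gives $K_i\oplus G_i\oplus\mathcal G'\cong L_i\oplus F_i\oplus\mathcal F'$ and hence the alternating formula you state.
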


\begin{proof}
	The cokernels of $\alpha_n$ and $\beta_n$ are two $(n-1)$-st syzygies of $N$. The generalized Schanuel lemma for modules, reviewed in Appendix~\ref{app:schanuel-lemma} (see also Lam~\cite[\S~5A, Cor.~5.5]{Lam1999}), implies that these cokernels are freely stably \(R\)-isomorphic. Proposition~\ref{prop:stable-module-matrix} then implies $\alpha_n \sim_{R,\mathrm{st}} \beta_n$.
\end{proof}

This resolution-independence naturally leads to the algebraic foundation of our rigidity result, which we state below in both coordinate-free module form and explicit matrix form.

\begin{proposition}[Single-layer rigidity: module form]
	\label{prop:isolated-ext-module}
	\label{thm:isolated-ext-module}
	\label{thm:stable-determination-isolated-ext}
	Let $R$ be a unital commutative Noetherian ring such that every finitely generated projective $R$-module is free. Let $M$ be a finitely generated $R$-module admitting a finite-length free resolution. Fix $j \ge 1$, and assume that the positive-degree Ext modules of $M$ are concentrated in degree $j$, meaning that
	\begin{equation}
		\operatorname{Ext}_R^\ell(M,R)=0
		\qquad (\ell>0,\ \ell\neq j).
	\end{equation}
	Set $E\coloneqq \operatorname{Ext}_R^j(M,R)$, and let
	\begin{equation}
		G_j\xrightarrow{\partial_j}G_{j-1}
		\longrightarrow\cdots\longrightarrow
		G_0\longrightarrow E\longrightarrow 0
	\end{equation}
	be any partial resolution of $E$ by finitely generated free
	$R$-modules, exact except possibly at $G_j$. Then
	\begin{equation}
		M\cong_{R,\mathrm{st}}\operatorname{coker}(\partial_j^\vee).
	\end{equation}
\end{proposition}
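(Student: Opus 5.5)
The plan is to dualize a finite free resolution of $M$, observe that the dual complex is itself a free resolution of $E$ (up to a shift and a free kernel at the bottom), and then apply the generalized Schanuel lemma (Lemma~\ref{lem:generalized-schanuel}) to compare it with the given partial resolution $G_\bullet$. Dualizing back will then identify $M$ stably with $\operatorname{coker}(\partial_j^\vee)$.

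\textbf{Step 1 (dual complex).} Fix a finite free resolution $0\to F_n\xrightarrow{d_n}\cdots\xrightarrow{d_1}F_0\to M\to0$ by finitely generated free modules. Apply $(-)^\vee=\operatorname{Hom}_R(-,R)$ to obtain
\[
0\to F_0^\vee\xrightarrow{d_1^\vee}F_1^\vee\to\cdots\to F_n^\vee\to0,
\]
whose cohomology in positive degrees is $\operatorname{Ext}^\ell_R(M,R)$. By hypothesis this complex is exact at $F_\ell^\vee$ for all $\ell\ge1$ with $\ell\neq j$, and $H^j=E$. We may assume $n\ge j$, padding with zeros if needed.

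\textbf{Step 2 (truncate the top).} Exactness at degrees $j+1,\dots,n$ makes $F_j^\vee\to\cdots\to F_n^\vee\to0$ exact on the right. Splitting off surjections from the top down, $Z^\ell\coloneqq\ker(d_{\ell+1}^\vee)$ is projective for every $\ell\ge j$. Since finitely generated projectives are free, it is stably free, hence free after adding a free summand.

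The trick is to stabilize: replace the top by $F_j^\vee\oplus(\text{free})$ in such a way that $Z^j$ becomes free. This uses the Eilenberg swindle-type fact that in a finite exact sequence of frees, the kernels are stably free.

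With that done, the complex
\[
F_0^\vee\to F_1^\vee\to\cdots\to F_{j-1}^\vee\to Z^j\to E\to0
\]
is a partial free resolution of $E$ of length $j$ (with $Z^j$ in position $G_0$). It is exact except possibly at the left end $F_0^\vee$, which sits in position $j$.

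\textbf{Step 3 (Schanuel).} Lemma~\ref{lem:generalized-schanuel}, applied with $n=j$, gives $d_1^\vee\sim_{R,\mathrm{st}}\partial_j$. Free stable $R$-equivalence of matrices is preserved under transposition, since padding blocks dualize to padding blocks and invertible matrices transpose to invertible matrices. Hence $d_1\sim_{R,\mathrm{st}}\partial_j^\vee$, and Proposition~\ref{prop:stable-module-matrix} yields
\[
M=\operatorname{coker}d_1\cong_{R,\mathrm{st}}\operatorname{coker}\partial_j^\vee.
\]

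\textbf{Expected obstacle.} The main difficulty is Step 2 in the boundary case $j=n$ or $j=0$ style edge conditions. The real care lies in the index bookkeeping: we must check that exactness of the dualized complex at degrees $1,\dots,j-1$ is exactly what Lemma~\ref{lem:generalized-schanuel} requires. We must also justify that the top kernel $Z^j$ is free, not merely projective. Here the hypothesis "finitely generated projectives are free" and Noetherianity (giving finite generation of $Z^j$) are essential.
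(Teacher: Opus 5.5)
Your proof is correct. It differs from the paper's in how you obtain the length-$j$ dual resolution of $E$.

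\textbf{The paper's route.} It invokes the identity $\operatorname{pd}_R(M)=\max\{\ell:\operatorname{Ext}_R^\ell(M,R)\neq0\}$ to get $\operatorname{pd}_R(M)=j$. It treats $E=0$ as a separate case: there $M$ is projective, hence free. It then chooses a free resolution of exact length $j$ and applies Lemma~\ref{lem:generalized-schanuel} to its dual $F_0^\vee\to\cdots\to F_j^\vee\to E\to0$.

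\textbf{Your route.} You keep an arbitrary finite free resolution and split the exact top of the dual complex by hand, replacing $F_j^\vee$ by the cycle module $Z^j=\ker d_{j+1}^\vee$. This amounts to a direct proof of the half of the projective-dimension identity that is actually needed. It avoids the localization and minimal-resolution input behind that identity. It also handles $E=0$ uniformly, since your sequence is then a partial resolution of $0$. When the chosen resolution already has length $j$, you have $Z^j=F_j^\vee$ and the two arguments coincide.

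\textbf{Two points to tidy.} First, the detour in Step~2 about stabilizing and an ``Eilenberg swindle'' is unnecessary, and the name is wrong. $Z^j$ is a direct summand of $F_j^\vee$, so it is finitely generated projective, hence free by hypothesis. No Noetherianity is needed at that step; Noetherianity is what lets you take the finite free resolution of $M$ with finitely generated terms in the first place.

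Second, the genuine bookkeeping issue is $j=1$, not $j=n$. For $j=1$, the leftmost map of your partial resolution is the corestriction $F_0^\vee\to Z^1$, not $d_1^\vee\colon F_0^\vee\to F_1^\vee$. The fix is short. You have $F_1^\vee\cong Z^1\oplus C$ with $C\cong\operatorname{im}d_2^\vee$, which is finitely generated projective and hence free. So $d_1^\vee$ is just the corestriction padded with zero rows. The two maps are therefore freely stably equivalent, and $d_1^\vee\sim_{R,\mathrm{st}}\partial_1$ still holds.
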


\begin{proof}
	Under the stated hypotheses, the assumed finite-length free resolution
	may be chosen with finitely generated free terms. In particular, $M$
	has finite projective dimension.
	Suppose first that $E=0$. Then every positive-degree Ext module of $M$ vanishes. The projective-dimension identity
	\begin{equation}
		\operatorname{pd}_R(M)=\max\{\ell:\operatorname{Ext}_R^\ell(M,R)\neq 0\}
	\end{equation}
	implies $\operatorname{pd}_R(M)=0$. Thus $M$ is finitely generated and projective, hence free by hypothesis. Lemma~\ref{lem:generalized-schanuel}, applied to the given resolution of $E=0$ and the all-zero partial resolution, shows that $\partial_j$ is freely stably \(R\)-equivalent to the empty map. Dualizing and applying Proposition~\ref{prop:stable-module-matrix} show that $\operatorname{coker}(\partial_j^\vee)$ is freely stably \(R\)-isomorphic to zero. Since $M$ is free, the claimed stable \(R\)-isomorphism follows.

	Now suppose that $E\neq 0$. The same projective-dimension identity gives $\operatorname{pd}_R(M)=j$. We may therefore choose a free resolution of exact length $j$ by finitely generated free modules,
	\begin{equation}
		0\longrightarrow F_j\xrightarrow{\delta_j}F_{j-1}
		\longrightarrow\cdots\longrightarrow
		F_1\xrightarrow{\delta_1}F_0\longrightarrow M\longrightarrow 0.
	\end{equation}
	Dualizing this resolution computes $\operatorname{Ext}_R^\bullet(M,R)$. The single-layer hypothesis therefore makes the augmented sequence
	\begin{equation}
		F_0^\vee\xrightarrow{\delta_1^\vee}F_1^\vee
		\longrightarrow\cdots\longrightarrow
		F_j^\vee\longrightarrow E\longrightarrow 0
	\end{equation}
	a length-$j$ partial resolution of $E$ by finitely generated free
	modules, exact except possibly at $F_0^\vee$.
	Lemma~\ref{lem:generalized-schanuel} gives $\delta_1^\vee\sim_{R,\mathrm{st}}\partial_j$. Dualizing again, and using the canonical double-dual identifications for finitely generated free modules, gives $\delta_1\sim_{R,\mathrm{st}}\partial_j^\vee$. Proposition~\ref{prop:stable-module-matrix} now yields
	\begin{equation}
		M\cong_R\operatorname{coker}(\delta_1)
		\cong_{R,\mathrm{st}}\operatorname{coker}(\partial_j^\vee),
	\end{equation}
	as required.
\end{proof}

\begin{corollary}[Single-layer rigidity: matrix form]
	\label{cor:isolated-ext-matrix}
	Under the hypotheses of Proposition~\ref{prop:isolated-ext-module}, let $\phi\colon R^a\to R^q$ be any presentation of $M$, so that $M\cong_R\operatorname{coker}\phi$. Then
	\begin{equation}
		\phi\sim_{R,\mathrm{st}}\partial_j^\vee.
	\end{equation}
\end{corollary}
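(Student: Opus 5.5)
The plan is to combine Proposition~\ref{prop:isolated-ext-module} with the equivalence between module and matrix stable relations, Proposition~\ref{prop:stable-module-matrix}. First, since $\phi\colon R^a\to R^q$ presents $M$, we have $M\cong_R\operatorname{coker}\phi$. In particular, $\operatorname{coker}\phi\cong_{R,\mathrm{st}} M$ with zero padding on both sides. Next, Proposition~\ref{prop:isolated-ext-module} applies under the same hypotheses to the chosen partial resolution $G_\bullet$ of $E$. It gives $M\cong_{R,\mathrm{st}}\operatorname{coker}(\partial_j^\vee)$.

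We then need transitivity of free stable $R$-isomorphism. This holds because the paddings simply add: if $A\oplus R^m\cong B\oplus R^n$ and $B\oplus R^{m'}\cong C\oplus R^{n'}$, then $A\oplus R^{m+m'}\cong B\oplus R^{n+m'}\cong C\oplus R^{n+n'}$. Applying this yields $\operatorname{coker}\phi\cong_{R,\mathrm{st}}\operatorname{coker}(\partial_j^\vee)$.

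Both $\phi$ and $\partial_j^\vee\colon G_{j-1}^\vee\to G_j^\vee$ are maps between finitely generated free modules, hence matrices over $R$ once bases are chosen. The direction $(2)\Rightarrow(1)$ of Proposition~\ref{prop:stable-module-matrix} therefore converts the module relation into $\phi\sim_{R,\mathrm{st}}\partial_j^\vee$. That is the claim.

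The only step needing care is the last one. It relies on the constructive proof of $(2)\Rightarrow(1)$ deferred to Appendix~\ref{app:stable-module-matrix-proof}, which must work for presentations with arbitrary numbers of generators and relations; the deferred statement covers exactly this case. Everything else is bookkeeping.
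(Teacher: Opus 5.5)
Your proposal is correct and follows the paper's proof exactly: Proposition~\ref{prop:isolated-ext-module} gives $\operatorname{coker}\phi\cong_{R,\mathrm{st}}\operatorname{coker}(\partial_j^\vee)$, and the direction $(2)\Rightarrow(1)$ of Proposition~\ref{prop:stable-module-matrix} turns this into $\phi\sim_{R,\mathrm{st}}\partial_j^\vee$. Your explicit transitivity check for free stable isomorphism only spells out what the paper leaves implicit when it substitutes $M\cong_R\operatorname{coker}\phi$.
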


\begin{proof}
	Proposition~\ref{prop:isolated-ext-module} gives
	\begin{equation}
		\operatorname{coker}\phi
		\cong_{R,\mathrm{st}}
		\operatorname{coker}(\partial_j^\vee).
	\end{equation}
	Proposition~\ref{prop:stable-module-matrix} then gives $\phi\sim_{R,\mathrm{st}}\partial_j^\vee$.
\end{proof}

We now apply this algebraic result to translation-invariant CSS codes. By the
Chinese-remainder reduction in Eq.~\eqref{eq:CRT}, it suffices to work over a
prime-power component \(R=\mathbb Z_{p^m}[\mathbb Z^D]\). Applying the matrix
statement to \(\overline{\varphi_X}\) and then applying spatial inversion
converts the algebraic dual \((\vee)\) into the physical dagger dual
\((\dagger)\). For notational convenience, we formulate the results below in
the \(X\)-sector; the corresponding \(Z\)-sector statements follow by
exchanging \(X\) and \(Z\) throughout.

\begin{theorem}[Topological rigidity]
	\label{thm:topological-rigidity}
	Let $p$ be prime and $m\geq 1$, and let $(\varphi_X,\varphi_Z)$ define a translation-invariant topological CSS code over
	$R=\mathbb{Z}_{p^m}[\mathbb{Z}^D]$.
	Assume that $\operatorname{coker}\varphi_X$ admits a finite-length free resolution, a condition that is automatic when $m=1$. Fix $j\geq 1$ and suppose that the \(X\)-sector superselection profile is single-layer, supported in degree $j$, meaning
	\begin{equation}
		\operatorname{Ext}_R^\ell
		\bigl(\overline{\operatorname{coker}\varphi_X},R\bigr)=0
		\qquad (\ell>0,\ \ell\neq j).
	\end{equation}
	Set
	$E\coloneqq \operatorname{Ext}_R^j
		\bigl(\overline{\operatorname{coker}\varphi_X},R\bigr)$,
	and let
	\begin{equation}
		G_j\xrightarrow{\partial_j}G_{j-1}
		\longrightarrow\cdots\longrightarrow
		G_0\longrightarrow E\longrightarrow0
	\end{equation}
	be any partial resolution of $E$ by finitely generated free
	$R$-modules, exact except possibly at $G_j$. Then
	\begin{equation}
		\varphi_X\sim_{R,\mathrm{Cl},\mathrm{st}}\partial_j^\dagger. \label{eq:rigidity}
	\end{equation}
\end{theorem}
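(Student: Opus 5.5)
The plan is to apply Corollary~\ref{cor:isolated-ext-matrix} to the conjugated module $M\coloneqq\overline{\operatorname{coker}\varphi_X}$. We then transport the resulting algebraic equivalence back through spatial inversion, and finally upgrade it to a Clifford equivalence with Proposition~\ref{lem:stable-clifford-equivalence}.

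\textbf{Hypotheses of Proposition~\ref{prop:isolated-ext-module}.} First I check that $R=\mathbb Z_{p^m}[\mathbb Z^D]$ satisfies them.
\begin{itemize}
\item $R$ is Noetherian, being a localization of a polynomial ring over a finite ring.
\item Finitely generated projective $R$-modules are free. For $m=1$ this is the Quillen--Suslin theorem for Laurent polynomial rings over a field (Swan's extension). For general $m$, lift along the nilpotent ideal $(p)$: a projective $P$ with $P/pP$ free over $\mathbb F_p[\mathbb Z^D]$ is free, by Nakayama-type idempotent/basis lifting modulo a nilpotent ideal.
\item $M$ admits a finite-length free resolution. This is assumed for $\operatorname{coker}\varphi_X$. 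Conjugation is exact and sends free modules to free modules, so the resolution transfers to $M$. For $m=1$ it is automatic: $R$ is regular of finite global dimension $D$, so Hilbert's syzygy theorem gives a finite projective resolution, and its terms can be taken free by the previous item (or via Serre's trick of adding free summands).
\item The Ext-concentration hypothesis on $M$ is exactly the single-layer assumption, with $E=\operatorname{Ext}^j_R(M,R)$.
\end{itemize}

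\textbf{Algebraic equivalence and inversion.} Since $M\cong_R\operatorname{coker}\overline{\varphi_X}$, where $\overline{\varphi_X}$ is the entrywise-conjugated matrix presenting $\overline{\operatorname{coker}\varphi_X}$, Corollary~\ref{cor:isolated-ext-matrix} gives $\overline{\varphi_X}\sim_{R,\mathrm{st}}\partial_j^\vee$ for the given partial resolution of $E$. Now apply the ring involution entrywise to the defining relation $\Xi(\overline{\varphi_X}\oplus I\oplus 0)\Phi=\partial_j^\vee\oplus I\oplus 0$. Conjugation is a ring automorphism, so it preserves invertibility, identity blocks, and zero blocks. The relation becomes $\overline{\Xi}(\varphi_X\oplus\cdots)\overline{\Phi}=\overline{\partial_j^\vee}\oplus\cdots$. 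Because $\overline{\partial_j^\vee}=\overline{\partial_j}^{\mathsf T}=\partial_j^\dagger$, this yields $\varphi_X\sim_{R,\mathrm{st}}\partial_j^\dagger$. Proposition~\ref{lem:stable-clifford-equivalence} then converts this to $\varphi_X\sim_{R,\mathrm{Cl},\mathrm{st}}\partial_j^\dagger$, which is Eq.~\eqref{eq:rigidity}.

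\textbf{Main obstacle.} The step needing real care is the projective-implies-free hypothesis for $m>1$, where $R$ is non-reduced. My approach is to reduce modulo $p$ and lift. Let $P$ be finitely generated projective. Then $P/pP$ is projective over $\bar R=\mathbb F_p[\mathbb Z^D]$, hence free by Quillen--Suslin--Swan. Lift a basis of $P/pP$ to elements of $P$ and consider the map $R^r\to P$ they define. It is surjective by Nakayama for the nilpotent ideal $(p)$. It splits because $P$ is projective, so its kernel $K$ is projective with $K/pK=0$, and nilpotence of $p$ then forces $K=0$.

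A secondary point is compatibility of conventions, namely that $\operatorname{coker}\overline{\varphi_X}$ and $\overline{\operatorname{coker}\varphi_X}$ agree as $R$-modules. This is immediate, since conjugation is exact.
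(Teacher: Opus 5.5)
Your proposal is correct and follows the paper's proof essentially step for step. The paper likewise proves that finitely generated projectives over $R$ are free by reducing modulo the nilpotent ideal $pR$, invoking Swan's Laurent-polynomial Quillen--Suslin theorem, and lifting; it gets finite free resolutions for $m=1$ from Hilbert's syzygy theorem. It then applies Corollary~\ref{cor:isolated-ext-matrix} to $\overline{\varphi_X}$, conjugates entrywise using $\overline{\partial_j^\vee}=\partial_j^\dagger$, and upgrades to Clifford equivalence via Proposition~\ref{lem:stable-clifford-equivalence}.
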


\begin{proof}
	The ring $R$ is unital, commutative, and Noetherian. To verify the projective-freeness hypothesis needed for Corollary~\ref{cor:isolated-ext-matrix}, set $I=pR$. Then $I^m=0$ and $R/I\cong\mathbb Z_p[\mathbb Z^D]$. If $P$ is a finitely generated projective $R$-module, then $P/IP$ is a finitely generated projective $R/I$-module and hence is free by the Laurent-polynomial extension of the Quillen--Suslin theorem~\cite{Swan1978}. Choose an isomorphism $(R/I)^r\cong_{R/I}P/IP$ and lift it to a map $f\colon R^r\to P$. Since $f$ is an isomorphism modulo $I$, its cokernel $C$ satisfies $C=IC$, and hence $C=I^mC=0$. Thus $f$ is surjective. Because $P$ is projective, $f$ splits, giving $R^r\cong_R K\oplus P$ with $K=\ker f$. Reduction modulo $I$ gives $K/IK=0$, so $K=I^mK=0$. Therefore $f$ is an isomorphism and $P$ is free.

	When $m=1$, the ring $R=\mathbb Z_p[\mathbb Z^D]$ is a localization of $\mathbb Z_p[x_1,\ldots,x_D]$. Hilbert's syzygy theorem and localization imply that $R$ has global dimension at most $D$, so every finitely generated $R$-module has finite projective dimension. By the projective-freeness just proved, it therefore admits a finite-length free resolution, which justifies the assertion in the theorem.

	These observations verify the ring-theoretic hypotheses of Corollary~\ref{cor:isolated-ext-matrix}. We now apply it to the conjugate stabilizer map $\overline{\varphi_X}$, which presents the conjugate module:
	\begin{equation}
		\overline{\operatorname{coker}\varphi_X}
		\cong_R
		\overline{P_X}/\operatorname{im}\overline{\varphi_X}.
	\end{equation}
	Conjugation preserves finite-length free resolutions, so the corollary gives
	\begin{equation}
		\overline{\varphi_X}\sim_{R,\mathrm{st}}\partial_j^\vee.
	\end{equation}
	Applying spatial inversion entrywise preserves free stable \(R\)-equivalence and, since $\overline{\partial_j^\vee}=\partial_j^\dagger$, yields $\varphi_X\sim_{R,\mathrm{st}}\partial_j^\dagger$. Finally, Proposition~\ref{lem:stable-clifford-equivalence} upgrades this relation to $\varphi_X\sim_{R,\mathrm{Cl},\mathrm{st}}\partial_j^\dagger$.
\end{proof}

\subsubsection{Realizability and construction}
\label{subsubsec:code-construction}

Rigidity begins with a code and asks what its profile determines. We now
reverse the question: which modules occur in single-layer profiles, and
can a realizing code be constructed from their free resolutions, as
suggested by the normal form in Eq.~\eqref{eq:rigidity}?

For a finitely generated \(R\)-module \(E\), recall that its grade is defined
by~\cite[Prop.~18.4]{eisenbud1995commutative}
\begin{equation}
	\operatorname{grade}_R(E)
	\coloneqq
	\inf\bigl\{i\geq0\bigm|
	\operatorname{Ext}_R^i(E,R)\neq0\bigr\}.
	\label{eq:module-grade}
\end{equation}
We use the convention \(\operatorname{grade}_R(0)=\infty\). We say that
\(E\) is \emph{realizable as the \(j\)-th superselection layer} if there exists a
translation-invariant topological CSS code over \(R\) whose \(X\)-sector
superselection layers satisfy
	\begin{equation}
	E_X^\ell
	\cong_R
	\operatorname{Ext}_R^\ell
	\bigl(\overline{\operatorname{coker}\varphi_X},R\bigr)
	\cong_R
	\begin{cases}
		E, & \ell=j,\\
		0, & \ell>0,\ \ell\neq j.
	\end{cases}
	\label{eq:realized-profile}
\end{equation}
This definition includes the identically vanishing profile when \(E=0\).

\begin{proposition}[Grade criterion for single-layer profiles]
	\label{prop:single-degree-realizability}
	Let \(p\) be prime, \(m\geq1\), and
	\(R=\mathbb Z_{p^m}[\mathbb Z^D]\). Every translation-invariant
	topological CSS code over \(R\) satisfies
	\begin{equation}
		\operatorname{grade}_R(E_X^\ell)>\ell,
		\qquad \ell\geq1.
		\label{eq:topological-grade-bound}
	\end{equation}
	Conversely, a finitely generated \(R\)-module \(E\) is realizable as the
	\(j\)-th superselection layer, for \(j\geq1\), if and only if
	\begin{equation}
		\operatorname{grade}_R(E)>j.
		\label{eq:realizability-grade}
	\end{equation}
	Moreover, Construction~\ref{constr:single-degree-realization} yields a
	realizing topological code for which
	\(\operatorname{coker}\varphi_X\) admits a free resolution of length at
	most \(j\). If \(E\neq0\), its projective dimension is exactly \(j\).
\end{proposition}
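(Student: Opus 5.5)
The proof has two halves. Sufficiency will come from an explicit construction built from a free resolution of $E$. The universal bound~\eqref{eq:topological-grade-bound} will come from a double-$\operatorname{Ext}$ argument over $R$, using both topological exactness conditions. The "only if" part of the criterion is then just~\eqref{eq:topological-grade-bound} applied at $\ell=j$.

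\emph{Sufficiency (Construction~\ref{constr:single-degree-realization}).} Assume $\operatorname{grade}_R(E)>j$. Since $R$ is Noetherian, $E$ has a partial resolution by finitely generated free modules
\[
G_{j+1}\xrightarrow{\partial_{j+1}}G_j\xrightarrow{\partial_j}\cdots\to G_0\to E\to0 .
\]
I would place qudits on $G_j$ and set $\varphi_Z=\partial_{j+1}$ and $\varphi_X=\partial_j^\dagger$, mirroring the Koszul models of Definition~\ref{def:koszul-complex-model}. The CSS condition holds because $\partial_j\partial_{j+1}=0$. The condition $\ker\varphi_X^\dagger=\operatorname{im}\varphi_Z$ is exactness of $G_\bullet$ at $G_j$. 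The condition $\ker\varphi_Z^\dagger=\operatorname{im}\varphi_X$ is, after conjugation, exactness of the dual complex at $G_j^\vee$. That dual exactness is $\operatorname{Ext}^j_R(E,R)=0$, which follows from the grade hypothesis.

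For the profile, note that $\overline{\operatorname{coker}\varphi_X}\cong_R\operatorname{coker}\partial_j^\vee$. The grade hypothesis also gives $\operatorname{Ext}^i_R(E,R)=0$ for all $i<j$. Hence
\[
0\to G_0^\vee\to G_1^\vee\to\cdots\to G_j^\vee\to\operatorname{coker}\partial_j^\vee\to0
\]
is a free resolution of length at most $j$. Dualizing it returns the truncated complex $G_j\to\cdots\to G_0$, whose cohomology is $E$ in degree $j$ and $0$ in the other positive degrees. This gives~\eqref{eq:realized-profile}. If $E\neq0$, then the nonvanishing of $\operatorname{Ext}^j$ together with the projective-dimension identity used in Proposition~\ref{prop:isolated-ext-module} forces the projective dimension to be exactly $j$.

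\emph{Necessity.} Set $M=\overline{\operatorname{coker}\varphi_X}$ and take any free resolution $F_\bullet$ of $M$ beginning with $\overline{\varphi_X}$. The first topological condition, $\operatorname{im}\varphi_X=\ker\varphi_Z^\dagger$, says that $\varphi_Z^\dagger$ embeds $\operatorname{coker}\varphi_X$ into the free module $G_Z^\dagger$. Thus $M$ is torsionless: the map $M\to M^{\vee\vee}$ is injective. The second condition, $\ker\varphi_X^\dagger=\operatorname{im}\varphi_Z$, identifies $M^\vee$ with a module presented by $\varphi_Z$. Together, $M$ is a cokernel inside a "doubly exact" length-two piece.

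To turn this into a grade bound, I would use two properties of $R=\mathbb Z_{p^m}[\mathbb Z^D]$. First, $R$ is Gorenstein of finite self-injective dimension, because $\mathbb Z_{p^m}$ is self-injective and Laurent extensions preserve the Gorenstein property. Second, $R$ therefore satisfies the Auslander condition $\operatorname{grade}_R\operatorname{Ext}^\ell_R(M,R)\ge\ell$. One can then use the double-dual spectral sequence
\[
E_2^{p,q}=\operatorname{Ext}^p_R\bigl(\operatorname{Ext}^q_R(M,R),R\bigr)\Rightarrow M\ \text{in total degree }p-q=0 ,
\]
which vanishes in nonzero total degree. The Auslander condition kills $E_2^{p,q}$ for $p<q$. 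The diagonal terms $E_\infty^{\ell,\ell}$ with $\ell\geq1$ are the graded pieces of the kernel of $M\to M^{\vee\vee}$, so torsionlessness makes them vanish. It remains to upgrade $E_\infty^{\ell,\ell}=0$ to $E_2^{\ell,\ell}=0$, which is exactly $\operatorname{grade}E^\ell_X>\ell$.

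\emph{Main obstacle.} The hardest step is controlling the differentials out of $E_r^{\ell,\ell}$. These land in $E_r^{\ell+r,\ell+r-1}$, in total degree $1$, where the abutment is zero. I would argue by induction on $\ell$, using the vanishing in total degrees $0$ and $1$ coming from the two exactness conditions. If that bookkeeping becomes unwieldy, my fallback is to invoke the charge-module dimension bound of Ruba and Yang~\cite[Prop.~25]{Yang2023} directly, specialized to the CSS sector, as the paper's introduction indicates.
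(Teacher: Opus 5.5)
Your sufficiency half is correct and is essentially the paper's proof: the same maps $\varphi_X=\partial_j^\dagger$ and $\varphi_Z=\partial_{j+1}$, dual exactness at $G_0^\vee,\ldots,G_j^\vee$ from $\operatorname{grade}_R(E)>j$, and the length-$\le j$ resolution of $\overline{M_X}$ whose dual returns the truncated resolution of $E$.

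The necessity half has a genuine gap, at exactly the step you flag. Torsionlessness only gives $E_\infty^{\ell,\ell}=0$. Once the Auslander condition kills the terms with $p<q$, nothing enters $E_r^{\ell,\ell}$. So $E_\infty^{\ell,\ell}$ is the intersection of the kernels of $d_r\colon E_r^{\ell,\ell}\to E_r^{\ell+r,\ell+r-1}$. The targets $\operatorname{Ext}^{\ell+r}_R(\operatorname{Ext}^{\ell+r-1}_R(M,R),R)$ need not vanish, so a priori $E_2^{\ell,\ell}$ could inject into them. Your proposed induction has no input to prevent this:
\begin{itemize}
\item Vanishing of the abutment in total degree $1$ holds for every finitely generated module, so it does not come from the exactness conditions.
\item The two exactness conditions say nothing about $M$ beyond torsionlessness. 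The second only says that $\varphi_Z$ maps onto $M^\vee\cong\ker\varphi_X^\dagger$ (its image, not a module it presents). Given that, the first is equivalent to injectivity of $M\to M^{\vee\vee}$.
\end{itemize}
Finishing along this route would need extra structure, such as Bj\"{o}rk's purity theory for Auslander--Gorenstein rings. The fallback of citing Ruba--Yang assumes the statement rather than proving it.

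The missing idea is a single dimension shift, using ingredients you already have. Your embedding gives $0\to M\to G_Z^\vee\to C\to0$ with $G_Z^\vee$ free, so $\operatorname{Ext}^\ell_R(M,R)\cong\operatorname{Ext}^{\ell+1}_R(C,R)$ for all $\ell\geq1$. Now apply the Auslander condition you stated to $C$ in degree $\ell+1$, instead of to $M$ in degree $\ell$. Concretely:
\begin{itemize}
\item Since $R$ is Gorenstein, $\operatorname{injdim}R_{\mathfrak p}=\operatorname{ht}\mathfrak p$, so $\operatorname{Ext}^{\ell+1}_R(C,R)_{\mathfrak p}=0$ whenever $\operatorname{ht}\mathfrak p\leq\ell$.
\item Since $R$ is Cohen--Macaulay, the grade of a nonzero finitely generated module is the minimal height of a prime in its support.
\item Hence $\operatorname{grade}_R E_X^\ell\geq\ell+1$.
\end{itemize}
This is the paper's argument. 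It uses only the first exactness condition and no spectral sequence. The ``only if'' direction then follows at $\ell=j$, as you say.
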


\begin{construction}[CSS realization of a \(j\)-th superselection layer]
\label{constr:single-degree-realization}
When Eq.~\eqref{eq:realizability-grade} holds,
choose any free resolution of \(E\) by finitely generated free modules,
\begin{equation}
	\cdots\longrightarrow
	G_{j+1}\xrightarrow{\partial_{j+1}}G_j
	\xrightarrow{\partial_j}G_{j-1}
	\longrightarrow\cdots\longrightarrow
	G_0\longrightarrow E\longrightarrow0,
\end{equation}
and define the adjacent stabilizer maps by
\begin{equation}
	\begin{aligned}
		\varphi_X
		&\coloneqq
		\partial_j^\dagger
		\colon G_{j-1}^\dagger\longrightarrow G_j^\dagger,\\
		\varphi_Z
		&\coloneqq
		\partial_{j+1}
		\colon G_{j+1}\longrightarrow G_j.
	\end{aligned}
	\label{eq:profile-realization-maps}
\end{equation}
The resolution identity gives commuting checks, while the grade condition
makes the dual complex exact through the required degrees. Together, these
facts show that the maps in Eq.~\eqref{eq:profile-realization-maps} define the
realizing code asserted in Proposition~\ref{prop:single-degree-realizability},
with the profile in Eq.~\eqref{eq:realized-profile}.
Appendix~\ref{app:profile-realization} gives a self-contained CSS derivation of
the necessary bound from Ref.~\cite[Prop.~25]{Yang2023} and proves the converse
realization.
\end{construction}

The criterion also bounds the Ext degree by the spatial dimension. The ring
\(R=\mathbb Z_{p^m}[\mathbb Z^D]\) is Cohen--Macaulay of dimension \(D\),
so every nonzero finitely generated \(R\)-module has grade at most
\(D\)~\cite[Chap.~18]{eisenbud1995commutative}. Combined with
\(\operatorname{grade}_R(E_X^j)>j\), this implies that any nonzero
superselection layer \(E_X^j\) must occur in a degree satisfying
\begin{equation}
	1\leq j\leq D-1.
	\label{eq:spatial-grade-bound}
\end{equation}
In particular, for \(D=2\), only the first superselection layer can
be nonzero, corresponding to pointlike excitations.

Proposition~\ref{prop:single-degree-realizability} characterizes
realizability independently of rigidity, with
Construction~\ref{constr:single-degree-realization} providing an explicit
representative. Since the constructed code satisfies the finite-length
free-resolution hypothesis, Theorem~\ref{thm:topological-rigidity} supplies
uniqueness: the resulting stable Clifford class is independent of the chosen
resolution and agrees with that of any other code realizing the same profile
and satisfying the same hypothesis.

\subsubsection{Scope and limits of Clifford classification}
\label{subsubsec:clifford-scope}
\label{rem:prime-power-rigidity-scope}

The grade bound applies to every topological CSS code without the
finite-length free-resolution hypothesis, whereas the construction above
produces a representative satisfying it. This hypothesis is needed only
when rigidity is invoked to compare an arbitrary code with that
representative.

For prime qudits (\(m=1\)), this hypothesis is automatic, so
Theorem~\ref{thm:topological-rigidity} applies to every topological CSS
code with a single-layer profile. For prime-power qudits (\(m>1\)),
rigidity under stable Clifford equivalence is restricted to codes
satisfying this hypothesis. Outside that class, the profile need not
determine the stable Clifford class;
Appendix~\ref{app:pm-obstruction} demonstrates this already for the
vanishing profile. General FDLU equivalence provides a faithful circuit
characterization of topological phase equivalence. We nevertheless retain
the more restrictive Clifford setting because it enables a simple algebraic
treatment and suffices for the phase identifications established here.
Extending the uniqueness statement to general FDLUs is left to future work.

\subsection{Physical consequences of topological rigidity}

We now draw the physical consequences of
Theorem~\ref{thm:topological-rigidity}. First, we identify when two
topological CSS codes are connected by a stable Clifford circuit. We then
determine the minimal coarse graining needed to reach a toric-code-stack
normal form and interpret this normal form from an
entanglement-renormalization perspective.

\subsubsection{Phase identification via Clifford circuits}

For prime qudits ($m=1$), matching single-layer \(X\)-sector
superselection profiles give an unconditional criterion for stable,
translation-invariant Clifford equivalence between topological CSS codes.

\begin{corollary}[Translation-invariant Clifford equivalence]
    \label{cor:clifford-equivalence-isolated}
    Let \((\varphi_{1,X},\varphi_{1,Z})\) and \((\varphi_{2,X},\varphi_{2,Z})\) be two translation-invariant topological CSS codes over $R=\mathbb{Z}_p[\mathbb{Z}^D]$. Suppose that the \(X\)-sector superselection profiles of both codes are single-layer and supported in the same degree $j\ge 1$, and that their \(j\)-th superselection layers are isomorphic:
    \begin{equation}
        \operatorname{Ext}_R^j\!\bigl(\overline{\operatorname{coker}\varphi_{1,X}}, R\bigr)
        \cong_R
        \operatorname{Ext}_R^j\!\bigl(\overline{\operatorname{coker}\varphi_{2,X}}, R\bigr).
    \end{equation}
    Then $\varphi_{1,X}\sim_{R,\mathrm{Cl},\mathrm{st}}\varphi_{2,X}$, namely, the two codes are connected by a finite-depth, translation-invariant Clifford circuit up to the addition or removal of unentangled product-state ancillas.
\end{corollary}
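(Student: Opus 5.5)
The plan is to reduce the statement to Theorem~\ref{thm:topological-rigidity} by routing both codes through a common normal form. First, since $m=1$, Theorem~\ref{thm:topological-rigidity} tells us that the finite-length free-resolution hypothesis on $\operatorname{coker}\varphi_{i,X}$ holds automatically for $i=1,2$. This is because $R=\mathbb Z_p[\mathbb Z^D]$ has finite global dimension and its finitely generated projectives are free, as shown in that proof. So both codes meet all the hypotheses of the theorem, with the same degree $j$.

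Next, fix an isomorphism $E_1\cong_R E_2$ between the two $j$-th layers, and call the common module $E$. Choose a single partial resolution of $E$ by finitely generated free modules,
\begin{equation*}
G_j\xrightarrow{\partial_j}G_{j-1}\longrightarrow\cdots\longrightarrow G_0\longrightarrow E\longrightarrow0,
\end{equation*}
exact except possibly at $G_j$. Such a resolution exists because $R$ is Noetherian and $E$ is finitely generated, being an Ext module of a finitely generated module. Transporting this resolution along the isomorphism gives a partial resolution of $E_1$ and one of $E_2$ that share the same map $\partial_j$. Applying Theorem~\ref{thm:topological-rigidity} to each code then yields $\varphi_{1,X}\sim_{R,\mathrm{Cl},\mathrm{st}}\partial_j^\dagger$ and $\varphi_{2,X}\sim_{R,\mathrm{Cl},\mathrm{st}}\partial_j^\dagger$.

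The only step needing care is that $\sim_{R,\mathrm{Cl},\mathrm{st}}$ is transitive. I would avoid checking this directly and instead pass through Proposition~\ref{lem:stable-clifford-equivalence}: each relation above is equivalent to $\sim_{R,\mathrm{st}}$. By Proposition~\ref{prop:stable-module-matrix}, $\sim_{R,\mathrm{st}}$ is in turn equivalent to free stable isomorphism of cokernels, which is plainly an equivalence relation. Pad with enough free summands, and use the fact that $\operatorname{GL}$-equivalence composes. This gives $\varphi_{1,X}\sim_{R,\mathrm{st}}\varphi_{2,X}$, and Proposition~\ref{lem:stable-clifford-equivalence} upgrades it back to $\varphi_{1,X}\sim_{R,\mathrm{Cl},\mathrm{st}}\varphi_{2,X}$.

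The physical reading follows from Remark~\ref{rem:physical-dictionary-modules}: the two codes are related by a translation-invariant finite-depth Clifford circuit, up to product ancillas and redundant generators. Since $\operatorname{im}\varphi_Z=\ker\varphi_X^\dagger$, the $Z$-sectors are carried along with the $X$-sectors. I expect no serious obstacle; the substantive work is already contained in Theorem~\ref{thm:topological-rigidity}.
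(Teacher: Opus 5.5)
Your proposal is correct and follows the paper's proof: you fix one partial resolution of the common layer \(E\), apply Theorem~\ref{thm:topological-rigidity} to both codes to get \(\varphi_{i,X}\sim_{R,\mathrm{Cl},\mathrm{st}}\partial_j^\dagger\), and conclude by transitivity. Routing the transitivity step through \(\sim_{R,\mathrm{st}}\) (Proposition~\ref{lem:stable-clifford-equivalence}) is a sensible extra check that the paper leaves implicit, because left multiplication is restricted to \(\operatorname{E}_q(R)\) and realigning the padded trivial blocks directly would need a small argument.
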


\begin{proof}
    Let \(E\) denote the common isomorphism class of the two \(j\)-th
    superselection layers. Fix a partial resolution of \(E\) by finitely generated free
    \(R\)-modules, exact except possibly at \(G_j\), with leftmost boundary
    map \(\partial_j\colon G_j\to G_{j-1}\). By
    Theorem~\ref{thm:topological-rigidity}, both \(X\)-sector maps are stably
    Clifford equivalent to the same dual map:
    \(\varphi_{1,X}\sim_{R,\mathrm{Cl},\mathrm{st}}\partial_j^\dagger\) and
    \(\varphi_{2,X}\sim_{R,\mathrm{Cl},\mathrm{st}}\partial_j^\dagger\). Transitivity
    gives
    \(\varphi_{1,X}\sim_{R,\mathrm{Cl},\mathrm{st}}\varphi_{2,X}\).
\end{proof}

\begin{remark}[Hadamard symmetry and sector exchange]
    Because transversal Hadamard gates exchange $X$- and $Z$-sectors, this Clifford equivalence applies symmetrically. One may freely cross-compare the $X$-sector data of one code with the $Z$-sector data of another (e.g., comparing $\varphi_{1,X}$ with $\varphi_{2,Z}$) to establish stable Clifford equivalence.
\end{remark}

\begin{remark}[Low-dimensional topological CSS codes on prime qudits]
    \label{rem:rigidity-interpretation}
    For \(D=1\),
    Eq.~\eqref{eq:spatial-grade-bound} permits no nonzero positive-degree
    superselection layer. The profile therefore vanishes, and
    Theorem~\ref{thm:topological-rigidity} implies that every such code is
    stably Clifford trivial: it can be reduced to a product state after
    adding or removing product-state ancillas.

    For \(D=2\),
    Eq.~\eqref{eq:spatial-grade-bound} implies that every nonvanishing
    \(X\)-sector superselection profile is single-layer and supported in degree \(j=1\).
    The first superselection layer
    \begin{equation}
        \operatorname{Ext}_R^1\!\bigl(\overline{\operatorname{coker}\varphi_X},R\bigr)
    \end{equation}
    records the fusion rules and the full translation action on the anyon
    sectors, including nontrivial translation-induced permutations. It
    uniquely determines the stable translation-invariant Clifford class: any two such codes with isomorphic first superselection layers
    are equivalent up to a finite-depth, translation-invariant Clifford
    circuit and product-state ancillas.
\end{remark}

For prime-power qudits (\(m>1\)), the same conclusions about stable
Clifford equivalence hold provided the relevant reduced modules admit
finite-length free resolutions. Without this hypothesis, even a vanishing
profile can remain Clifford nontrivial
(Appendix~\ref{app:pm-obstruction}). We expect the same phase-identification
conclusions to hold under general FDLU equivalence without the finite-length
resolution hypothesis; establishing this extension is left to future work.

\subsubsection{Minimal coarse graining to toric-code stacks}
\label{subsubsec:minimal-coarse-graining}

When the $X$-sector superselection profile is single-layer, supported in degree $j$, with a finite \(j\)-th superselection layer $E$, its topological excitation data closely mirror a stack of toric codes. This raises two natural physical questions: First, does the code belong to the same intrinsic topological order as a toric-code stack? Second, if so, can this equivalence be realized by a translation-invariant, finite-depth, finite-range Clifford circuit (up to product-state ancillas), and at what minimal coarse-graining scale? The following corollary answers both questions: the minimal coarse-graining scale corresponds to the layerwise mobility lattice (Sec.~\ref{subsec:set-order}), which reduces to $\Lambda_{\mathrm{ss}}=\Lambda_E$ under the single-layer hypothesis.

\begin{corollary}[Minimal coarse graining to toric-code stacks]
    \label{thm:minimal-coarse-graining}
    Let $D\geq 2$ and $1\leq j\leq D-1$, and let $\varphi_X$ define a
    translation-invariant topological CSS code over
    $R=\mathbb{Z}_p[\mathbb{Z}^D]$ whose $X$-sector superselection profile is single-layer and supported in degree $j$. Set
    $E\coloneqq
    \operatorname{Ext}_R^j(\overline{\operatorname{coker}\varphi_X},R)$,
    let $\Lambda_E$ be its mobility lattice as in
    Eq.~\eqref{eq:mobility}, and suppose that
    $Q\coloneqq\dim_{\mathbb{Z}_p}E<\infty$. After coarse-graining to a
    finite-index translation subgroup $\Lambda\subset\mathbb{Z}^D$, the code is equivalent to $Q$ decoupled copies of the degree-$j$ toric code $\mathrm{TC}_j^{(D)}(\mathbb{Z}_p)$ via a $\Lambda$-invariant Clifford circuit and product-state ancillas if and only if $\Lambda\subseteq\Lambda_E$.

    Equivalently, the minimal coarse-graining that permits decoupling into the toric-code normal form $\mathrm{TC}_j^{(D)}(\mathbb{Z}_p)^{\oplus Q}$ corresponds precisely to $\Lambda_E$.
\end{corollary}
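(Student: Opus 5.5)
The plan is to reduce both directions to the rigidity theorem applied over the smaller ring $R_\Lambda\coloneqq\mathbb Z_p[\Lambda]$. Coarse graining to a finite-index subgroup $\Lambda$ means restricting scalars along $R_\Lambda\subseteq R$. Since $R$ is a free $R_\Lambda$-module of rank $[\mathbb Z^D:\Lambda]$, the coarse-grained code has stabilizer maps given by $\varphi_X$ viewed as an $R_\Lambda$-matrix, and its reduced Pauli-$X$ module is $\operatorname{coker}\varphi_X$ viewed as an $R_\Lambda$-module. Choosing a basis of $\Lambda$ gives $R_\Lambda\cong\mathbb Z_p[\mathbb Z^D]$, so Theorem~\ref{thm:topological-rigidity} applies over $R_\Lambda$ with $m=1$; the finite-resolution hypothesis is then automatic.

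\textbf{Step 1: compare Ext over $R$ and over $R_\Lambda$.} The first step is to show that
\[
\operatorname{Ext}^\ell_{R_\Lambda}(M,R_\Lambda)\cong_{R_\Lambda}\operatorname{Ext}^\ell_R(M,R)
\]
for every $R$-module $M$. Because $R$ is free and finite over $R_\Lambda$, a free $R$-resolution of $M$ is also a free $R_\Lambda$-resolution. It then suffices to have $\operatorname{Hom}_{R_\Lambda}(R,R_\Lambda)\cong_R R$, naturally in the resolution. This holds because $R_\Lambda\subseteq R$ is a group-ring extension by a finite-index subgroup, which is Frobenius: the projection onto the $\Lambda$-coefficients is a nondegenerate trace. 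Two consequences follow. The coarse-grained profile is still single-layer in degree $j$ and commutes with the conjugation bar. Its nonzero layer is $E$ regarded as an $R_\Lambda$-module, which is finite and has $\mathbb Z_p$-dimension $Q$.

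\textbf{Step 2: the toric-code stack over $R_\Lambda$.} For $\mathrm{TC}_j^{(D)}(\mathbb Z_p)$ built on the lattice $\Lambda$, Fact~\ref{fact:koszul-excitation} gives the single layer $R_\Lambda/(y_1-1,\dots,y_D-1)\cong\mathbb Z_p$ with trivial action. Hence $Q$ decoupled copies have the single layer $\mathbb Z_p^Q$ with trivial $\Lambda$-action. Equivalence of the original code to this stack means $\varphi_X\sim_{R_\Lambda,\mathrm{Cl},\mathrm{st}}$ the stack map.

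\textbf{Step 3: necessity.} Suppose such an equivalence holds. Stable Clifford equivalence preserves stable cokernel classes by Propositions~\ref{lem:stable-clifford-equivalence} and~\ref{prop:stable-module-matrix}, and Ext in positive degree is blind to free summands. Therefore $E|_{R_\Lambda}\cong\mathbb Z_p^Q$ with trivial action, so $\Lambda$ acts trivially on $E$, that is, $\Lambda\subseteq\Lambda_E$.

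\textbf{Step 4: sufficiency.} Conversely, if $\Lambda\subseteq\Lambda_E$, then $E|_{R_\Lambda}$ is a $Q$-dimensional $\mathbb Z_p$-vector space on which every $y_i$ acts as $1$, so $E|_{R_\Lambda}\cong_{R_\Lambda}(\mathbb Z_p)^Q$, the layer of the stack. Corollary~\ref{cor:clifford-equivalence-isolated} over $R_\Lambda$ then gives the $\Lambda$-invariant Clifford equivalence. The "equivalently" clause follows because the compatible subgroups are exactly the finite-index subgroups of $\Lambda_E$. Here $\Lambda_E$ itself has finite index, since $E$ is finite and so $\rho_E$ has finite image.

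The main obstacle is Step 1, specifically checking that the identification $\operatorname{Hom}_{R_\Lambda}(R,R_\Lambda)\cong R$ is $R$-linear and compatible with the involution. This is needed so that the restricted Ext is $E$ with its translation action restricted, rather than a twisted or induced module. I would verify it directly: for coset representatives $t$ of $\mathbb Z^D/\Lambda$, the trace $\pi$ sends $\sum_t a_t x^{t}$ (with $a_t\in R_\Lambda$) to $a_0$. I would then check that $r\mapsto\pi(r\,\cdot)$ is an isomorphism $R\to\operatorname{Hom}_{R_\Lambda}(R,R_\Lambda)$ by computing it on the dual bases $\{x^t\},\{x^{-t}\}$.
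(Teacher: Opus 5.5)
Your proposal is correct and follows essentially the same route as the paper. You restrict scalars to $R_\Lambda=\mathbb Z_p[\Lambda]$ and use that $R$ is a finite free Frobenius extension of $R_\Lambda$, so the positive-degree Ext layers over $R_\Lambda$ are $E$ with the restricted translation action. You then invoke Corollary~\ref{cor:clifford-equivalence-isolated} over $R_\Lambda$ for sufficiency, and use invariance of the layer under stable Clifford equivalence for necessity. The only step you leave implicit, which the paper states explicitly, is that restriction of scalars preserves topological exactness; this is what makes the coarse-grained code a valid input to the rigidity results over $R_\Lambda$. Your Step~3 is the paper's one-line necessity argument, written out algebraically via Propositions~\ref{lem:stable-clifford-equivalence} and~\ref{prop:stable-module-matrix}.
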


\begin{proof}
	Suppose first that \(\Lambda\subseteq\Lambda_E\), and set
	\(R_\Lambda=\mathbb Z_p[\Lambda]\). Restriction of scalars preserves
	topological exactness. Because \(R\) is a finite free Frobenius extension
	of \(R_\Lambda\), it also identifies the positive-degree Ext modules over
	\(R_\Lambda\) with those over \(R\), carrying the restricted translation
	action. Since \(\Lambda\subseteq\Lambda_E\), this action is trivial on
	\(E\), so
	\(E\cong_{R_\Lambda}\mathbb Z_p^Q\), where the right-hand side carries
	the trivial \(\Lambda\)-action. This is the
	\(j\)-th superselection layer of the \(X\)-sector profile of \(Q\)
	decoupled degree-\(j\) toric codes.
	Corollary~\ref{cor:clifford-equivalence-isolated}, applied over
	\(R_\Lambda\), gives the claimed Clifford equivalence.

	Conversely, a translation-compatible Clifford circuit preserves the
	action of the retained translations on superselection sectors. That
	action is trivial for the standard toric-code stack, so every
	\(\mathbf v\in\Lambda\) acts trivially on \(E\). Hence
	\(\Lambda\subseteq\Lambda_E\).
\end{proof}

As expected on physical grounds, the minimal coarse graining is independent
of the chosen Pauli sector: the \(X\)- and \(Z\)-sector layers carry
contragredient translation actions and hence define the same mobility lattice
\(\Lambda_E\).

Because \(E\) is finite, \(\Lambda_E\) has finite index. Taking
\(\Lambda=\Lambda_E\) in the corollary gives a stable Clifford equivalence
to \(Q\) decoupled degree-\(j\) toric codes. Since blocking and product-state
stabilization preserve the intrinsic phase and Clifford circuits are FDLUs,
the code therefore lies in the intrinsic phase of this toric-code stack.

This corollary both extends and sharpens
earlier classifications of two-dimensional topological stabilizer codes,
including Haah's algebraic treatment
~\cite{Bombin2012Universal,Bombin2014Structure,
Haah2016Algebraic,Haah2021Classification}. Those works established
toric-code-stack normal forms after suitable coarse graining but generally
required further coarse graining beyond the maximal anyon-preserving
translation subgroup, leaving open whether this additional symmetry reduction
was necessary. For finite single-layer profiles, the corollary gives the
exact answer: the compatible retained translation subgroups are precisely
\(\Lambda\subseteq\Lambda_E\). Thus \(\Lambda_E\) is the maximal retained
symmetry and determines the minimal coarse graining. The result applies in
every dimension \(D\geq2\) and every allowed degree
\(1\leq j\leq D-1\). In two dimensions,
Remark~\ref{rem:rigidity-interpretation} gives the stronger
SET statement before coarse graining: the full \(R\)-module
structure retains the translation action on the anyon sectors, including
nontrivial anyon permutations, and determines the stable
translation-invariant Clifford class. Restricting translations to
\(\Lambda_E\) trivializes this action and yields the toric-code-stack normal
form. In the overlapping two-dimensional qubit setting, this criterion agrees
with the independent result of Wang \emph{et al.}~\cite{Wang2026Decoupling}.
Their constructive approach provides an explicit polynomial-time algorithm
for the decoupling map, together with bounds on operator spreading.

For prime-power qudits, the same toric-code-stack normal form---and hence the
same intrinsic-phase conclusion---holds under two additional algebraic
hypotheses. Over \(R=\mathbb Z_{p^m}[\mathbb Z^D]\) with \(m>1\), assume in
addition that \(\operatorname{coker}\varphi_X\) admits a finite-length free
resolution and that
\(
	E\coloneqq
	\operatorname{Ext}_R^j
	\bigl(\overline{\operatorname{coker}\varphi_X},R\bigr)
\)
is free of finite rank \(Q\) over \(\mathbb Z_{p^m}\). The target is then
\(Q\) decoupled copies of
\(\mathrm{TC}_j^{(D)}(\mathbb Z_{p^m})\), and the criterion remains
\(\Lambda\subseteq\Lambda_E\). The argument is unchanged, with
\(\mathbb Z_p^Q\) replaced by \(\mathbb Z_{p^m}^Q\) and
Theorem~\ref{thm:topological-rigidity} applied over
\(\mathbb Z_{p^m}[\Lambda]\). Outside this sector, the present Clifford
framework is not complete; a general FDLU treatment is left to future work.

\subsubsection{Entanglement-renormalization perspective}
\label{subsubsec:entanglement-renormalization}

Rigidity constrains the possible form of real-space entanglement
renormalization directly from superselection data. For a topological qubit CSS
code with a single finite nonzero \(X\)-sector layer \(E\) in degree \(j\), the
layer degree (equivalently, excitation dimension \(j-1\) in this mobile
setting), fusion rules, and lattice-translation action fix the stable
translation-invariant Clifford class. After coarse graining to a finite-index
subgroup \(\Lambda\subseteq\Lambda_E\), the restricted layer has trivial
translation action and is isomorphic to \(\mathbb Z_2^Q\), where
\(Q=\dim_{\mathbb Z_2}E\). It therefore matches the layer of \(Q\) decoupled
degree-\(j\) toric codes, fixing the toric-code-stack normal form up to stable
Clifford equivalence.
The equivalence is implemented by an exact stabilizer
entanglement-renormalization step: a finite-depth local Clifford circuit
disentangles product-state qudits, which are removed before rescaling the
lattice. For the two-dimensional toric code, elementary vertex and plaquette
moves change the cellulation through CNOT circuits while preserving the
topological degrees of freedom~\cite{AguadoVidal2008Entanglement,Aguado2011Disentanglement}.
The standard hypercubic \(\mathbb Z_2\) toric codes in two, three, and four
dimensions likewise furnish exact, nonbifurcating entanglement-RG fixed
points~\cite{Haah2014Bifurcation}.

This perspective also clarifies when the microscopic cellulation may be
suppressed. Suppose that two bounded-geometry cellulations are related by a
bounded-depth, uniformly local sequence of elementary cellular expansions
and collapses. Their cellular chain complexes are then related by local basis
changes and the addition or removal of contractible two-term summands. At the
physical code degree, these summands correspond to product-state qudits,
while the basis changes are implemented by local CNOT gates. The two cellular
models are therefore stably Clifford equivalent and represent the same
intrinsic phase. We suppress the cellulation only among representatives
related in this sense; when translation symmetry is retained, the cellular
moves must additionally be \(\Lambda\)-periodic.

Fracton stabilizer models can exhibit qualitatively different
renormalization behavior. The cubic code has a bifurcating flow in which
coarse graining produces an additional inequivalent long-range-entangled
factor~\cite{Haah2014Bifurcation}. For the X-cube model, adding or removing a
foliation leaf is implemented by adjoining or removing a two-dimensional
toric-code layer through a local CNOT
circuit~\cite{ShirleySlagleWangChen2018Fracton}. This uses a larger
stabilization resource than product-state ancillas. By contrast, the codes
covered by Corollary~\ref{thm:minimal-coarse-graining} flow, after the
prescribed finite coarse graining, to ordinary nonbifurcating
toric-code-stack fixed points. In particular, this applies to the full-length
regular Koszul-complex models considered below.

\subsection{Case study: regular Koszul-complex models}
\label{subsec:koszul-multistack-rigidity}

We now return to the regular Koszul sector, where both the single-layer
property and the finite-length free-resolution hypothesis follow from the
Koszul resolution itself. For full-length regular sequences,
Fact~\ref{fact:fusion-rules} further shows that the structure module is finite
and free over \(\mathbb Z_{p^m}\), so the preceding minimal-coarse-graining
result reduces every such model to a toric-code stack.

\begin{corollary}[Toric-code-stack normal form for full-length regular models]
	\label{cor:koszul-multistack-rigidity}
	Let \(D\geq2\), let \(p\) be prime, \(m\geq1\),
	\(R=\mathbb Z_{p^m}[\mathbb Z^D]\), and let \(\mathbf f\in R^D\) be a
	regular sequence. Set \(S\coloneqq R/(\mathbf f)\) and
	\begin{equation}
		\begin{aligned}
			Q&\coloneqq\operatorname{rank}_{\mathbb Z_{p^m}}S,\\
			\Lambda_S&\coloneqq
			\bigl\{\mathbf v\in\mathbb Z^D:(x^{\mathbf v}-1)S=0\bigr\}.
		\end{aligned}
	\end{equation}
	For \(1\leq j\leq D-1\) and every finite-index translation subgroup
	\(\Lambda\subseteq\mathbb Z^D\), write
	\(R_\Lambda\coloneqq\mathbb Z_{p^m}[\Lambda]\). Viewed as codes over
	\(R_\Lambda\),
	\begin{equation}
		\mathrm{KC}_j^{(D)}(\mathbf f;\mathbb Z_{p^m})
		\sim_{R_\Lambda,\mathrm{Cl},\mathrm{st}}
		\left[\mathrm{TC}_j^{(D)}(\mathbb Z_{p^m})\right]^{\oplus Q}
	\end{equation}
	if and only if \(\Lambda\subseteq\Lambda_S\), where
	\(\mathrm{TC}_j^{(D)}(\mathbb Z_{p^m})\) denotes any degree-\(j\) toric-code
	realization with trivial \(\Lambda\)-translation action on its
	\(j\)-th superselection layer.
\end{corollary}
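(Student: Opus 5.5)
The plan is to reduce the statement to the prime-power version of Corollary~\ref{thm:minimal-coarse-graining} stated in the paragraph after its proof, by checking its two extra algebraic hypotheses from the Koszul resolution. For the first, Fact~\ref{fact:koszul-excitation} gives that \(\mathrm{KC}_j^{(D)}(\mathbf f;\mathbb Z_{p^m})\) is topological. It also shows the \(X\)-sector profile is single-layer in degree \(j\), with \(E\coloneqq E_X^j\cong_R S\). The proof of that Fact exhibits \(\overline{\operatorname{coker}\varphi_X}\cong_R\operatorname{coker}\partial_{D-j+1}\) as resolved by a tail of the Koszul complex, which is a finite-length free resolution. Conjugating gives one for \(\operatorname{coker}\varphi_X\). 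For the second, Lemmas~\ref{lem:free-Zpm} and~\ref{lem:finite-full-length-ZN} (packaged in Fact~\ref{fact:fusion-rules}) give that \(S\) is finite and free of rank \(Q\) over \(\mathbb Z_{p^m}\). The translation action on \(E\cong_R S\) is multiplication by \([x^{\mathbf v}]\). So \(\Lambda_E\) coincides with the annihilator description \(\Lambda_S=\{\mathbf v:(x^{\mathbf v}-1)S=0\}\).

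For the \emph{if} direction, assume \(\Lambda\subseteq\Lambda_S\) and restrict scalars to \(R_\Lambda=\mathbb Z_{p^m}[\Lambda]\). Here \(R\) is a finite free extension of \(R_\Lambda\), of rank \([\mathbb Z^D:\Lambda]\), and it is Frobenius via the trace onto \(R_\Lambda\). Hence \(\operatorname{Hom}_{R_\Lambda}(M,R_\Lambda)\cong\operatorname{Hom}_R(M,R)\) for \(R\)-modules \(M\), which identifies \(\operatorname{Ext}_{R_\Lambda}^\ell\) with \(\operatorname{Ext}_R^\ell\) with the translation action restricted. A free \(R\)-resolution stays a free \(R_\Lambda\)-resolution of finite length. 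Topological exactness is unaffected because it is a statement about the same abelian groups. Thus over \(R_\Lambda\) the profile is single-layer in degree \(j\), with layer \(\mathbb Z_{p^m}^Q\) carrying trivial \(\Lambda\)-action. The toric-code stack \([\mathrm{TC}_j^{(D)}(\mathbb Z_{p^m})]^{\oplus Q}\) also satisfies the finite-resolution hypothesis over \(R_\Lambda\), by its own Koszul resolution and restriction. By Remark~\ref{rem:toric-saturation} and Fact~\ref{fact:koszul-excitation} its layer is the same module. Two applications of Theorem~\ref{thm:topological-rigidity} over \(R_\Lambda\), which requires \(R_\Lambda\cong\mathbb Z_{p^m}[\mathbb Z^D]\) after choosing a basis of \(\Lambda\), together with transitivity, give the stable Clifford equivalence.

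For the \emph{only if} direction, a stable \(R_\Lambda\)-Clifford equivalence gives a stable \(R_\Lambda\)-isomorphism of reduced Pauli-\(X\) modules by Remark~\ref{rem:physical-dictionary-modules}. The Ext layers are insensitive to free summands in positive degree, so \(E_X^j\) is isomorphic as \(R_\Lambda\)-modules to that of the stack, on which \(\Lambda\) acts trivially. Hence \((x^{\mathbf v}-1)S=0\) for every \(\mathbf v\in\Lambda\), i.e.\ \(\Lambda\subseteq\Lambda_S\).

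The main obstacle I expect is the restriction-of-scalars step. One must check that the dagger/conjugation conventions are compatible with the Frobenius identification of duals over \(R_\Lambda\) versus \(R\), since \(\overline{\,\cdot\,}\) preserves \(\Lambda\). One must also make precise the clause that \(\mathrm{TC}_j^{(D)}\) means any toric-code realization over \(R_\Lambda\) with trivial action, so that its layer really is \(\mathbb Z_{p^m}\) with trivial \(\Lambda\)-action. I would first verify that the blocked standard toric code is one such realization, and then note that all such realizations are mutually equivalent by the same rigidity argument.
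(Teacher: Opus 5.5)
Your proposal is correct and follows the paper's proof. The paper likewise reduces to the prime-power extension of Corollary~\ref{thm:minimal-coarse-graining} by checking its hypotheses: single-layer concentration with \(E\cong_R S\) and a finite-length free resolution from the Koszul complex (Fact~\ref{fact:koszul-excitation}), finiteness and \(\mathbb Z_{p^m}\)-freeness of rank \(Q\) from Fact~\ref{fact:fusion-rules}, and \(\Lambda_E=\Lambda_S\). Your additional detail only expands that corollary's own proof: Frobenius restriction to \(R_\Lambda\), two applications of Theorem~\ref{thm:topological-rigidity}, and an algebraic only-if argument via stable isomorphism of reduced Pauli modules, which stands in for the paper's one-line claim that retained-translation Clifford circuits preserve the sector action.
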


\begin{proof}
	Fact~\ref{fact:koszul-excitation} identifies the sole nonzero
	positive-degree \(X\)-sector superselection layer with \(S\), and the
	Koszul complex supplies a finite-length free resolution. By
	Fact~\ref{fact:fusion-rules}, \(S\) is finite free of rank \(Q\) over
	\(\mathbb Z_{p^m}\). The prime-power extension of
	Corollary~\ref{thm:minimal-coarse-graining} therefore applies. Since
	\(E=S\) and \(\Lambda_E=\Lambda_S\), it gives the stated equivalence exactly
	when \(\Lambda\subseteq\Lambda_S\).
\end{proof}

\section{Beyond single-layer rigidity: nonsplit extensions}
\label{sec:non-splitting-extensions}

The admissible-degree bound~\eqref{eq:spatial-grade-bound} separates low and
higher spatial dimensions.  In \(D=1\), the positive-degree superselection
profile vanishes; in \(D=2\), only degree one is allowed. In \(D=3\), by
contrast, degrees one and two can coexist, so the single-layer hypothesis of
Theorem~\ref{thm:topological-rigidity} is no longer automatic. Models with
both layers nonzero lie beyond that theorem, and this has a crucial physical
consequence: even the individual layers \(E_\sigma^\ell\) in both Pauli
sectors, together with their translation actions, need not uniquely determine
the translation SET order.

To exhibit what lies beyond single-layer rigidity, we construct
degree-\((1,2)\) extensions of regular Koszul-complex models, including
nonsplit couplings, and compute their common \(E_\sigma^\ell\) in both
sectors. We then specialize to qubit toric codes, obtaining eight models with
identical \(E_\sigma^\ell\) for \(\sigma=X,Z\) and \(\ell=1,2\), with trivial
translation action on each,
yet different system-size dependences of the ground-state degeneracy.  This
discrepancy rules out FDLU equivalence with the original translation symmetry
fixed and shows that the models realize distinct translation SET orders. We
then show that finite coarse graining splits every member into
the same decoupled toric-code stack, so the distinction lies in the translation
SET order rather than the intrinsic order.

\subsection{Degree-\((1,2)\) extensions of Koszul-complex models}

\begin{construction}[Degree-\((1,2)\) extensions of Koszul-complex models]
\label{constr:degree-12-koszul-extension}
\label{def:degree-12-koszul-extension}
Let \(N\geq2\) and write
\begin{equation}
	R=\mathbb Z_N[x_1^{\pm1},x_2^{\pm1},x_3^{\pm1}],
\end{equation}
where \(x_i\) generates translation along the reference direction \(e_i\),
as illustrated in Fig.~\ref{fig:toric3D}.
Let \(\mathbf f=(f_1,f_2,f_3)\in R^3\) be a regular sequence, and let
\((K_\bullet(\mathbf f),\partial_\bullet)\) be its Koszul complex, using the
ordered bases and differentials in
Eqs.~\eqref{eq:3d-koszul-bases}--\eqref{eq:3d-differentials}.  Choose a coupling vector
\begin{equation}
	\mathbf a=\begin{pmatrix}a_1\\a_2\\a_3\end{pmatrix}\in R^3.
\end{equation}
Define the \(Z\)-sector CSS complex \(\mathcal D_{\bullet,\mathbf a}\) by
\begin{equation}
	K_2 \oplus K_3
	\xrightarrow{\mathcal{D}_{2,\mathbf{a}} \coloneqq \begin{pmatrix} \partial_2 & \mathbf{a} \\ 0 & \partial_3 \end{pmatrix}\;}
	K_1 \oplus K_2
	\xrightarrow{\mathcal{D}_{1,\mathbf{a}} \coloneqq \begin{pmatrix} \partial_1 & -\mathbf{a}^{\mathsf{T}} \\ 0 & \partial_2 \end{pmatrix}\;}
	K_0 \oplus K_1,
	\label{eq:AE-extension}
\end{equation}
and the stabilizer maps
\begin{equation}
	\varphi_{Z,\mathbf a}
	\coloneqq \mathcal{D}_{2,\mathbf{a}}
	=
	\begin{pmatrix}
		\partial_2&\mathbf a\\
		0&\partial_3
	\end{pmatrix},
	\quad
	\varphi_{X,\mathbf a}
	\coloneqq \mathcal{D}_{1,\mathbf{a}}^\dagger
	=
	\begin{pmatrix}
		\partial_1^\dagger&0\\
		-\overline{\mathbf a}&\partial_2^\dagger
	\end{pmatrix}.
	\label{eq:BE-extension}
\end{equation}

	The CSS code specified by Eqs.~\eqref{eq:AE-extension}
	and~\eqref{eq:BE-extension} is denoted by
	\begin{equation}
		\mathcal C_{\mathbf a}(\mathbf f;\mathbb Z_N)
		\coloneqq
		\mathrm{KC}_1^{(3)}(\mathbf f;\mathbb Z_N)
		\mathbin{\oplus_{\mathbf a}}
		\mathrm{KC}_2^{(3)}(\mathbf f;\mathbb Z_N)
	\end{equation}
	and called a \emph{degree-\((1,2)\) extension of Koszul-complex models}.
	Here \(\oplus_{\mathbf a}\) records the off-diagonal coupling;
	\(\mathbf a=0\) gives the ordinary direct sum.
\end{construction}

We call the extension \emph{split} if a translation-invariant change of CSS
presentation removes the off-diagonal coupling and reduces
\(\mathcal C_{\mathbf a}\) to the direct sum \(\mathcal C_{\mathbf 0}\);
otherwise, it is \emph{nonsplit}.
Unless stated otherwise, split and nonsplit refer to the original translation
ring \(R\), before any coarse graining.

We call extensions assembled from Koszul-complex models in different degrees
\emph{multi-degree extensions of Koszul-complex models}.
When all blocks are toric-code models, we call the result a
\emph{multi-degree toric-code extension}.

\begin{fact}[Topologicality of degree-\((1,2)\) extensions of Koszul-complex models]
	\label{prop:AE-BE-topological}
	For every regular sequence \(\mathbf f\in R^3\) and every
	\(\mathbf a\in R^3\), the model
	\(\mathcal C_{\mathbf a}(\mathbf f;\mathbb Z_N)\) is a
	translation-invariant topological CSS code.
\end{fact}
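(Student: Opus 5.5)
The plan is to verify the two defining properties of a topological CSS code directly: the CSS commutation condition, and exactness of both complexes in Eq.~\eqref{eq:CSS-two-complexes} at the physical module. For both, I will view \(\mathcal D_{\bullet,\mathbf a}\) as an extension of two Koszul-complex segments.

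\emph{Step 1: commutation.} It suffices to show \(\mathcal D_{1,\mathbf a}\mathcal D_{2,\mathbf a}=0\), since \(\varphi_Z^\dagger\varphi_X=(\mathcal D_{1,\mathbf a}\mathcal D_{2,\mathbf a})^\dagger\). Multiplying the block matrices, the diagonal blocks are \(\partial_1\partial_2=0\) and \(\partial_2\partial_3=0\). The off-diagonal block is
\[
\partial_1\mathbf a-\mathbf a^{\mathsf T}\partial_3=\sum_i f_ia_i-\sum_i a_if_i=0,
\]
using Eq.~\eqref{eq:KC3} and commutativity of \(R\). This is exactly where the sign \(-\mathbf a^{\mathsf T}\) is needed.

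\emph{Step 2: exactness of the \(Z\)-sector complex at \(K_1\oplus K_2\).} The block upper-triangular form shows that the first summands form a subcomplex
\[
A_\bullet=(K_2\xrightarrow{\partial_2}K_1\xrightarrow{\partial_1}K_0).
\]
The quotient complex is
\[
B_\bullet=(K_3\xrightarrow{\partial_3}K_2\xrightarrow{\partial_2}K_1),
\]
and the coupling \(\mathbf a\) disappears in the quotient. The resulting sequence \(0\to A_\bullet\to\mathcal D_{\bullet,\mathbf a}\to B_\bullet\to0\) is degreewise split, and in particular short exact. Its long exact sequence contains the segment \(H_{\mathrm{mid}}(A)\to H_{\mathrm{mid}}(\mathcal D)\to H_{\mathrm{mid}}(B)\). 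Here \(H_{\mathrm{mid}}(A)=H_1(K_\bullet(\mathbf f))\) and \(H_{\mathrm{mid}}(B)=H_2(K_\bullet(\mathbf f))\), and both vanish by Eq.~\eqref{eq:koszul-homology} because \(\mathbf f\) is regular. Hence \(\operatorname{im}\varphi_Z=\ker\varphi_X^\dagger\).

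\emph{Step 3: exactness of the \(X\)-sector complex.} The dagger dual is exact on finite-rank free modules. Moreover, the degreewise splitting is preserved under dualization. Dualizing therefore gives a short exact sequence of cochain complexes \(0\to B^\dagger_\bullet\to\mathcal D^\dagger_{\bullet,\mathbf a}\to A^\dagger_\bullet\to0\), where now \(B^\dagger\) is the sub-object. The middle cohomologies of the outer terms are \(H^2(K^\dagger_\bullet)\) and \(H^1(K^\dagger_\bullet)\). Both vanish by Eq.~\eqref{eq:koszul-cohomology}, since \(1,2<s=3\). The same long-exact-sequence argument then gives \(\operatorname{im}\varphi_X=\ker\varphi_Z^\dagger\). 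Translation invariance is automatic because every map is \(R\)-linear.

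I expect no step to be a serious obstacle. The only delicate points are bookkeeping. First, one must check that the filtration is by genuine subcomplexes: the first summands for \(\mathcal D\), and the second summands after dualization. Second, the degree indexing of the truncated three-term complexes must be handled so that the relevant middle homology really is \(H_1\) and \(H_2\) of the full Koszul complex. This holds because exactness at a middle term involves only its two adjacent maps.
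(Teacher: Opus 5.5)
Your proposal is correct and follows essentially the same route as the paper. The paper also checks \(\mathcal D_{1,\mathbf a}\mathcal D_{2,\mathbf a}=0\) via \(\partial_1\mathbf a=\mathbf a^{\mathsf T}\partial_3\), then lifts a cycle \((g_1,g_2)\) by hand using Koszul exactness in degrees \(2\) and \(1\) (first \(g_2=\partial_3h_3\), then \(g_1-\mathbf a h_3=\partial_2h_2\)), and argues the \(X\)-sector dually; your long exact sequence for the degreewise-split filtration \(0\to A_\bullet\to\mathcal D_{\bullet,\mathbf a}\to B_\bullet\to0\), and its dual with \(B^\dagger_\bullet\) as the subcomplex, is simply the packaged form of that explicit diagram chase.
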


\begin{proof}
	Direct multiplication gives
	\begin{equation}
		\mathcal D_{1,\mathbf a}\mathcal D_{2,\mathbf a}
		=
		\begin{pmatrix}
			\partial_1\partial_2 &
			\partial_1\mathbf a-\mathbf a^{\mathsf T}\partial_3\\
			0&\partial_2\partial_3
		\end{pmatrix}
		=0,
	\end{equation}
	where the diagonal blocks vanish by the Koszul identities and the
	off-diagonal block vanishes because
	\(\partial_1\mathbf a=\mathbf a^{\mathsf T}\partial_3\).  Equivalently,
	\(\varphi_{X,\mathbf a}^\dagger\varphi_{Z,\mathbf a}=0\).

	If \((g_1,g_2)\in\ker\varphi_{X,\mathbf a}^\dagger\), Koszul exactness
	gives \(g_2=\partial_3h_3\) and
	\(g_1-\mathbf a h_3=\partial_2h_2\) for some \(h_2,h_3\).  Hence
	\begin{equation}
		(g_1,g_2)
		=(\partial_2h_2+\mathbf a h_3,\partial_3h_3)
		=\varphi_{Z,\mathbf a}(h_2,h_3).
	\end{equation}
	Dually, exactness of the conjugate Koszul complex gives
	\begin{equation}
		(g_1,g_2)
		=(\partial_1^\dagger h_0,
		  \partial_2^\dagger h_1-\overline{\mathbf a}h_0)
		=\varphi_{X,\mathbf a}(h_0,h_1)
	\end{equation}
	for every \((g_1,g_2)\in\ker\varphi_{Z,\mathbf a}^\dagger\).
	The reverse inclusions follow from commutativity, proving exactness in both
	sectors.
\end{proof}

\begin{remark}[Extension classes]
	Let \(S\coloneqq R/(\mathbf f)\) be the Koszul structure module of
	Definition~\ref{def:koszul-structure-module}.  If
	\([\mathbf a]=[\mathbf a'] \) in \( S^3\), then
	\(\mathbf a'-\mathbf a\in(\mathbf f)R^3\).  The corresponding
	upper-triangular chain isomorphism relabels the stabilizer generators at its
	endpoints and is implemented on the physical Pauli module by a
	translation-invariant CSS Clifford circuit.  Thus the two representatives
	need not define identical stabilizer subgroups in fixed Pauli coordinates,
	but they define Clifford-equivalent codes.  Suppressing the fixed arguments
	\((\mathbf f;\mathbb Z_N)\),
	\begin{equation}
		[\mathbf a]=[\mathbf a']
		\quad\Longrightarrow\quad
		\mathcal C_{\mathbf a}\sim_{R,\mathrm{Cl}}\mathcal C_{\mathbf a'}
		\quad\Longrightarrow\quad
		\mathcal C_{\mathbf a}\sim_{R,\mathrm{Cl},\mathrm{st}}\mathcal C_{\mathbf a'}.
	\end{equation}
	In short, \(\mathbf a\) specifies a concrete stabilizer presentation, but for
	studying phases within this family it is enough to retain its extension class
	\([\mathbf a]\in S^3\).
\end{remark}

The off-diagonal construction is flexible: compatible chain maps can couple
Koszul-complex models in different degrees or based on different sequences.
The present case \(\mathbf f=\mathbf g\) and \((i,j)=(1,2)\) is especially
simple, with extension classes \([\mathbf a]\in S^3\); more general couplings
need not admit such a single-vector parametrization.  Mapping cones and related
homological constructions also appear in studies of CSS-code
construction~\cite{Audoux2014}, stabilizer-weight
reduction~\cite{Hastings2021WeightReduction,Sabo2024WeightReduction},
decoding~\cite{Kubica2022SingleShot}, and fault-tolerant logical
measurements~\cite{Ide2025HomologicalMeasurement}.  Here we use the coupling to
study topological properties, particularly extension data invisible to the
individual \(X\)- and \(Z\)-sector layers.  A treatment in full generality is
left to future work.

\subsection{Common \(X\)- and \(Z\)-sector superselection layers}

Let
\(M_{\sigma,\mathbf a}(\mathbf f)\coloneqq
\operatorname{coker}\varphi_{\sigma,\mathbf a}\) for \(\sigma=X,Z\), and
retain the notation
\(S=R/(\mathbf f)\).

\begin{fact}[Common multi-layer superselection layers]
	\label{prop:block-extension-profile}
	For a fixed regular sequence \(\mathbf f\), the positive-degree layers of
	\(\mathcal C_{\mathbf a}(\mathbf f;\mathbb Z_N)\) are independent of
	\(\mathbf a\) in both Pauli sectors and are given by
	\begin{align}
		E_X^\ell
		&\cong_R\operatorname{Ext}_R^\ell
		\bigl(\overline{M_{X,\mathbf a}(\mathbf f)},R\bigr)
		\cong_R
		\begin{cases}
			S,&\ell=1,2,\\
			0,&\ell\geq3.
		\end{cases}
		\label{eq:block-extension-profile}\\
		E_Z^\ell
		&\cong_R\operatorname{Ext}_R^\ell
		\bigl(\overline{M_{Z,\mathbf a}(\mathbf f)},R\bigr)
		\cong_R
		\begin{cases}
			\overline S,&\ell=1,2,\\
			0,&\ell\geq3.
		\end{cases}
		\label{eq:block-extension-profile-Z}
	\end{align}
	Under these isomorphisms, translations act through the natural
	\(R\)-module structures of \(S\) and \(\overline S\), respectively.
\end{fact}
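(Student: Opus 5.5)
The plan is to compute both sectors from explicit finite free resolutions of \(M_{\sigma,\mathbf a}(\mathbf f)\). Each resolution will be assembled from Koszul pieces glued along \(\mathbf a\). Writing the gluing as a degreewise split short exact sequence of complexes, the Ext computation reduces to the single-layer results of Fact~\ref{fact:koszul-excitation} through one long exact sequence. In that sequence the connecting maps are forced to vanish for degree reasons, and this is why \(\mathbf a\) drops out of the individual layers.

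\textbf{\(Z\)-sector.} First I extend \(\mathcal D_{\bullet,\mathbf a}\) to the left by
\[
\mathcal D_{3,\mathbf a}\coloneqq\begin{pmatrix}\partial_3\\0\end{pmatrix}\colon K_3\to K_2\oplus K_3 .
\]
One checks \(\mathcal D_{2,\mathbf a}\mathcal D_{3,\mathbf a}=0\), since \(\partial_2\partial_3=0\).

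Next I verify that
\[
0\to K_3\to K_2\oplus K_3\to K_1\oplus K_2\to M_{Z,\mathbf a}\to 0
\]
is a free resolution. Take \((h_2,h_3)\in\ker\mathcal D_{2,\mathbf a}\). Then \(\partial_3h_3=0\), and \(\partial_3\) is injective by regularity, so \(h_3=0\). It follows that \(\partial_2h_2=0\), hence \(h_2\in\operatorname{im}\partial_3\). Injectivity of \(\mathcal D_{3,\mathbf a}\) is again injectivity of \(\partial_3\).

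Call this resolution \(F_\bullet\). Its degree-\(\ge0\) part contains the subcomplex \(A_\bullet=(K_3\to K_2\to K_1)\), placed in degrees \(2,1,0\), which resolves \(\operatorname{coker}\partial_2\). The quotient is \(B_\bullet=(K_3\to K_2)\), placed in degrees \(1,0\), which resolves \(\operatorname{coker}\partial_3\). The coupling \(\mathbf a\) enters only as the gluing map of the extension \(0\to A_\bullet\to F_\bullet\to B_\bullet\to0\).

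Conjugating and applying \(\operatorname{Hom}_R(-,R)\) preserves exactness, because the sequence is split in each degree. This gives a long exact sequence
\[
\cdots\to H^\ell(\overline B^\vee)\to H^\ell(\overline F^\vee)\to H^\ell(\overline A^\vee)\to H^{\ell+1}(\overline B^\vee)\to\cdots .
\]
By Fact~\ref{fact:koszul-excitation}, applied to \(\mathrm{KC}_1^{(3)}\) and \(\mathrm{KC}_2^{(3)}\), the positive-degree cohomology is as follows:
\begin{itemize}
\item \(H^\ell(\overline A^\vee)\) equals \(\overline S\) only for \(\ell=2\);
\item \(H^\ell(\overline B^\vee)\) equals \(\overline S\) only for \(\ell=1\), and \(\overline B^\vee\) has nothing in degree \(\ge2\).
\end{itemize}
The degree-\(0\) terms vanish by topologicality. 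Every connecting map therefore has zero source or zero target, and we obtain \(E_Z^1\cong_R\overline S\), \(E_Z^2\cong_R\overline S\), and \(E_Z^\ell=0\) for \(\ell\ge3\). All maps involved are \(R\)-linear, so translations act through the natural module structure of \(\overline S\).

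\textbf{\(X\)-sector.} Here \(\overline{M_{X,\mathbf a}}\cong\operatorname{coker}\mathcal D_{1,\mathbf a}^\vee\). I use the Koszul self-duality \(K_\bullet^\vee\cong K_{3-\bullet}\), which holds up to the standard signs. Under it, \(\mathcal D_{1,\mathbf a}^\vee\) becomes a block upper-triangular map
\[
K_3\oplus K_2\to K_2\oplus K_1 ,
\]
with diagonal blocks \(\partial_3\) and \(\partial_2\) and off-diagonal block \(\pm\mathbf a\). It has the same shape as \(\mathcal D_{2,\mathbf a}\), with roles permuted. I would then repeat the construction: extend to the left by \(K_3\to K_3\oplus K_2\) built from \(\partial_3\) and a zero block, check exactness using injectivity of \(\partial_3\) and exactness of the Koszul complex, and run the same long exact sequence. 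Its two ends are now the resolutions of \(\operatorname{coker}\partial_3\) and \(\operatorname{coker}\partial_2\) without conjugation, giving \(S\) in degrees \(1\) and \(2\) and nothing else.

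\textbf{Expected obstacle.} I expect the main difficulty to be the sign and ordering bookkeeping in the \(X\)-sector dualization. The task is to confirm that after the self-duality identification, the extended map is still a complex that is exact at the middle term. I would first settle this by directly computing \(\ker\mathcal D_{1,\mathbf a}^\vee\) using exactness of \(K_\bullet^\vee(\mathbf f)\), mirroring the topologicality proof of Fact~\ref{prop:AE-BE-topological}. Only after that would I switch to the self-duality picture, which is needed just to identify the layers with \(S\) rather than \(\overline S\).
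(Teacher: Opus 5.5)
Your resolution of $M_{Z,\mathbf a}$ and the degreewise split sequence $0\to A_\bullet\to F_\bullet\to B_\bullet\to0$ are correct. The long exact sequence does give $E_Z^2\cong_R\overline S$ and $E_Z^\ell=0$ for $\ell\geq3$.

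The gap is in degree one. You claim that ``the degree-$0$ terms vanish by topologicality,'' but this is false for the complexes you wrote down. In fact $H^0(\overline A^\vee)=\operatorname{Hom}_R(\overline{\operatorname{coker}\partial_2},R)\cong\ker\partial_2^\dagger=\operatorname{im}\partial_1^\dagger\neq0$. Topologicality kills only the quotient of $\operatorname{Ext}^0$ by the image of the opposite-sector stabilizer map, not $\operatorname{Ext}^0$ itself. So the connecting map $\delta_0\colon H^0(\overline A^\vee)\to H^1(\overline B^\vee)\cong\overline S$ has nonzero source and nonzero target. Your sequence therefore yields only $E_Z^1\cong\overline S/\operatorname{im}\delta_0$. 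To compute $\delta_0$, lift $u\in\ker\partial_2^\dagger$ to $(u,0)$ and apply $\varphi_{Z,\mathbf a}^\dagger$; this gives $\delta_0(u)=[\overline{\mathbf a}^{\mathsf T}u]$. This is exactly the one place where $\mathbf a$ could change the layers, so ``degree reasons'' is not why $\mathbf a$ drops out. Your $X$-sector argument has the same defect, with the analogous map from $\ker\partial_2^\vee$ to $K_3^\vee/\operatorname{im}\partial_3^\vee\cong S$.

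The missing ingredient is the compatibility identity $\partial_1\mathbf a=\mathbf a^{\mathsf T}\partial_3$, which your proposal never uses. Dual Koszul exactness lets you write $u=\partial_1^\dagger h$. The adjoint identity $\overline{\mathbf a}^{\mathsf T}\partial_1^\dagger=\partial_3^\dagger\overline{\mathbf a}$ then gives $\delta_0(u)=[\partial_3^\dagger\overline{\mathbf a}h]=0$. The paper's proof uses the same fact when it shows that the residual class $[c-\overline{\mathbf a}^{\mathsf T}u]$ does not depend on the choice of $u$.

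Alternatively, you can make your degree argument literally valid by augmenting each dual complex in degree $-1$:
\begin{itemize}
\item augment $\overline F^\vee$ by $\varphi_{X,\mathbf a}$;
\item augment $\overline A^\vee$ by $\partial_1^\dagger$;
\item augment $\overline B^\vee$ by $\partial_2^\dagger$.
\end{itemize}
The augmented $\overline F^\vee$ is a complex precisely because of the identity above, which is the CSS condition $\varphi_{Z,\mathbf a}^\dagger\varphi_{X,\mathbf a}=0$. The augmented sequence is still degreewise split, and its degree-$0$ cohomologies now genuinely vanish by Koszul exactness.

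Once repaired, your route is a clean conceptual repackaging of the paper's direct cocycle computation. It exhibits $M_{Z,\mathbf a}$ as an extension of $\operatorname{coker}\partial_3$ by $\operatorname{coker}\partial_2$. It also shows that the Ext long exact sequence breaks into isomorphisms, which is why the individual layers cannot see the gluing class.
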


\begin{proof}
	The module \(M_{X,\mathbf a}(\mathbf f)\) admits the finite free resolution
	\begin{equation}
		0\longrightarrow K_0^\dagger
		\xrightarrow{\;\binom{0}{\partial_1^\dagger}\;}
		K_0^\dagger\oplus K_1^\dagger
		\xrightarrow{\;\varphi_{X,\mathbf a}\;}
		K_1^\dagger\oplus K_2^\dagger
		\longrightarrow M_{X,\mathbf a}(\mathbf f)\longrightarrow0.
		\label{eq:block-extension-resolution}
	\end{equation}
	Its exactness follows from Koszul exactness.  Applying
	\(\operatorname{Hom}_R(-,R)\) to the conjugated resolution gives
	\begin{equation}
		\operatorname{Ext}_R^2
		\bigl(\overline{M_{X,\mathbf a}(\mathbf f)},R\bigr)
		\cong_R R/\operatorname{im}\partial_1
		=R/(\mathbf f)=S.
	\end{equation}
	In degree one, Koszul exactness and
	\(\mathbf a^{\mathsf T}\partial_3=\partial_1\mathbf a\) reduce every
	cocycle to \((b,0)\), with \(b\) defined modulo
	\(\operatorname{im}\partial_1\); hence
	\(\operatorname{Ext}_R^1\cong_R S\).  The resolution has length two, so the
	higher Ext modules vanish.  These identifications are \(R\)-linear, giving
	both copies of \(S\) their natural translation action.

	For the \(Z\)-sector module, Koszul exactness similarly gives
	\begin{equation}
		0\longrightarrow K_3
		\xrightarrow{\;\binom{\partial_3}{0}\;}
		K_2\oplus K_3
		\xrightarrow{\;\varphi_{Z,\mathbf a}\;}
		K_1\oplus K_2
		\longrightarrow M_{Z,\mathbf a}(\mathbf f)\longrightarrow0.
		\label{eq:block-extension-resolution-Z}
	\end{equation}
	Indeed, \(\varphi_{Z,\mathbf a}(u,v)=0\) first implies
	\(\partial_3v=0\), hence \(v=0\), and then
	\(\partial_2u=0\), hence \(u=\partial_3w\).
	After conjugating and dualizing, the cochain maps are
	\begin{equation}
		K_1^\dagger\oplus K_2^\dagger
		\xrightarrow{\;\varphi_{Z,\mathbf a}^\dagger\;}
		K_2^\dagger\oplus K_3^\dagger
		\xrightarrow{\;(\partial_3^\dagger\;0)\;}
		K_3^\dagger.
	\end{equation}
	Thus the second layer is
	\(\operatorname{coker}\partial_3^\dagger\cong_R\overline S\).
	For the first layer, every cocycle has the form
	\((\partial_2^\dagger u,c)\) and is cohomologous to
	\((0,c-\overline{\mathbf a}^{\mathsf T}u)\).  This residual class is
	well defined modulo \(\operatorname{im}\partial_3^\dagger\), since the
	adjoint compatibility relation
	\(\overline{\mathbf a}^{\mathsf T}\partial_1^\dagger
	=\partial_3^\dagger\overline{\mathbf a}\) removes the ambiguity in \(u\).
	Conversely, every \((0,c)\) is a cocycle. Hence the first layer is
	\(\operatorname{coker}\partial_3^\dagger\cong_R\overline S\), independently of
	\(\mathbf a\); the resolution length gives the vanishing in higher degrees.
	These identifications are \(R\)-linear, so both copies of \(\overline S\)
	carry their natural translation action.
\end{proof}

\subsection{Qubit toric-code specialization: ground-state degeneracy and phase distinction}

We now specialize to qubits, \(N=2\), and to the toric-code sequence
\begin{equation}
	\mathbf f_{\mathrm{TC}}=(x_1-1,x_2-1,x_3-1).
\end{equation}
Then \(S=R/(\mathbf f_{\mathrm{TC}})\cong\mathbb Z_2\), so the extension
classes are represented by the eight constant vectors
\begin{equation}
	\mathbf a=\begin{pmatrix}a_1\\a_2\\a_3\end{pmatrix}
	\in\mathbb Z_2^3.
\end{equation}

\begin{definition}[Qubit degree-\((1,2)\) toric-code extensions]
	\label{def:degree-12-toric-extension}
	For \(\mathbf a\in\mathbb Z_2^3\), set
	\begin{equation}
		\mathcal C_{\mathbf a}
		\coloneqq
		\mathcal C_{\mathbf a}(\mathbf f_{\mathrm{TC}};\mathbb Z_2)
		=
		\mathrm{TC}_1^{(3)}(\mathbb Z_2)
		\mathbin{\oplus_{\mathbf a}}
		\mathrm{TC}_2^{(3)}(\mathbb Z_2).
		\label{eq:coupled-toric-family}
	\end{equation}
\end{definition}

Fact~\ref{prop:block-extension-profile} shows that these eight models have
identical \(E_\sigma^\ell\) for \(\sigma=X,Z\) and \(\ell=1,2\). Since
\(S\cong\overline S\cong\mathbb Z_2\), translations act trivially on every
layer. In either Pauli sector, each model therefore has one nontrivial
pointlike class in the first layer and one nontrivial looplike class in the
second.
We now distinguish the models through their finite-size ground-state
degeneracies.

Place the models on a three-torus of size
\(\mathbf L=(L_1,L_2,L_3)\).  Define
\begin{equation}
	\rho_{\mathbf a}(\mathbf L)
	\coloneqq
	\begin{cases}
		0, & a_1L_2L_3=a_2L_3L_1=a_3L_1L_2=0 \text{ in }\mathbb{Z}_2,\\
		1, & \text{otherwise}.
	\end{cases}
	\label{eq:rho-E}
\end{equation}

\begin{fact}[Size-dependent ground-state degeneracy]
	\label{prop:GSD-eight-E}
	For the model \(\mathcal C_{\mathbf a}\), the number of logical qubits on
	the \(\mathbf L\)-torus is
	\begin{equation}
		\label{eq:GSD-E-formula}
		k_{\mathbf a}(\mathbf L)
		=\log_2\operatorname{GSD}_{\mathbf a}(\mathbf L)
		=6-2\rho_{\mathbf a}(\mathbf L).
	\end{equation}
\end{fact}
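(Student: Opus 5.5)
The plan is to compute the $Z$-logical module of $\mathcal C_{\mathbf a}$ on the $\mathbf L$-torus as the homology, at the physical degree $K_1\oplus K_2$, of the periodic complex $\mathcal D_{\bullet,\mathbf a}\otimes_R R_{\mathbf L}$. I will treat this complex as an extension of two periodic Koszul complexes and read off the homology from the long exact sequence.

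\textbf{Step 1: the short exact sequence.} Extend Eq.~\eqref{eq:AE-extension} to the full complex
\[
C_n\coloneqq K_n\oplus K_{n+1},\qquad
d=\begin{pmatrix}\partial&\mu\\0&\partial\end{pmatrix},
\]
where $\mu$ is the degree-$(-2)$ map given by $\mathbf a\colon K_3\to K_1$ and $-\mathbf a^{\mathsf T}\colon K_2\to K_0$. The identity $d^2=0$ is the computation already done in Fact~\ref{prop:AE-BE-topological}. The first summand is a subcomplex isomorphic to $K_\bullet$, and the quotient is $K_{\bullet+1}$. This gives a short exact sequence of complexes of free modules, so it stays exact after tensoring with $R_{\mathbf L}$. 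The associated long exact sequence reads
\[
H_3(K;R_{\mathbf L})\xrightarrow{\;\delta\;}H_1(K;R_{\mathbf L})\to H_1(C;R_{\mathbf L})\to H_2(K;R_{\mathbf L})\xrightarrow{\;\delta\;}H_0(K;R_{\mathbf L}).
\]
Here the connecting maps are induced by $\mu$.

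\textbf{Step 2: the pieces.} By Remark~\ref{rem:toric-saturation} (with $N=2$), $H_j(K;R_{\mathbf L})\cong\mathbb Z_2^{\binom{3}{j}}$. These classes are represented by sums of $j$-cells over full coordinate $j$-planes: for $j=3$ all cubes, for $j=2$ the three families of planes, for $j=1$ the three noncontractible lines, and for $j=0$ a single point. Exactness then gives
\[
k_{\mathbf a}(\mathbf L)=\bigl(3-\operatorname{rank}\delta_{3\to1}\bigr)+\bigl(3-\operatorname{rank}\delta_{2\to0}\bigr).
\]

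\textbf{Step 3: the connecting maps (main step).} Evaluate $\mu$ on these explicit cycle representatives.

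For $\delta_{3\to1}$, apply $\mathbf a$ to $\sum_{\mathbf v}x^{\mathbf v}e_{123}$. The result is $\sum_i a_i\sum_{\mathbf v}x^{\mathbf v}e_i$. The sum over all translates of $e_i$ is $L_jL_k$ copies of the basic line in direction $i$, where $\{i,j,k\}=\{1,2,3\}$. So the image is the vector $(a_1L_2L_3,\,a_2L_3L_1,\,a_3L_1L_2)$ in $\mathbb Z_2^3$.

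For $\delta_{2\to0}$, apply $-\mathbf a^{\mathsf T}$ to the plane class $\sum_{v_j,v_k}x^{\mathbf v}e_{jk}$. This gives $a_i L_jL_k$ times the point class. So $\delta_{2\to0}$ is the functional with the same three coefficients.

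Hence each connecting map has rank exactly $\rho_{\mathbf a}(\mathbf L)$ from Eq.~\eqref{eq:rho-E}, which yields $k_{\mathbf a}(\mathbf L)=6-2\rho_{\mathbf a}(\mathbf L)$.

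The step needing the most care is Step 3. There are two things to check. First, the connecting homomorphism of the cone really is $[\mu]$ on classes: lift a cycle $z\in K_{n+1}$ to $(0,z)$, apply $d$ to get $(\mu z,0)$, and the class $[\mu z]$ is the image. Second, the bookkeeping of signs and the cyclic basis of $K_2$ must be handled correctly. Over $\mathbb Z_2$ the signs are irrelevant. The cyclic ordering only affects which $e_{jk}$ pairs with $a_i$, and it is chosen so that $e_{23}$ pairs with $a_1$, matching the products $L_2L_3$ etc. in $\rho_{\mathbf a}$.
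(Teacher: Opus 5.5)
Your proposal is correct and follows essentially the paper's route. The paper computes \(H_1(\mathcal D_{\bullet,\mathbf a}\otimes_R R_{\mathbf L})\) by hand. It first requires the membrane class \(\mathbf c\) to satisfy \(\mathbf v_{\mathbf a}(\mathbf L)^{\mathsf T}\mathbf c=0\), which is your \(\ker\delta_{2\to0}\), and then identifies line classes modulo \([\mathbf a\omega]\) from the volume cycle \(\omega\), which is your \(\operatorname{coker}\delta_{3\to1}\); your long exact sequence of \(0\to K_\bullet\to C_\bullet\to K_{\bullet+1}\to0\) packages these two counts and makes the exactness bookkeeping automatic.
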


The ground-state degeneracy (GSD) of each model is summarized in
Table~\ref{tab:parity-resolved-gsd}.

\begin{table}[t]
	\centering
	\small
	\renewcommand{\arraystretch}{1.15}
	\begin{tabular}{@{}c@{\qquad}p{0.58\columnwidth}@{}}
		\toprule
		\(\mathbf a\) & Condition for
		\(\operatorname{GSD}_{\mathbf a}(\mathbf L)=2^4\) \\
		\midrule
		\((0,0,0)^{\mathsf T}\) & Never \\
		\((1,0,0)^{\mathsf T}\) & \(L_2\) and \(L_3\) are odd \\
		\((0,1,0)^{\mathsf T}\) & \(L_1\) and \(L_3\) are odd \\
		\((0,0,1)^{\mathsf T}\) & \(L_1\) and \(L_2\) are odd \\
		\((1,1,0)^{\mathsf T}\) & \((L_2,L_3)\) both odd, or \((L_1,L_3)\) both odd \\
		\((1,0,1)^{\mathsf T}\) & \((L_2,L_3)\) both odd, or \((L_1,L_2)\) both odd \\
		\((0,1,1)^{\mathsf T}\) & \((L_1,L_3)\) both odd, or \((L_1,L_2)\) both odd \\
		\((1,1,1)^{\mathsf T}\) & At least two of \(L_1,L_2,L_3\) are odd \\
		\bottomrule
	\end{tabular}
	\caption{Ground-state degeneracies of the eight qubit degree-\((1,2)\)
		toric-code extensions.  Listed are the parity classes with
		\(\operatorname{GSD}_{\mathbf a}(\mathbf L)=2^4\); all remaining classes
		have \(\operatorname{GSD}_{\mathbf a}(\mathbf L)=2^6\).}
	\label{tab:parity-resolved-gsd}
\end{table}

\begin{proof}
	By Eq.~\eqref{eq:AE-extension}, the finite-torus \(Z\)-logical space is
	the first homology
	\begin{equation}
		\mathcal H_{Z,\mathbf a}(\mathbf L)
		\coloneqq H_1\bigl(\mathcal D_{\bullet,\mathbf a}\otimes_R R_{\mathbf L}\bigr)
		=\frac{\ker\bigl(\mathcal D_{1,\mathbf a}\otimes_R R_{\mathbf L}\bigr)}
		{\operatorname{im}\bigl(\mathcal D_{2,\mathbf a}\otimes_R R_{\mathbf L}\bigr)}.
	\end{equation}
	Below, all maps are understood over \(R_{\mathbf L}\).

	For \(\mathbf a=0\), the two toric-code blocks decouple.  Set
	\begin{equation}
		\begin{aligned}
			\ell_i&\coloneqq\lambda_i e_i, &
			\lambda_i&\coloneqq1+x_i+\cdots+x_i^{L_i-1},\\
			m_{jk}&\coloneqq\lambda_j\lambda_k e_{jk},
			&e_{jk}&\coloneqq e_j\wedge e_k,
		\end{aligned}
	\end{equation}
	corresponding to the familiar loop and membrane logicals of the 3D toric code. Then
	\begin{equation}
		\mathcal H_{Z,\mathbf0}(\mathbf L)
		=\operatorname{span}_{\mathbb Z_2}\{\ell_1,\ell_2,\ell_3\}
		\oplus\operatorname{span}_{\mathbb Z_2}\{m_{23},m_{31},m_{12}\}.
	\end{equation}
	Indeed, \((x_i-1)\lambda_i=x_i^{L_i}-1=0\) in
	\(R_{\mathbf L}\).

	Now turn on \(\mathbf a\). Then
	\((g_1,g_2)\in(K_1\oplus K_2)\otimes_R R_{\mathbf L}\)
	is a cycle precisely when
	\begin{equation}
		\partial_2g_2=0,
		\qquad
		\partial_1g_1=\mathbf a^{\mathsf T}g_2.
	\end{equation}
	Modulo the ordinary toric-code boundaries, write the membrane component as
	\begin{equation}
		[g_2]=c_1[m_{23}]+c_2[m_{31}]+c_3[m_{12}],
		\quad
		\mathbf c\coloneqq(c_1,c_2,c_3)^{\mathsf T}\in\mathbb Z_2^3.
	\end{equation}
	The equation for \(g_1\) has a solution precisely when
	\(\mathbf a^{\mathsf T}g_2\) is trivial in
	\(H_0\cong\mathbb Z_2\).  Since \(x_i=1\) in \(H_0\), direct substitution
	gives
	\begin{equation}
		[\mathbf a^{\mathsf T}g_2]
		=\mathbf v_{\mathbf a}(\mathbf L)^{\mathsf T}\mathbf c\,[1],
		\qquad
		\mathbf v_{\mathbf a}(\mathbf L)
		\coloneqq
		\begin{pmatrix}
			a_1L_2L_3\\
			a_2L_3L_1\\
			a_3L_1L_2
		\end{pmatrix}.
	\end{equation}
	Thus \(\mathbf c\) must solve
	\(\mathbf v_{\mathbf a}(\mathbf L)^{\mathsf T}\mathbf c=0\), leaving
	\(3-\rho_{\mathbf a}(\mathbf L)\) independent membrane combinations.
	For each allowed \([g_2]\), fix one solution \(g_1^{(0)}\).  The difference
	between any two solutions is a \(\partial_1\)-cycle, so
	\begin{equation}
		g_1\equiv g_1^{(0)}+\sum_{i=1}^3b_i\ell_i
		\pmod{\operatorname{im}\partial_2},
		\qquad b_i\in\mathbb Z_2.
	\end{equation}
	The remaining identification among the coefficients \(b_i\) comes from a
	coupled boundary with \(\partial_3h_3=0\).  Modulo ordinary boundaries, such
	an \(h_3\) is a multiple of the volume cycle
	\(\omega\coloneqq\lambda_1\lambda_2\lambda_3e_{123}\), whose boundary is
	\begin{equation}
		\begin{aligned}
			\mathcal D_{2,\mathbf a}(0,\omega)&=(\mathbf a\omega,0),\\
			[\mathbf a\omega]
			&=a_1L_2L_3[\ell_1]+a_2L_3L_1[\ell_2]
			  +a_3L_1L_2[\ell_3].
		\end{aligned}
	\end{equation}
	Thus \(\mathbf b\sim\mathbf b+\mathbf v_{\mathbf a}(\mathbf L)\), where
	\(\mathbf b=(b_1,b_2,b_3)^{\mathsf T}\).  Hence each allowed \([g_2]\) has
	\(3-\rho_{\mathbf a}(\mathbf L)\) independent \(g_1\) classes above it.
	Combining the two counts gives
	\begin{equation}
		k_{\mathbf a}(\mathbf L)
		=\dim_{\mathbb Z_2}\mathcal H_{Z,\mathbf a}(\mathbf L)
		=2\bigl(3-\rho_{\mathbf a}(\mathbf L)\bigr)
		=6-2\rho_{\mathbf a}(\mathbf L).
	\end{equation}
	Since \(\operatorname{GSD}_{\mathbf a}(\mathbf L)=2^{k_{\mathbf a}(\mathbf L)}\),
	the claim follows.
\end{proof}

With the original translation symmetry fixed,
Table~\ref{tab:parity-resolved-gsd} shows
that the size-dependent GSD distinguishes all eight qubit models
\(\mathcal C_{\mathbf a}\).  In particular, every \(\mathbf a\neq0\) defines
a nonsplit extension, since a split extension would have
\(\operatorname{GSD}=2^6\) on every torus.  Models with different GSD
functions cannot be related by a translation-invariant FDLU, even after
adding or removing product-state ancillas, because such a circuit preserves
the ground-space dimension on every sufficiently large torus.

Size-dependent torus degeneracy is familiar from models in which translations
act nontrivially on the anyons, beginning with the even--odd effects in Wen's
plaquette model and extending to explicit translation-induced anyon
permutations~\cite{Wen2003QuantumOrders,Watanabe2023GSD,
Chen2025PRL,He2026PRB}. The
mechanism here is different: translations act trivially on every individual
\(E_\sigma^\ell\), while the GSD detects the inter-layer extension data.

\begin{fact}[Uniform coarse splitting of qubit toric-code extensions]
	\label{fact:coarse-splitting}
	For every \(\mathbf a\in\mathbb Z_2^3\), uniform
	\(2\times2\times2\) blocking gives
	\begin{equation}
		\mathcal C_{\mathbf a}
		\sim_{R_{\Lambda},\mathrm{Cl}}
		\mathcal C_{\mathbf0},
		\qquad \Lambda=2\mathbb Z^3.
	\end{equation}
	Thus all eight translation SET orders share the intrinsic topological order
	of the decoupled stack
	\begin{equation}
		\mathcal C_{\mathbf0}
		=
		\mathrm{TC}^{(3)}_1(\mathbb Z_2)
		\oplus
		\mathrm{TC}^{(3)}_2(\mathbb Z_2).
	\end{equation}
\end{fact}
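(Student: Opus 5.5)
The plan is to treat the coupling \(\mathbf a\) as a degree-\((-2)\) chain map and show that it becomes null-homotopic once only the translations in \(\Lambda=2\mathbb Z^3\) are retained. The null-homotopy then yields an upper-triangular change of CSS presentation that removes the coupling, implemented by a \(\Lambda\)-invariant CNOT circuit. Concretely, the off-diagonal blocks of \(\mathcal D_{\bullet,\mathbf a}\) define a map \(\alpha\colon K_\bullet\to K_{\bullet-2}\), with \(\alpha_3=\mathbf a\colon K_3\to K_1\) and \(\alpha_2=-\mathbf a^{\mathsf T}\colon K_2\to K_0\). The compatibility \(\partial_1\mathbf a=\mathbf a^{\mathsf T}\partial_3\) used in the topologicality proof of the degree-\((1,2)\) extensions says that \(\alpha\) is a chain map of degree \(-2\), up to the sign convention. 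Suppose we find \(R_\Lambda\)-linear maps \(H_j\colon K_j\to K_{j-1}\), viewed as maps of free \(R_\Lambda\)-modules of rank \(8\binom3j\), satisfying
\[
\alpha_3=\partial_2H_3\pm H_2\partial_3,\qquad \alpha_2=\partial_1H_2\pm H_1\partial_2 .
\]
Then conjugating \(\mathcal D_{\bullet,\mathbf a}\) by the unipotent matrices \(\begin{pmatrix}I&H\\0&I\end{pmatrix}\) in each degree turns it into \(\mathcal D_{\bullet,\mathbf 0}\). On the physical degree this is an elementary matrix in \(\operatorname{E}(R_\Lambda)\), so it is a genuine CNOT circuit (no stabilization needed). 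On the generator degrees it is a relabeling, exactly as in the Extension-classes remark. This gives \(\mathcal C_{\mathbf a}\sim_{R_\Lambda,\mathrm{Cl}}\mathcal C_{\mathbf 0}\).

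To see why the homotopy exists, interpret the homotopy classes. Over \(R\), \(K_\bullet(\mathbf f_{\mathrm{TC}})\) is a free resolution of \(\mathbb Z_2\), so degree-\((-2)\) chain maps modulo homotopy form \(\operatorname{Ext}^2_R(\mathbb Z_2,\mathbb Z_2)=H^2(\mathbb Z^3;\mathbb Z_2)\cong\bigwedge^2\mathbb Z_2^3\). The class of \(\alpha\) is \(\sum_i a_i\,\epsilon_j\wedge\epsilon_k\), where \((i,j,k)\) runs cyclically and \(\epsilon_j\in H^1(\mathbb Z^3;\mathbb Z_2)\) are the coordinate classes. This is what makes \(\mathbf a\neq0\) nonsplit over \(R\). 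Restricted to \(R_\Lambda\), the same complex is still a free resolution of \(\mathbb Z_2\), because \(R\) is free over \(R_\Lambda\). Homotopy classes are then computed by \(\operatorname{Ext}^2_{R_\Lambda}(\mathbb Z_2,\mathbb Z_2)=H^2(\Lambda;\mathbb Z_2)\), and the class of \(\alpha\) becomes its image under restriction \(H^2(\mathbb Z^3;\mathbb Z_2)\to H^2(2\mathbb Z^3;\mathbb Z_2)\). On \(H^1\) this restriction is zero, since a homomorphism \(\mathbb Z^3\to\mathbb Z_2\) kills \(2\mathbb Z^3\). Because \(H^\bullet(\mathbb Z^3;\mathbb Z_2)\) is generated as a ring by \(H^1\) (the exterior algebra), the restriction also kills \(H^2\). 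Hence \(\alpha\) is null-homotopic over \(R_\Lambda\).

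The step I expect to be the main obstacle is pinning down rigorously that the homotopy class of \(\alpha\) is exactly this cup-product class, and that "homotopy class" is computed by Ext even though the target complex is not concentrated in degree zero. The latter is standard for bounded free resolutions. For the former I would avoid the abstraction and build \(H\) explicitly, using the key identity \((x_i-1)(1+x_i)=x_i^2-1\) over \(\mathbb Z_2\), whose right side lies in the augmentation ideal of \(R_\Lambda\).

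The first attempt is \(\mathbf a=e_1\) (that is, \(a=(1,0,0)^{\mathsf T}\)); the other cases follow by linearity and symmetry. I would search for \(H_3\colon K_3\to K_2\) and \(H_2\colon K_2\to K_1\) with entries built from the \(R_\Lambda\)-linear but not \(R\)-linear "parity projectors" on the \(x_2,x_3\) sublattice cosets. In a chosen \(R_\Lambda\)-basis \(\{x^{\mathbf u}:\mathbf u\in\{0,1\}^3\}\), multiplication by \(x_i-1\) becomes an explicit \(8\times8\) matrix over \(R_\Lambda\), and the homotopy equations become a finite linear system over \(R_\Lambda\). The Ext computation guarantees that this system is solvable, and solving it (or checking a guessed solution) gives the explicit circuit.

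As a consistency check, the resulting \(\Lambda\)-invariant equivalence must be compatible with Table~\ref{tab:parity-resolved-gsd}. On tori with all \(L_a\) even, every \(\rho_{\mathbf a}(\mathbf L)\) vanishes, which matches.
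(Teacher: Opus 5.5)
Your proposal is correct and follows the paper's overall plan, but it establishes the key step differently.

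\textbf{Shared skeleton.} The paper also exhibits a null-homotopy $(G,H,Q)$ of the coupling over $R_\Lambda$, with $\mathbf a=\partial_2G+H\partial_3$ and $\mathbf a^{\mathsf T}=Q\partial_2+\partial_1H$. It then uses the unitriangular intertwiners $V_2,V_1,V_0$: $V_1$ is a $\Lambda$-invariant CNOT circuit and $V_0,V_2$ only relabel generators, so no ancillas are needed.

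\textbf{Where you differ.} The difference is in how the homotopy is obtained. The paper constructs it explicitly. It defines block-local $R_\Lambda$-linear operators $\mathsf T_j$, essentially your parity projectors: $\mathsf T_j$ multiplies by $x_j^{-1}$ on the cosets with $\epsilon_j=1$ and vanishes on those with $\epsilon_j=0$. These satisfy $\mathsf T_jf_j+f_j\mathsf T_j=\operatorname{id}$ and $[\mathsf T_j,f_\ell]=0$ for $\ell\neq j$. From them the paper writes down $G_i,H_i,Q_i$ for each cyclic triple. You argue abstractly instead. The obstruction lives in $\operatorname{Ext}^2_{R_\Lambda}(\mathbb Z_2,\mathbb Z_2)\cong H^2(\Lambda;\mathbb Z_2)$ and is the restriction of a class in $H^2(\mathbb Z^3;\mathbb Z_2)$. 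That restriction vanishes because the mod-$2$ cohomology ring of $\mathbb Z^3$ is generated by $H^1$, and $H^1$ restricts to zero on $2\mathbb Z^3$.

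\textbf{Your flagged obstacle is unnecessary.} Since restriction kills all of $H^2$, you never need the precise class of $\alpha$. You only need two facts: a degree-$(-2)$ chain map between free resolutions is null-homotopic once its composite with the augmentation is a coboundary (the comparison theorem), and this identification is compatible with restriction of scalars.

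\textbf{Existence already gives a local circuit.} Any $R_\Lambda$-linear map between finite free $R_\Lambda$-modules has bounded range. So existence alone already yields a finite-depth, block-local circuit; solving your linear system is only needed to write the gates down.

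\textbf{Trade-off.} The paper's formulas give the circuit outright. Your route explains conceptually why $2\times2\times2$ blocking works. It also gives a criterion for which other blockings admit this unitriangular splitting. For example, the class for $\mathbf a=(1,0,0)^{\mathsf T}$, the cochain dual to $e_{23}$, already dies after doubling only along $x_2$. This is consistent with Table~\ref{tab:parity-resolved-gsd}.

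\textbf{One omission.} Add the one-line observation that blocking and Clifford circuits preserve the intrinsic phase. That yields the Fact's second sentence, which your proposal does not address.
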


Appendix~\ref{app:coarse-splitting} constructs the Clifford circuit and
shows explicitly how uniform blocking removes every coupling. Thus the
individual layers \(E_\sigma^\ell\) in both Pauli sectors and their translation
actions do not determine the full SET order with the original translation
symmetry fixed:
the coupling \(\mathbf a\), invisible to those layers, remains detectable
through the size-dependent GSD. Fact~\ref{fact:coarse-splitting} also shows
that these SET distinctions disappear after uniform blocking and
are therefore not intrinsic.

\section{Discussion and outlook}
\label{sec:postnikov-outlook}

This work gives a homological characterization of regular Koszul-complex
stabilizer models at three complementary levels.  On the infinite lattice,
regularity guarantees topological exactness, while the \(\operatorname{Ext}\)
calculation identifies the unique nonzero superselection layers with the
Koszul structure module \(S=R/(\mathbf f)\), up to spatial inversion.  On a
finite torus, the logical module is computed by \(\operatorname{Tor}\), making
the ground-state degeneracy and its system-size dependence consequences of
the arithmetic overlap between \(S\) and the periodic boundary conditions.
For full-length regular sequences, these results culminate in the
circuit-level normal form of
Corollary~\ref{cor:koszul-multistack-rigidity}: after the coarse graining fixed
by the mobility lattice of \(S\), the model is stably Clifford equivalent to a
stack of toric codes.  Thus the same structure module controls excitation
fusion and translation, finite-size logical structure, and the toric-code
decomposition.

Beyond the Koszul family, we introduced the topological superselection profile
\(\mathscr S_\sigma\) as a positive-degree derived invariant of a
translation-invariant CSS code.  Its cohomology modules are the
superselection layers: the first recovers the conventional pointlike sectors,
while the higher layers organize excitations associated with relations among
stabilizers and higher relations.  Their \(R\)-module structures retain the
lattice-translation action, whereas the full derived object can also retain
extension data connecting different layers.  This formulation separates the
topological excitation data from arbitrary choices of stabilizer generators
and provides a common language for single- and multi-layer CSS models.

For single-layer profiles, this framework yields a sharp
existence-and-uniqueness theory.  Recasting the generalized Schanuel lemma as a
statement about presentation maps leads to the topological rigidity theorem:
when the reduced Pauli module admits a finite-length free resolution, the
nonzero layer together with its degree determines the stable
translation-invariant Clifford class.  This resolution hypothesis is
automatic for prime qudits.  We also give a self-contained CSS derivation of
the known grade bound; our converse construction shows that it is sharp by
realizing every admissible module.  Under the additional finiteness
hypotheses of Corollary~\ref{thm:minimal-coarse-graining}, the translation
action further determines the exact coarse graining required to reach a
toric-code-stack normal form.

The three-dimensional qubit toric-code specialization of the
degree-\((1,2)\) extensions in
Sec.~\ref{sec:non-splitting-extensions} shows why the single-layer hypothesis
is essential.  Its split and nonsplit members have identical layers
\(E_\sigma^\ell\) for \(\sigma=X,Z\) and \(\ell=1,2\), with trivial translation
action on each,
yet their ground-state degeneracies depend differently on system size.  With
the original translation symmetry fixed, the eight models therefore
realize distinct translation SET orders.  Thus the individual layers
\(E_\sigma^\ell\) are not complete invariants in the multi-layer setting: the
off-diagonal coupling carries additional inter-layer gluing data.
Fact~\ref{fact:coarse-splitting} shows that uniform \(2\times2\times2\)
blocking removes every coupling and maps all eight models to the same decoupled
toric-code stack by a finite-depth Clifford circuit. Hence the eight models
share one intrinsic topological order; their differences are SET distinctions
tied to the original translation symmetry.

This classification with the original translation symmetry fixed complements
recent bulk--boundary and \(L\)-theoretic approaches~\cite{RubaYang2025Witt,
GeikoShuklin2025Invertible,YangYu2026Classification}, which employ broader
equivalence relations involving condensation, gapped interfaces, stabilization,
and coarse graining. Appendix~\ref{app:pm-obstruction} illustrates, in a
trivial-charge example, how these notions can differ from stable Clifford
equivalence.

Two directions follow naturally.  First, a systematic multi-layer
classification should retain inter-layer gluing through finite Postnikov
stages of \(\mathscr S_\sigma\), with cohomological descent and
Maurer--Cartan equations organizing the corresponding compatibility
conditions.  Second, over \(\mathbb Z_{p^m}\), the finite-length resolution
hypothesis marks a limitation of the Clifford framework. Extending rigidity
within the CSS setting to general finite-depth local unitaries may remove this
algebraic restriction, while an extension beyond CSS will require additional
hypotheses. We leave both developments to future work.

\emph{Note added.---}Near the completion of this work, we became aware of independent work by Wang \emph{et al.}~\cite{Wang2026Decoupling}. The two works overlap in establishing the same existence result for decoupling two-dimensional translation-invariant topological CSS codes on qubits under the maximal anyon-preserving translation symmetry. Their approach is constructive, providing an explicit polynomial-time algorithm to build the decoupling unitary, which directly complements our general rigidity result (Corollary~\ref{thm:minimal-coarse-graining}) based on Schanuel's lemma. We coordinated the timing of our arXiv postings and thank the authors for sharing their results prior to public release.

\appendix

\section{Extended excitation from cohomological descent}
\label{app:cohomological_descent}

The cohomological descent takes a particularly simple and intuitive form for a single-layer profile whose nonzero layer $\operatorname{Ext}_R^\ell (\overline{\operatorname{coker}d_1}, R)$ is a finite set. In this scenario, the directional mobility periods $\{J_i\}$ are all finite, allowing the use of binomials $\Delta_i \coloneqq x_i^{J_i} - 1$ to generate legitimate sweepings.

Let $e \in F_\ell^\dagger$ represent a nontrivial class $[e] \in \operatorname{Ext}_R^\ell$. We can algorithmically construct the corresponding physical excitation by sweeping $e$ along $\ell$ independent directions:
\begin{enumerate}
	\item \textbf{1D sweep:} Because $\Delta_1$ annihilates the class $[e]$, the translated dipole $\Delta_1 e$ is exact. Thus, there exists a 1D chain $c_1 \in F_{\ell-1}^\dagger$ such that
	\begin{equation}
		\Delta_1 e = d_\ell^\dagger c_1.
	\end{equation}

	\item \textbf{2D sweep:} Choose $c_2\in F_{\ell-1}^\dagger$ analogously,
	so that $d_\ell^\dagger c_2=\Delta_2e$. Antisymmetrizing in the
	orientation used below gives
	$d_\ell^\dagger(\Delta_1c_2-\Delta_2c_1)
	=\Delta_1\Delta_2e-\Delta_2\Delta_1e=0$.
	Because the intermediate superselection layer vanishes
	($\operatorname{Ext}_R^{\ell-1}=0$), this closed pattern is exact. Hence
	there exists a 2D sheet $c_{12} \in F_{\ell-2}^\dagger$ such that
	\begin{equation}
		\Delta_1 c_2 - \Delta_2 c_1 = d_{\ell-1}^\dagger c_{12}.
	\end{equation}

	\item \textbf{General descent:} We iterate this alternating sum for $k$ directions. At the $k^{\text{th}}$ step, the vanishing of $\operatorname{Ext}_R^{\ell-k+1}$ guarantees the existence of a $k$-dimensional pattern $c_{1\dots k} \in F_{\ell-k}^\dagger$ satisfying
	\begin{equation}
		d_{\ell-k+1}^\dagger c_{1\dots k} = \sum_{a=1}^k (-1)^{a-1} \Delta_a c_{1\dots \hat{a}\dots k},
	\end{equation}
	where the hat denotes the omission of the index $a$.
	At odd $N$, the alternating signs in this equation are essential.
\end{enumerate}

Terminating this descent at $k = \ell$ yields an element $c_{12\dots \ell} \in F_0^\dagger$. This element represents a Pauli operator $\xi$ supported on an $\ell$-dimensional hypercube. It generates an $(\ell-1)$-dimensional extended excitation along the boundary of that hypercube. By patching together $\xi$ and its translations, one can construct operators that generate arbitrarily large $(\ell-1)$-dimensional extended excitations. This explicit formula directly bridges the abstract algebraic class in $\operatorname{Ext}_R^\ell$ to concrete realizations of $(\ell-1)$-dimensional extended excitations in the code.

When $\operatorname{Ext}_R^\ell (\overline{\operatorname{coker}d_1}, R)$ is infinite, we may instead need to use sweepings that correspond to fractonic moves~\cite{Haah2011PRA,Yoshida2013,Vijay2016,Song2024PRL} or their higher-dimensional generalizations.

\section{Chinese remainder decomposition of cyclic-qudit stabilizer models}
\label{app:crt-stabilizer-decomposition}

This appendix supplies the physical counterpart of the ring decomposition in
Eq.~\eqref{eq:CRT}.

\begin{proposition}[Physical Chinese remainder decomposition]
	\label{prop:physical-crt-decomposition}
	Let $N=\prod_\alpha n_\alpha$, where the $n_\alpha$ are pairwise
	coprime. An onsite basis relabeling identifies the cyclic-$N$ Pauli module
	with the product of the cyclic-$n_\alpha$ Pauli modules, or equivalently
	identifies their Pauli operators modulo scalar phases. Under this
	identification, translation-invariant stabilizer models, local Clifford
	circuits, and stable Clifford equivalence decompose and recombine
	componentwise for any fixed translation subgroup.
\end{proposition}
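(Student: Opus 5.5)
The plan is to build the identification explicitly at the level of a single qudit and then extend it to everything translation-invariant, since an onsite relabeling commutes with translations. Write $N=\prod_\alpha n_\alpha$ and fix the CRT ring isomorphism $\mathbb Z_N\cong\prod_\alpha\mathbb Z_{n_\alpha}$. Choose idempotents $\epsilon_\alpha\in\mathbb Z_N$ with $\epsilon_\alpha\equiv\delta_{\alpha\beta}\pmod{n_\beta}$. The Hilbert space $\mathbb C^N=\mathbb C[\mathbb Z_N]$ factors as $\bigotimes_\alpha\mathbb C[\mathbb Z_{n_\alpha}]$ through $|z\rangle\mapsto\bigotimes_\alpha|z\bmod n_\alpha\rangle$. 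Under this unitary relabeling, $X$ maps to $\bigotimes_\alpha X_{(\alpha)}$. For the phase, $\eta^{z}=\prod_\alpha\eta_\alpha^{c_\alpha z}$, where $\eta_\alpha=e^{2\pi\mathrm i/n_\alpha}$ and $c_\alpha=(N/n_\alpha)^{-1}\bmod n_\alpha$ is a unit mod $n_\alpha$. Hence $Z\mapsto\bigotimes_\alpha Z_{(\alpha)}^{c_\alpha}$. Consequently the generalized Pauli group mod phases, $\mathbb Z_N^2$, splits as $\prod_\alpha\mathbb Z_{n_\alpha}^2$. The symplectic commutation form also splits, up to rescaling the $Z$-coordinate in each component by the unit $c_\alpha$. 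This rescaling is a diagonal onsite automorphism that preserves CSS type.

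Next I would lift the identification to the modules. Tensoring the single-site statement over all sites gives $P_X\cong\bigoplus_\alpha P_X^{(\alpha)}$ and $P_Z\cong\bigoplus_\alpha P_Z^{(\alpha)}$ as modules over $R\cong\prod_\alpha R_\alpha$, where $R_\alpha=\mathbb Z_{n_\alpha}[\mathbb Z^D]$. Because every $R$-module $M$ splits canonically as $\bigoplus_\alpha\epsilon_\alpha M$ and every $R$-linear map is diagonal with respect to these splittings, the following all decompose into componentwise data:
\begin{itemize}
\item a stabilizer model, i.e.\ a submodule, or for CSS the maps $\varphi_X,\varphi_Z$;
\item the dagger duals, which commute with the idempotents since $\overline{\epsilon_\alpha}=\epsilon_\alpha$;
\item topological exactness;
\item the Ext layers;
\item the periodic quotients.
\end{itemize}
The same holds with $R$ replaced by $\mathbb Z_N[\Lambda]$ for any fixed translation subgroup $\Lambda$. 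Conversely, componentwise data recombine by direct sum.

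For circuits, I will use that a $\Lambda$-invariant Clifford circuit is determined, modulo phases, by its symplectic automorphism of $P_X\oplus P_Z$; in the CSS case this is the $\operatorname{E}_q$ action. An elementary matrix $I+rE_{ij}$ splits as $\prod_\alpha(I+\epsilon_\alpha rE_{ij})$, so the group $\operatorname{E}_q(R)$ factors as $\prod_\alpha\operatorname{E}_q(R_\alpha)$. In the other direction, componentwise gates lift by CRT to gates on the $\mathbb Z_N$ qudits that act on a single factor. Applied to $\operatorname{GL}$ changes of generators and to the padding blocks $I_\alpha\oplus0_{\beta,\eta}$, the same factorization shows that $\sim_{R,\mathrm{Cl},\mathrm{st}}$ holds if and only if it holds in every component.

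The main obstacle is padding of unequal sizes. Stabilizing one component by $I_k$ forces an ancilla of the full $\mathbb Z_N$ type, which also pads the other components. I would handle this by padding every component with the maximum of the required block sizes. Extra identity and zero blocks preserve componentwise stable equivalence, so equivalence in each component still holds after this uniform padding. A smaller point is that for non-CSS circuits one must check that the phase conventions (the $c_\alpha$ rescalings) are compatible with the symplectic lift. This holds because the rescalings are onsite and diagonal.
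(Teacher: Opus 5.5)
Your setup is correct and is essentially the paper's own argument. It consists of the onsite unitary $\lvert z\rangle\mapsto\bigotimes_\alpha\lvert z\bmod n_\alpha\rangle$ with $X\mapsto\bigotimes_\alpha X_{(\alpha)}$ and $Z\mapsto\bigotimes_\alpha Z_{(\alpha)}^{c_\alpha}$, the idempotent splitting of all $R$-modules and $R$-linear maps (compatible with daggers since $\overline{\epsilon_\alpha}=\epsilon_\alpha$), and the factorization $\operatorname{E}_q(R)=\prod_\alpha\operatorname{E}_q(R_\alpha)$.

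\textbf{Where it fails.} The gap is your resolution of the padding problem. Padding must be added to both sides of an equivalence at once. So in each component $\alpha$, the left-minus-right number of zero-row blocks ($Z$-polarized ancillas) cannot be adjusted. It is forced to equal the difference of the fiber dimensions of $\operatorname{coker}\varphi'_\alpha$ and $\operatorname{coker}\varphi_\alpha$ at a maximal ideal of $R_\alpha$. ``Padding with the maximum'' does not equalize these differences across $\alpha$, and they need not agree.

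\textbf{Counterexample.} Take $N=6$, one qudit per cell, and $\varphi_X=(3)$ versus $\varphi_X'=(2)$. These have stabilizer groups $\langle X^3,Z^2\rangle$ and $\langle X^2,Z^3\rangle$, which under CRT are $\lvert+\rangle_2\otimes\lvert0\rangle_3$ and $\lvert0\rangle_2\otimes\lvert+\rangle_3$. Both component pairs are stably equivalent, but mod $2$ the left side needs an extra zero row, while mod $3$ the right side does. Define $d_p(\psi)\coloneqq\dim_{R/\mathfrak m_p}(\operatorname{coker}\psi\otimes_R R/\mathfrak m_p)$ for maximal ideals $\mathfrak m_2\ni2$ and $\mathfrak m_3\ni3$ of $R$.
\begin{itemize}
\item Multiplication by invertible matrices on either side leaves $d_2-d_3$ unchanged, and so does $\oplus I_k$.
\item $\oplus 0_{b,c}$ raises $d_2$ and $d_3$ by the same $b$, so again $d_2-d_3$ is unchanged.
\item $d_2-d_3$ equals $-1$ for $(3)$ and $+1$ for $(2)$.
\end{itemize}
Hence no padding by $\mathbb Z_6$ ancillas in $\lvert0\rangle$ or $\lvert+\rangle$ recombines the two componentwise equivalences, even under $\sim_{R,\mathrm{st}}$.

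\textbf{What is missing.} Component sectors must be padded independently, as the paper does when it pads ``component sectors with trivial ancillas.'' After $U_{\mathrm{CRT}}^\dagger$, a $\lvert0\rangle$ ancilla on factor $\alpha$, together with $\lvert+\rangle$ on the remaining factors, is the on-site $\mathbb Z_N$ product state stabilized by $X^{1-\epsilon_\alpha}$ and $Z^{\epsilon_\alpha}$. Its $X$-block $(1-\epsilon_\alpha)$ adds the projective but non-free summand $\epsilon_\alpha R\cong R_\alpha$ to $\operatorname{coker}\varphi_X$. With such ancillas, adjoin $b_\alpha$ of them to one code and $b'_\alpha$ to the other. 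The componentwise isomorphisms $\operatorname{coker}\varphi_\alpha\oplus R_\alpha^{b_\alpha}\cong\operatorname{coker}\varphi'_\alpha\oplus R_\alpha^{b'_\alpha}$ then assemble into a single isomorphism over $R$. Lemma~\ref{lem:strict-iso-stable-eq} and Proposition~\ref{lem:stable-clifford-equivalence}, applied over $R$, give the $\mathbb Z_N$ stable Clifford equivalence. So the recombination half of the statement requires counting this broader class of on-site product-state ancillas as stabilization; uniform $I_k\oplus0_{b,c}$ padding cannot supply it.
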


\begin{proof}
	Set $N_\alpha\coloneqq N/n_\alpha$ and choose $u_\alpha$ such that
	$u_\alpha N_\alpha\equiv1\pmod{n_\alpha}$. The onsite unitary
	\begin{equation}
		U_{\mathrm{CRT}}\lvert z\rangle_N
		=
		\bigotimes_\alpha
		\lvert z\bmod n_\alpha\rangle_{n_\alpha}
		\label{eq:onsite-crt-unitary}
	\end{equation}
	is a basis relabeling by the Chinese Remainder Theorem. Writing
	$\omega_n\coloneqq\exp(2\pi\mathrm{i}/n)$, CRT reconstruction gives
	\begin{equation}
		\begin{aligned}
			z
			&\equiv
			\sum_\alpha N_\alpha u_\alpha(z\bmod n_\alpha)
			\pmod N,\\
			\omega_N^z
			&=
			\prod_\alpha
			\omega_{n_\alpha}^{u_\alpha(z\bmod n_\alpha)}.
		\end{aligned}
	\end{equation}
	Therefore
	\begin{equation}
		U_{\mathrm{CRT}}X_NU_{\mathrm{CRT}}^\dagger
		=
		\bigotimes_\alpha X_{n_\alpha},
		\qquad
		U_{\mathrm{CRT}}Z_NU_{\mathrm{CRT}}^\dagger
		=
		\bigotimes_\alpha Z_{n_\alpha}^{u_\alpha}.
		\label{eq:crt-pauli-factorization}
	\end{equation}
	Each $u_\alpha$ is a unit modulo $n_\alpha$, so these operators generate
	the full product of the component Pauli modules.

	Algebraically, the elements
	$e_\alpha\coloneqq N_\alpha u_\alpha\in\mathbb Z_N$ are orthogonal
	idempotents satisfying
	\begin{equation}
		e_\alpha e_\beta
		=
		\delta_{\alpha\beta}e_\alpha,
		\qquad
		\sum_\alpha e_\alpha=1.
	\end{equation}
	Hence every $\mathbb Z_N$-module $M$ splits as
	$M=\bigoplus_\alpha e_\alpha M$, and the same coefficientwise splitting
	applies over the Laurent ring. It does not change spatial support, so
	locality and translation invariance are preserved. Componentwise local
	Clifford circuits $V_\alpha$ recombine as
	\begin{equation}
		U_{\mathrm{CRT}}^\dagger
		\left(\bigotimes_\alpha V_\alpha\right)
		U_{\mathrm{CRT}}.
	\end{equation}
	Finally,
	\begin{equation}
		U_{\mathrm{CRT}}\lvert0\rangle_N
		=
		\bigotimes_\alpha\lvert0\rangle_{n_\alpha},
		\qquad
		U_{\mathrm{CRT}}\lvert+\rangle_N
		=
		\bigotimes_\alpha\lvert+\rangle_{n_\alpha}.
	\end{equation}
	After padding component sectors with trivial ancillas when necessary,
	stabilization therefore decomposes and recombines in the same way. Thus two
	models are stably equivalent by a circuit invariant under a fixed
	translation subgroup if and only if all their prime-power reductions are.
\end{proof}

\section{Proof of Proposition~\ref{prop:stable-module-matrix}}
\label{app:stable-module-matrix-proof}

In this appendix, we prove the $(2) \Rightarrow (1)$ direction of Proposition~\ref{prop:stable-module-matrix}: if two presentation matrices have freely stably \(R\)-isomorphic cokernels, they are freely stably \(R\)-equivalent. We first establish the result for the special case of strictly \(R\)-isomorphic cokernels.

\begin{lemma}[Stable \(R\)-equivalence from strict \(R\)-isomorphism]
	\label{lem:strict-iso-stable-eq}
	Let $\varphi \colon R^b \to R^q$ and $\varphi' \colon R^{b'} \to R^{q'}$ be presentation matrices. If $\operatorname{coker}\varphi \cong_R \operatorname{coker}\varphi'$, then $\varphi \sim_{R,\mathrm{st}} \varphi'$.
\end{lemma}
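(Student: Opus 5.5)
The plan is the classical Fitting-type argument, written out in matrix form. Let $\theta\colon\operatorname{coker}\varphi\xrightarrow{\sim}\operatorname{coker}\varphi'$ be the given isomorphism, and let $\pi\colon R^q\to\operatorname{coker}\varphi$ and $\pi'\colon R^{q'}\to\operatorname{coker}\varphi'$ be the canonical projections. Because $R^q$ and $R^{q'}$ are free, we can lift. Choose $A\colon R^q\to R^{q'}$ with $\pi' A=\theta\pi$, and choose $B\colon R^{q'}\to R^q$ with $\pi B=\theta^{-1}\pi'$. Then $\pi'(A\varphi)=0$ and $\pi(B\varphi')=0$, so there are matrices $C$ and $C'$ with $A\varphi=\varphi' C$ and $B\varphi'=\varphi C'$. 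Likewise $\pi(I-BA)=0$ and $\pi'(I-AB)=0$, so there are matrices $U$ and $U'$ with $I_q-BA=\varphi U$ and $I_{q'}-AB=\varphi' U'$.

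Next we pass to a common ambient module. Stabilize $\varphi$ to $\Phi\coloneqq\varphi\oplus I_{q'}\colon R^{b+q'}\to R^{q+q'}$, and stabilize $\varphi'$ to $\Phi'\coloneqq I_q\oplus\varphi'\colon R^{q+b'}\to R^{q+q'}$. Both are obtained by adjoining identity blocks. After a permutation of rows and columns, which is an invertible change of basis, they have the same target $R^{q+q'}$ and cokernels $\operatorname{coker}\varphi$ and $\operatorname{coker}\varphi'$ respectively. The key step is to write down an explicit invertible $\Xi\in\operatorname{GL}_{q+q'}(R)$ that induces $\theta$ on cokernels and carries $\operatorname{im}\Phi$ onto $\operatorname{im}\Phi'$. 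The natural candidate is the product of elementary block matrices
\[
\Xi=\begin{pmatrix} I & -B\\ 0 & I\end{pmatrix}\begin{pmatrix} I & 0\\ A & I\end{pmatrix},
\]
possibly corrected by an additional block factor involving $U$ or $U'$ and a block swap. This is the matrix form of the standard isomorphism $R^q\oplus R^{q'}\cong R^q\oplus R^{q'}$ that sends $\ker\pi\oplus R^{q'}$ to $R^q\oplus\ker\pi'$.

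We will check directly that $\Xi\,\operatorname{im}\Phi=\operatorname{im}\Phi'$. This uses the identities $A\varphi=\varphi'C$ and $\pi'A=\theta\pi$, together with the fact that the complementary identity blocks generate everything in their own summand. The inverse inclusion comes from the same computation with $\Xi^{-1}$, $B$ and $C'$.

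Once we know $\Xi\Phi$ and $\Phi'$ have the same image, it remains to show they differ by right multiplication by an invertible matrix, after further padding by zero columns. We do this with the standard trick for two generating sets of the same submodule. Write $\Xi\Phi=\Phi' Y$ and $\Phi'=\Xi\Phi\, Y'$, and set $\Psi\coloneqq\Xi\Phi\oplus 0$ and $\Psi'\coloneqq\Phi'\oplus 0$. Then
\[
\Psi'=\Psi\begin{pmatrix} I & Y'\\ 0 & I\end{pmatrix}\begin{pmatrix} I & 0\\ -Y & I\end{pmatrix}\cdots,
\]
that is, $(\Xi\Phi\;\;0)$ becomes $(\Xi\Phi\;\;\Phi')$ by a column operation, then $(0\;\;\Phi')$ by another, then a column swap finishes. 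Every factor here is invertible. This exhibits $\varphi\oplus I_{q'}\oplus 0\sim_R I_q\oplus\varphi'\oplus 0$, i.e.\ $\varphi\sim_{R,\mathrm{st}}\varphi'$.

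The main obstacle is the choice of $\Xi$ and the check that it maps images exactly onto each other. The bookkeeping with $U$ and $U'$, which record that $A$ and $B$ are inverse only modulo the images, is where sign and ordering errors are likely. If the two-factor candidate does not close up, the fallback is to build $\Xi$ as the three-step matrix $\begin{pmatrix} I&B\\0&I\end{pmatrix}\begin{pmatrix} I&0\\-A&I\end{pmatrix}\begin{pmatrix} I&B\\0&I\end{pmatrix}$, which is invertible and maps $(u,0)\mapsto(\ast,\ast)$ with the first coordinate equal to $(I-BA)u\in\operatorname{im}\varphi$. Images are then compared coordinatewise. In either version, after this lemma the general freely stable case follows immediately: absorb the extra free summands $R^m$ and $R^n$ as identity-free zero blocks, i.e.\ replace $\varphi$ by $\varphi\oplus 0_{m,0}$, and apply the lemma.
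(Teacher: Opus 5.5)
Your overall strategy is sound and is essentially the paper's. You lift \(\theta\), produce two stabilized presentations of \(M\) whose column spaces in \(R^{q+q'}\) can be matched, and finish with the \((\Psi\ \ 0)\to(\Psi\ \ \Psi')\to(0\ \ \Psi')\) column trick. The one concrete error is your first candidate for \(\Xi\). Multiplying out,
\[
\begin{pmatrix} I & -B\\ 0 & I\end{pmatrix}\begin{pmatrix} I & 0\\ A & I\end{pmatrix}=\begin{pmatrix} I-BA & -B\\ A & I\end{pmatrix},
\]
whose second block row is \((A\ \ I)\). It therefore sends the column \((0,e_j)\) of \(\Phi\) to \((-Be_j,e_j)\), and \(e_j\notin\operatorname{im}\varphi'\) in general, so \(\Xi\operatorname{im}\Phi\not\subseteq\operatorname{im}\Phi'\).

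To see what it does instead, write \(\mathrm{pr}_1,\mathrm{pr}_2\) for the projections of \(R^q\oplus R^{q'}\). Then \(\operatorname{im}\Phi=\ker(\pi\,\mathrm{pr}_1)\) and \(\operatorname{im}\Phi'=\ker(\pi'\,\mathrm{pr}_2)\). Your candidate satisfies \(\pi\,\mathrm{pr}_1\Xi=-\theta^{-1}\pi'\,\mathrm{pr}_2\), so it carries \(\operatorname{im}\Phi'\) onto \(\operatorname{im}\Phi\), the opposite direction. You can either use it that way, since the rest of your argument is symmetric, or reverse the order of the factors:
\[
\begin{pmatrix} I & 0\\ A & I\end{pmatrix}\begin{pmatrix} I & -B\\ 0 & I\end{pmatrix}=\begin{pmatrix} I & -B\\ A & I-AB\end{pmatrix}.
\]
For this matrix, \(\pi'(Au+(I-AB)v)=\theta\pi(u)\), i.e.\ \(\pi'\,\mathrm{pr}_2\Xi=\theta\,\pi\,\mathrm{pr}_1\), which gives \(\Xi\operatorname{im}\Phi=\operatorname{im}\Phi'\). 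No correction by \(U\), \(U'\) or a block swap is needed in \(\Xi\) itself. Those identities, together with \(C\) and \(C'\), enter only when you check columns, as you anticipated.

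Your three-factor fallback is also valid: its second block row is \((-A\ \ I-AB)\). However, the check you quote for it is the wrong direction. Showing that the first coordinate \((I-BA)u\) of \(\Xi(u,0)\) lies in \(\operatorname{im}\varphi\) proves \(\Xi\operatorname{im}\Phi'\subseteq\operatorname{im}\Phi\), not the inclusion you announced. That matrix happens to work in both directions.

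The paper avoids guessing \(\Xi\) altogether. In your notation, it adjoins the other generating set to each presentation: \(\varphi_1=\begin{pmatrix}\varphi&-B\\0&I_{q'}\end{pmatrix}\) and \(\varphi_1'=\begin{pmatrix}I_q&0\\-A&\varphi'\end{pmatrix}\). These present \(M\) on the same generators, so their column spaces are literally equal; both are the relation module of the combined generating set. Each is one block row operation away from \(\varphi\oplus I_{q'}\), respectively \(I_q\oplus\varphi'\). Composing those two row operations gives exactly the corrected \(\Xi\) above, so the two routes are the same argument. The paper's packaging makes equality of column spaces a tautology. Yours replaces it with an explicit verification using the four lifting identities, and that verification is where the factor-order slip crept in.
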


\begin{proof}
	Let $M \coloneqq \operatorname{coker}\varphi \cong_R \operatorname{coker}\varphi'$. Let $\{x_i\}_{i=1}^q$ and $\{y_j\}_{j=1}^{q'}$ be the images in $M$ of the standard bases of $R^q$ and $R^{q'}$, respectively. Because both sets generate $M$, there exist matrices $C \in M_{q \times q'}(R)$ and $D \in M_{q' \times q}(R)$ such that
	\begin{equation}
		\begin{split}
			(y_1,\dots,y_{q'}) &= (x_1,\dots,x_q)C, \\
			(x_1,\dots,x_q) &= (y_1,\dots,y_{q'})D.
		\end{split}
	\end{equation}

	Starting from $\varphi$, we adjoin the $\{y_j\}$ generators and their defining relations to obtain the matrix
	\begin{equation}
		\varphi_1 =
		\begin{pmatrix}
			\varphi & -C\\
			0 & I_{q'}
		\end{pmatrix}.
	\end{equation}
	By elementary column operations, $\varphi_1$ is strictly \(R\)-equivalent to $\varphi \oplus I_{q'}$. Symmetrically, adjoining the $\{x_i\}$ generators to $\varphi'$ yields
	\begin{equation}
		\varphi'_1 =
		\begin{pmatrix}
			I_q & 0\\
			-D & \varphi'
		\end{pmatrix},
	\end{equation}
	which is strictly \(R\)-equivalent to $I_q \oplus \varphi'$.

	By construction, $\varphi_1$ and $\varphi'_1$ present the same module $M$ with respect to the identical generating set $\{x_1,\dots,x_q, f_1,\dots,f_{q'}\}$. Therefore, they share the exact same column space in $R^{q+q'}$. Hence, there exist matrices $T$ and $S$ such that
	\begin{equation}
		\varphi'_1 = \varphi_1 T, \qquad \varphi_1 = \varphi'_1 S.
	\end{equation}

	Consider the block matrix $C_0 = [\varphi_1 \ \ \varphi'_1]$. Right multiplication by the invertible matrix
	\begin{equation}
		\begin{pmatrix}
			I & -T \\
			0 & I
		\end{pmatrix}
	\end{equation}
	transforms $C_0$ into $[\varphi_1 \ \ 0_{q+q', q+b'}]$. Thus, $C_0$ is strictly \(R\)-equivalent to $\varphi_1 \oplus 0_{0, q+b'}$ (where $0_{0, q+b'}$ denotes $q+b'$ trivial zero columns). Symmetrically, right multiplication utilizing $S$ via analogous column operations transforms $C_0$ into $[\varphi'_1 \ \ 0_{q+q', b+q'}]$, proving $C_0 \sim_R \varphi'_1 \oplus 0_{0, b+q'}$.

	By transitivity of strict \(R\)-equivalence, we have
	\begin{equation}
		\varphi_1 \oplus 0_{0, q+b'} \sim_R \varphi'_1 \oplus 0_{0, b+q'}.
	\end{equation}
	Substituting the equivalences $\varphi_1 \sim_R \varphi \oplus I_{q'}$ and $\varphi'_1 \sim_R \varphi' \oplus I_q$, we obtain
	\begin{equation}
		\varphi \oplus I_{q'} \oplus 0_{0, q+b'} \sim_R \varphi' \oplus I_q \oplus 0_{0, b+q'}.
	\end{equation}
	This is exactly \(\varphi \sim_{R,\mathrm{st}} \varphi'\).
\end{proof}

\begin{proof}[Proof of Proposition~\ref{prop:stable-module-matrix}, $(2) \Rightarrow (1)$]
	Assume that the two cokernels are freely stably \(R\)-isomorphic. By
	definition, there exist integers
	$\alpha,\alpha'\geq0$ such that
	\begin{equation}
		\operatorname{coker}\varphi \oplus R^\alpha \cong_R \operatorname{coker}\varphi' \oplus R^{\alpha'}.
	\end{equation}
	As noted in the main text, appending zero columns to a presentation matrix takes the direct sum of its cokernel with free summands. Thus, this relation translates to a strict \(R\)-isomorphism between the cokernels of extended matrices:
	\begin{equation}
		\operatorname{coker}(\varphi \oplus 0_{\alpha, 0}) \cong_R \operatorname{coker}(\varphi' \oplus 0_{\alpha', 0}).
	\end{equation}
	Applying Lemma~\ref{lem:strict-iso-stable-eq} to these extended matrices yields
	\begin{equation}
		\varphi \oplus 0_{\alpha, 0} \sim_{R,\mathrm{st}} \varphi' \oplus 0_{\alpha', 0}.
	\end{equation}
	Because appending zero blocks is a valid free stable \(R\)-equivalence operation, this reduces to $\varphi \sim_{R,\mathrm{st}} \varphi'$.
\end{proof}

\section{Schanuel's lemma and its generalization}
\label{app:schanuel-lemma}

This appendix reviews Schanuel's lemma and its generalization to higher
syzygies---the algebraic backbone of the rigidity theorems in the main text.

\begin{remark}[Free vs.\ projective stabilization]
	Although Schanuel's lemma and its generalizations hold fundamentally for
	projective modules over arbitrary rings~\cite[\S~5A]{Lam1999}, we restrict our
	exposition to finite‑rank free modules over a commutative ring $R$.
	Mathematically, this restriction is justified for the base ring
	$R=\mathbb Z_{p^m}[\mathbb Z^D]$ of translation‑invariant CSS codes.
	Reduction modulo the nilpotent ideal $pR$ gives the Laurent polynomial ring
	$\mathbb Z_p[\mathbb Z^D]$, over which every finitely generated projective
	module is free by the Laurent-polynomial extension of the Quillen--Suslin
	theorem~\cite{Swan1978}; nilpotent lifting gives the same conclusion over
	$R$. Physically, this
	allows us to classify codes using free stable \(R\)-equivalence, which
	corresponds exactly to the physical operation of adding unentangled product
	states. 
\end{remark}

\begin{lemma}[Schanuel's lemma]
	\label{lem:base-schanuel}
	Let $M$ be an $R$-module and suppose we have two short exact sequences
	\begin{equation}
		0 \to K \xrightarrow{i} F \xrightarrow{\pi} M \to 0,\qquad
		0 \to K' \xrightarrow{i'} F' \xrightarrow{\pi'} M \to 0,
	\end{equation}
	where $F$ and $F'$ are free $R$-modules. Then there is an isomorphism
	\begin{equation}
		K \oplus F' \cong_R K' \oplus F .
	\end{equation}
\end{lemma}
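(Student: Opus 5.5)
The plan is the classical pullback argument. Form the fibre product
\begin{equation*}
	P\coloneqq F\times_M F'=\{(u,u')\in F\oplus F' : \pi(u)=\pi'(u')\},
\end{equation*}
which is an $R$-submodule of $F\oplus F'$. We then compute $P$ in two ways, by projecting onto each factor, and compare the results.

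\textbf{First projection.} Consider $p\colon P\to F$, $(u,u')\mapsto u$. Because $\pi'$ is surjective, $p$ is surjective: for any $u\in F$ we can choose $u'$ with $\pi'(u')=\pi(u)$. The kernel of $p$ consists of the pairs $(0,u')$ with $\pi'(u')=0$, i.e.\ $u'\in\operatorname{im}i'$. Injectivity of $i'$ therefore identifies $\ker p\cong_R K'$. This gives a short exact sequence
\begin{equation*}
	0\to K'\to P\xrightarrow{p}F\to0 .
\end{equation*}
Since $F$ is free, hence projective, the sequence splits, so $P\cong_R K'\oplus F$. To make the splitting explicit, choose a basis of $F$ and lift each basis vector through $p$; this defines a section.

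\textbf{Second projection.} The symmetric argument uses $p'\colon P\to F'$. It is surjective because $\pi$ is surjective, and its kernel is $\{(i(k),0)\}\cong_R K$. This yields $0\to K\to P\to F'\to0$, which splits since $F'$ is free, so $P\cong_R K\oplus F'$. Combining the two descriptions gives $K\oplus F'\cong_R P\cong_R K'\oplus F$, as required.

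I do not expect any genuine obstacle. The only points to check carefully are the surjectivity of each projection, which uses the surjectivity of the \emph{other} map $\pi$ or $\pi'$, and the kernel identifications, which use exactness at $F$ and $F'$ together with injectivity of $i$ and $i'$. No finiteness or Noetherian hypothesis is needed, because splitting only requires $F$ and $F'$ to be projective. When $F$ and $F'$ are finite-rank free, this is exactly the form used to feed into Lemma~\ref{lem:generalized-schanuel}, whose higher-syzygy version then follows by induction, iterating this lemma along the two resolutions.
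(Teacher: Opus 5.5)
Your proof is correct and is essentially the paper's own argument. You form the fibre product $F\times_M F'$, show each projection is surjective with kernel $K'$ (resp.\ $K$), and split using freeness of $F$ (resp.\ $F'$) to get $K'\oplus F\cong_R F\times_M F'\cong_R K\oplus F'$. One small slip: the higher-syzygy induction is the module version, Lemma~\ref{lem:generalized-schanuel-modules}, not the map version Lemma~\ref{lem:generalized-schanuel}, which the paper derives from the module version.
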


\begin{proof}
	Form the pullback (fibre product) of the two surjections:
	\begin{equation}
		X \coloneqq \{\, (x,x') \in F \oplus F' \mid \pi(x) = \pi'(x') \,\}.
	\end{equation}
	The projection $p:X\to F$ is surjective because $\pi'$ is onto. Its kernel
	consists of pairs $(0,x')$ with $\pi'(x')=0$, so $\ker p \cong_R K'$. This
	gives $0 \to K' \to X \xrightarrow{p} F \to 0$, which splits because $F$
	is free. Hence $X \cong_R K' \oplus F$. Symmetrically, projecting onto $F'$
	yields $X \cong_R K \oplus F'$. Equating the two decompositions gives the
	result.
\end{proof}

For higher excitations we need the generalized Schanuel lemma, which asserts
that syzygies from different partial resolutions become freely stably
\(R\)-isomorphic.

\begin{lemma}[Generalized Schanuel's lemma for modules]
	\label{lem:generalized-schanuel-modules}
	Let $M$ be an $R$-module and suppose we have two partial free
	resolutions of length $n$:
	\begin{align}
		0 \to \Omega^n &\to F_{n-1} \to \cdots \to F_0 \to M \to 0,
		\label{eq:res1}\\
		0 \to \widetilde{\Omega}^n &\to G_{n-1} \to \cdots \to G_0 \to M \to 0,
		\label{eq:res2}
	\end{align}
	where all $F_i,G_i$ are finite‑rank free modules and
	$\Omega^n,\widetilde{\Omega}^n$ are the $n^{\text{th}}$ syzygies. Then
	$\Omega^n$ and $\widetilde{\Omega}^n$ are freely stably \(R\)-isomorphic:
	\begin{equation}
		\Omega^n \oplus G_{n-1} \oplus F_{n-2} \oplus \cdots
		\;\cong_R\;
		\widetilde{\Omega}^n \oplus F_{n-1} \oplus G_{n-2} \oplus \cdots .
	\end{equation}
\end{lemma}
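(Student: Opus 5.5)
We argue by induction on the length $n$, taking Schanuel's lemma (Lemma~\ref{lem:base-schanuel}) as the base case $n=1$. There the two sequences are $0\to\Omega^1\to F_0\to M\to0$ and $0\to\widetilde\Omega^1\to G_0\to M\to0$, and the lemma gives directly $\Omega^1\oplus G_0\cong_R\widetilde\Omega^1\oplus F_0$, which is the claimed formula with the alternating pattern truncated after one term.

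For the inductive step we assume the statement for length $n-1$. Applied to the truncations of Eqs.~\eqref{eq:res1} and~\eqref{eq:res2} ending at $F_{n-2}$ and $G_{n-2}$, it produces the stable isomorphism
\[
\Omega^{n-1}\oplus G_{n-2}\oplus F_{n-3}\oplus\cdots\cong_R\widetilde\Omega^{n-1}\oplus F_{n-2}\oplus G_{n-3}\oplus\cdots.
\]
Call the free summands on the left $A$ and those on the right $B$, so that this reads $\Omega^{n-1}\oplus A\cong_R\widetilde\Omega^{n-1}\oplus B$.

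Next we build two short exact sequences presenting this common module. From $0\to\Omega^n\to F_{n-1}\to\Omega^{n-1}\to0$ we take the direct sum with the identity on $A$, obtaining
\[
0\to\Omega^n\to F_{n-1}\oplus A\to\Omega^{n-1}\oplus A\to0.
\]
In the same way, from the $G$-resolution we obtain $0\to\widetilde\Omega^n\to G_{n-1}\oplus B\to\widetilde\Omega^{n-1}\oplus B\to0$. Composing the second surjection with the isomorphism from the inductive hypothesis, both sequences become free presentations of the same module, and both middle terms are finite-rank free.

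Lemma~\ref{lem:base-schanuel} applied to these two sequences yields $\Omega^n\oplus G_{n-1}\oplus B\cong_R\widetilde\Omega^n\oplus F_{n-1}\oplus A$. Expanding $A=G_{n-2}\oplus F_{n-3}\oplus\cdots$ and $B=F_{n-2}\oplus G_{n-3}\oplus\cdots$ recovers exactly the alternating pattern in the statement.

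We do not expect a genuine obstacle. The only care needed is bookkeeping: tracking the parity of the indices, so that $F$ and $G$ alternate correctly on each side, and checking that the base-case lemma is applied to modules presented by free modules, which holds because adding a free summand $A$ or $B$ keeps the middle terms free.
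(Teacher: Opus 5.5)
Your proposal is correct and follows the paper's proof essentially step for step. Both arguments induct on $n$ with Lemma~\ref{lem:base-schanuel} as the base case. Both pad the tail sequences $0\to\Omega^n\to F_{n-1}\to\Omega^{n-1}\to0$ and its $G$-analogue by the free summands supplied by the inductive hypothesis. Both then apply Schanuel's lemma once more to the resulting isomorphic right-hand terms, which yields the alternating pattern.
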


\begin{proof}
	By induction on $n$. For $n=1$, Lemma~\ref{lem:base-schanuel} directly
	gives $\Omega^1 \oplus G_0 \cong_R \widetilde{\Omega}^1 \oplus F_0$, which
	is the required formula.

	Assume the statement for $n-1$. From the tails of~\eqref{eq:res1}
	and~\eqref{eq:res2} we obtain the short exact sequences
	\begin{align}
		0 \to \Omega^n &\to F_{n-1} \to \Omega^{n-1} \to 0, \label{eq:ses1}\\
		0 \to \widetilde{\Omega}^n &\to G_{n-1} \to \widetilde{\Omega}^{n-1}
		\to 0, \label{eq:ses2}
	\end{align}
	where $\Omega^{n-1} \coloneqq \operatorname{im}(F_{n-1}\to F_{n-2})$
	and $\widetilde{\Omega}^{n-1} \coloneqq \operatorname{im}(G_{n-1}\to G_{n-2})$.

	The truncated sequences
	\begin{equation}
		\begin{split}
			0 \to \Omega^{n-1} &\to F_{n-2} \to \cdots \to F_0 \to M \to 0,\\
			0 \to \widetilde{\Omega}^{n-1} &\to G_{n-2} \to \cdots \to G_0 \to M \to 0 .
		\end{split}
	\end{equation}
	are partial free resolutions of length $n-1$. By the induction hypothesis,
	there exist finite-rank free modules
	\begin{equation}
		P = G_{n-2} \oplus F_{n-3} \oplus \cdots,\quad
		Q = F_{n-2} \oplus G_{n-3} \oplus \cdots,
	\end{equation}
	such that $\Omega^{n-1} \oplus P \cong_R \widetilde{\Omega}^{n-1} \oplus Q$.

	Pad~\eqref{eq:ses1} and~\eqref{eq:ses2} with the trivial split extensions of
	$P$ and $Q$ to obtain
	\begin{align}
		0 \to \Omega^n &\to F_{n-1}\oplus P \to \Omega^{n-1}\oplus P \to 0,\\
		0 \to \widetilde{\Omega}^n &\to G_{n-1}\oplus Q
		\to \widetilde{\Omega}^{n-1}\oplus Q \to 0.
	\end{align}
	The rightmost modules are now \(R\)-isomorphic, so Lemma~\ref{lem:base-schanuel}
	yields
	\begin{equation}
		\Omega^n \oplus (G_{n-1}\oplus Q) \;\cong_R\;
		\widetilde{\Omega}^n \oplus (F_{n-1}\oplus P).
	\end{equation}
	Substituting $P$ and $Q$ recovers the alternating direct sum stated in
	the lemma.
\end{proof}

In the main text, Lemma~\ref{lem:generalized-schanuel-modules} guarantees that
the $(n-1)^{\text{st}}$ syzygies, corresponding to the cokernels of the leftmost presentation
matrices, are freely stably \(R\)-isomorphic. Proposition~\ref{prop:stable-module-matrix}
then translates this \(R\)-module isomorphism into the free stable \(R\)-equivalence of the
corresponding matrices, providing the algebraic bridge from macroscopic
excitation data back to the microscopic stabilizer maps.

\section{Grade bound and realization of single-layer superselection profiles}
\label{app:profile-realization}

The proof of Proposition~\ref{prop:single-degree-realizability} has two parts. First, topological exactness embeds the reduced
Pauli module into a finite free module, which yields the grade bound.
Conversely, a free resolution supplies the stabilizer maps, while the grade
condition ensures the required exactness of the dual complex.

\begin{proof}[Proof of Proposition~\ref{prop:single-degree-realizability}]
	\emph{Necessary grade bound.}
	Set
	\begin{equation}
		M_X\coloneqq\operatorname{coker}\varphi_X.
	\end{equation}
	Conjugating the first identity in
	Eq.~\eqref{eq:topological-exactness-CSS}, and using the canonical
	finite-free identifications, gives
	\begin{equation}
		\ker\varphi_Z^\vee
		=
		\operatorname{im}\overline{\varphi_X}.
	\end{equation}
	Thus \(\varphi_Z^\vee\) induces an embedding
	\begin{equation}
		\overline{M_X}
		\cong_R
		\operatorname{im}\varphi_Z^\vee
		\subseteq G_Z^\vee.
	\end{equation}
	After this identification, set
	\(C\coloneqq G_Z^\vee/\overline{M_X}\). Since
	\(G_Z^\vee\) is finite free, we obtain
	\begin{equation}
		0\longrightarrow \overline{M_X}
		\longrightarrow G_Z^\vee\longrightarrow C
		\longrightarrow0.
		\label{eq:torsionless-embedding}
	\end{equation}
	Because the middle term is free, the long exact Ext sequence gives, for
	every \(\ell\geq1\),
	\begin{equation}
		\operatorname{Ext}_R^\ell(\overline{M_X},R)
		\cong_R
		\operatorname{Ext}_R^{\ell+1}(C,R).
		\label{eq:grade-dimension-shift}
	\end{equation}

	It remains to translate Eq.~\eqref{eq:grade-dimension-shift} into a grade
	bound. Let
	\(A=\mathbb Z[x_1^{\pm1},\ldots,x_D^{\pm1}]\). Since \(A\) is regular
	and \(p^m\) is a non-zero-divisor in \(A\),
	\(R=A/(p^m)\) is a \(D\)-dimensional Gorenstein ring and hence is
	Cohen--Macaulay. A standard property of local Gorenstein rings is that,
	at a prime ideal \(\mathfrak p\) of height at most \(\ell\), the local
	ring \(R_{\mathfrak p}\) has injective dimension at most
	\(\ell\)~\cite[Thm.~21.8 and Cor.~21.17]{eisenbud1995commutative}. The right-hand side of
		Eq.~\eqref{eq:grade-dimension-shift} therefore vanishes after localization
		at \(\mathfrak p\). In other words,
		\(\operatorname{Ext}_R^\ell(\overline{M_X},R)\) is supported only in algebraic
		codimension at least \(\ell+1\). For a nonzero finitely generated module
	over a Cohen--Macaulay ring, its grade is the minimum codimension in its
	support; the zero case follows from \(\operatorname{grade}_R(0)=\infty\).
	Consequently,
	\begin{equation}
		\operatorname{grade}_R
		\operatorname{Ext}_R^\ell(\overline{M_X},R)
		\geq \ell+1>\ell.
	\end{equation}
	Since \(E_X^\ell\cong_R\operatorname{Ext}_R^\ell(\overline{M_X},R)\), this
	proves the necessary grade bound.

	\emph{Construction and sufficiency.}
	Suppose that \(\operatorname{grade}_R(E)>j\). Choose a free resolution of
	\(E\) by finitely generated free modules and define the stabilizer maps by
	Eq.~\eqref{eq:profile-realization-maps}. Then
	\begin{equation}
		\operatorname{Ext}_R^i(E,R)=0
		\qquad(0\leq i\leq j),
	\end{equation}
	so the \(R\)-dual complex is exact at
	\(G_0^\vee,\ldots,G_j^\vee\). Conjugation turns this into exactness of the
	dagger-dual complex. Since dagger duality is involutive on finite free
	modules, the original and dagger-dual resolutions give
	\begin{equation}
		\begin{aligned}
			\operatorname{im}\varphi_Z
			&=\operatorname{im}\partial_{j+1}
			=\ker\partial_j
			=\ker\varphi_X^\dagger,\\
			\operatorname{im}\varphi_X
			&=\operatorname{im}\partial_j^\dagger
			=\ker\partial_{j+1}^\dagger
			=\ker\varphi_Z^\dagger.
		\end{aligned}
	\end{equation}
	These are precisely the two topological-exactness conditions, so
	\((\varphi_X,\varphi_Z)\) defines a topological code.

	To compute its superselection profile, again write
	\(M_X\coloneqq\operatorname{coker}\varphi_X\). Then
	\begin{equation}
		\overline{M_X}
		\cong_R
		\operatorname{coker}\partial_j^\vee.
	\end{equation}
	The exact part of the dual complex gives the finite free resolution
	\begin{equation}
		0\longrightarrow G_0^\vee\longrightarrow G_1^\vee
		\longrightarrow\cdots\longrightarrow G_j^\vee
		\longrightarrow \overline{M_X}\longrightarrow0.
		\label{eq:constructed-module-resolution}
	\end{equation}
	The groups \(\operatorname{Ext}_R^\ell(\overline{M_X},R)\) are the
	cohomology of the dual of this resolution. That dual agrees with the
	corresponding part of the original resolution of \(E\), so
	\begin{equation}
		\operatorname{Ext}_R^\ell(\overline{M_X},R)
		\cong_R
		\begin{cases}
			E, & \ell=j,\\
			0, & \ell>0,\ \ell\neq j.
		\end{cases}
	\end{equation}
	Thus the constructed code realizes \(E\) in degree \(j\).

	Conjugation of Eq.~\eqref{eq:constructed-module-resolution} produces a
	free resolution of \(M_X\) of length at most
	\(j\). If \(E\neq0\), the nonvanishing of
	\(\operatorname{Ext}_R^j(\overline{M_X},R)\) forces
	\(\overline{M_X}\) to have projective dimension \(j\). Conjugation
	preserves projective dimension, so the same holds for \(M_X\). Conversely,
	if a code realizes \(E\)
	in degree \(j\), the necessary grade bound at \(\ell=j\) gives
	\(\operatorname{grade}_R(E)>j\).
\end{proof}

\section{Stable Clifford--FDLU separation for prime-power qudits}
\label{app:pm-obstruction}

The physical equivalence relation defining gapped topological phases can be characterized
by finite-depth local unitaries (FDLUs). For algebraic simplicity, the main
text restricts to translation-invariant finite-depth Clifford circuits. For
\(\mathbb Z_{p^m}\) qudits with \(m>1\), Clifford equivalence can be strictly
finer than FDLU equivalence. The following one-dimensional CSS example
demonstrates this distinction: it lies in the trivial FDLU phase
and has vanishing positive-degree \(X\)-sector Ext, yet cannot be disentangled
by any finite-depth Clifford circuit, even after adding or removing stabilizer
product-state ancillas and dropping translation invariance. Thus Clifford circuits are too restrictive to identify all topological states belonging to the same phase.

\begin{example}[An FDLU-trivial but Clifford-nontrivial \(\mathbb Z_{p^2}\) CSS chain]
	\label{prop:minimal-filtered-cluster}
	Let $p$ be prime, set
	\begin{equation}
		R=\mathbb Z_{p^2}[x^{\pm1}],
		\qquad
		t=x-1,
		\qquad
		\overline{t}=x^{-1}-1=-x^{-1}t,
	\end{equation}
	and consider
	\begin{equation}
		\varphi_X=
		\begin{pmatrix}
			p & \overline{t}\\
			0 & p
		\end{pmatrix},
		\qquad
		\varphi_Z=
		\begin{pmatrix}
			p & 0\\
			-t & p
		\end{pmatrix}.
	\end{equation}

	\emph{Exactness and Ext vanishing.}
	Direct calculation gives
	\begin{align}
		\ker\varphi_X^\dagger
		&=
		\bigl\{(pc,-tc+pd):c,d\in R\bigr\}
		=
		\operatorname{im}\varphi_Z,\\
		\ker\varphi_Z^\dagger
		&=
		\bigl\{(\overline{t}c+pd,pc):c,d\in R\bigr\}
		=
		\operatorname{im}\varphi_X.
	\end{align}
	Hence the stabilizer module is Lagrangian; the same calculation after
	passing to any finite periodic quotient shows that the model defines a
	stabilizer state.

	Set $M_X\coloneqq\operatorname{coker}\varphi_X$ and define
	\begin{equation}
		\alpha\coloneqq\overline{\varphi_X}
		=
		\begin{pmatrix}
			p & t\\
			0 & p
		\end{pmatrix},
		\qquad
		\beta\coloneqq\varphi_Z^\vee
		=
		\begin{pmatrix}
			p & -t\\
			0 & p
		\end{pmatrix}.
	\end{equation}
	The matrix \(\alpha\) presents \(\overline{M_X}\). The identities
	\begin{equation}
		\ker\alpha=\operatorname{im}\beta,
		\qquad
		\ker\beta=\operatorname{im}\alpha
	\end{equation}
	give the two-periodic free resolution
	\begin{equation}
		\cdots\longrightarrow R^2
		\xrightarrow{\alpha}R^2
		\xrightarrow{\beta}R^2
		\xrightarrow{\alpha}R^2
		\longrightarrow \overline{M_X}\longrightarrow0.
	\end{equation}
	The dual matrices likewise satisfy
	\begin{equation}
		\ker\alpha^\vee=\operatorname{im}\beta^\vee,
		\qquad
		\ker\beta^\vee=\operatorname{im}\alpha^\vee,
	\end{equation}
	so the dual complex is exact.
	Consequently,
	\begin{equation}
		\operatorname{Ext}_R^\ell(\overline{M_X},R)=0
		\qquad(\ell>0).
	\end{equation}
	This is consistent with the absence of long-range-entangled
	topological phases in one dimension: the model has no nontrivial bulk
	topological superselection sectors. The obstruction below therefore
	reflects the restriction to Clifford circuits rather than intrinsic
	topological order.

	\emph{No finite-length free resolution.}
	Let $S=R/(p)=\mathbb Z_p[x^{\pm1}]$. Reduction of the presentation
	modulo $p$ gives
	\begin{equation}
		\overline{M_X}\otimes_R S\cong_S S\oplus S/(t),
	\end{equation}
	which is not projective over $S$. Thus \(\overline{M_X}\) is not
	projective over $R$.
	In the two-periodic resolution above, the kernel at every second stage
	is isomorphic to \(\overline{M_X}\). If \(\overline{M_X}\) had finite
	projective dimension, a sufficiently high such kernel would be projective,
	forcing \(\overline{M_X}\) itself to be projective. Therefore
	\(\overline{M_X}\), and hence also \(M_X\), admits no finite-length free
	resolution.

	\emph{Clifford obstruction even under coarse graining and nonprimitive
	stabilization.}
	Let
	\begin{equation}
		P=R_X^2\oplus R_Z^2,
		\qquad
		\mathcal L=\operatorname{im}\varphi_X
		\oplus\operatorname{im}\varphi_Z
	\end{equation}
	be the Pauli module and its stabilizer submodule.

	Here stabilization allows a uniformly bounded number of decoupled
	$\mathbb Z_{p^2}$ qudits per site, each prepared in an arbitrary pure
	on-site stabilizer state; the ancillary types may vary from site to
	site. Besides primitive states stabilized by $\langle Z\rangle$ or
	$\langle X\rangle$, this includes the nonprimitive Lagrangian
	stabilizer
	\begin{equation}
		\langle X^p,Z^p\rangle,
		\qquad
		\lvert\chi\rangle
		=
		\frac{1}{\sqrt p}\sum_{a=0}^{p-1}\lvert pa\rangle.
	\end{equation}
	Although neither generator is primitive, their common \(+1\)
	eigenspace is one-dimensional and is spanned by $\lvert\chi\rangle$.
	Thus the obstruction below is stronger than the Clifford distinction
	between primitive and nonprimitive on-site stabilizer states: it
	persists even when both are freely available as stabilizing ancillas.
	More generally, the same obstruction persists under stabilization by
	on-site stabilizer states of uniformly bounded cyclic dimensions,
	provided Clifford circuits normalize the resulting fixed
	mixed-dimensional Pauli group.

	Modulo $p$,
	\begin{equation}
		(P/\mathcal L)\otimes_R S
		\cong_S
		S^2\oplus S/(\overline{t})\oplus S/(t)
		\cong_S
		S^2\oplus\bigl[S/(t)\bigr]^2.
	\end{equation}
	For a blocking of $b$ sites, set
	\begin{equation}
		y=x^b,
		\qquad
		T=\mathbb Z_p[y^{\pm1}].
	\end{equation}
	As \(S\) is free of rank \(b\) over \(T\) and
	\(S/(t)\cong_T T/(y-1)\), restriction of scalars gives
	\begin{equation}
		(P/\mathcal L)\otimes_R S
		\cong_T
		T^{2b}\oplus\bigl[T/(y-1)\bigr]^2.
	\end{equation}
	The torsion summands persist for every $b\geq1$. A
	translation-invariant Clifford circuit for the blocked lattice induces
	a $T$-linear symplectic automorphism and therefore preserves this
	quotient, whereas an on-site product stabilizer state gives a free
	$T$-module after reduction modulo $p$. Any periodic collection of the
	stabilizer ancillas described above likewise adds only free
	$T$-summands.
	Hence no Clifford circuit that is periodic after finite blocking can
	disentangle the state, even stably.

	\emph{Clifford nontriviality without translation invariance.}
	The preceding obstruction also excludes translation-breaking Clifford
	disentanglers. Suppose that a finite-depth, finite-range Clifford circuit
	disentangles the chain after adjoining product-state stabilizer ancillas
	whose number and cyclic dimensions per site are uniformly bounded. After
	padding unused ancilla slots, its local gates, ancilla states, and output
	product stabilizers take values in a finite alphabet, modulo phases.
	Circuit legality and disentangling impose the same finite-window
	constraints at every site.
	In any valid bi-infinite encoding, some boundary window must repeat;
	repeating the intervening segment therefore produces a periodic valid
	encoding. Hence the assumed circuit implies a Clifford disentangler
	that is translation invariant after finite blocking.

	On any sufficiently large periodic quotient compatible with this
	period, the image of the input stabilizer group is contained in the
	output stabilizer group. Both are Lagrangian on the same padded Hilbert
	space, so the inclusion is an equality. This contradicts the torsion
	obstruction above. Therefore no finite-depth, finite-range Clifford
	circuit can disentangle the state, even stably and without assuming
	translation symmetry of the circuit, ancillas, or output product state.

	\emph{Translation-invariant non-Clifford disentangling.}
	On each $\mathbb Z_{p^2}$ qudit, use the on-site factorization
	\begin{equation}
		D\lvert a+pb\rangle
		=
		\lvert a\rangle\otimes\lvert b\rangle,
		\qquad a,b\in\mathbb Z_p.
	\end{equation}
	The checks $X_1^p$ and $Z_2^p$ freeze one factor of each of the two
	qudits per cell. On the remaining factors,
	$(X_1,Z_1^p)$ and $(X_2^p,Z_2)$ act as ordinary
	$\mathbb Z_p$-qudit Pauli pairs, and the surviving stabilizer maps are
	\begin{equation}
		\psi_X=
		\begin{pmatrix}
			\overline{t}\\
			1
		\end{pmatrix},
		\qquad
		\psi_Z=
		\begin{pmatrix}
			1\\
			-t
		\end{pmatrix}.
	\end{equation}
	The effective $\mathbb Z_p$-qudit Clifford circuit
	\begin{equation}
		W_X=
		\begin{pmatrix}
			1 & -\overline{t}\\
			0 & 1
		\end{pmatrix},
		\qquad
		W_Z=(W_X^\dagger)^{-1}
		=
		\begin{pmatrix}
			1 & 0\\
			t & 1
		\end{pmatrix}
	\end{equation}
	sends $\psi_X$ and $\psi_Z$ to $(0,1)^\mathsf T$ and
	$(1,0)^\mathsf T$, respectively. Since
	$-\overline{t}=1-x^{-1}$, this circuit consists of two
	translation-invariant layers of CNOT gates on the effective
	$\mathbb Z_p$ qudits. Conjugating it by the on-site map $D$ gives a
	translation-invariant FDLU that disentangles the original
	$\mathbb Z_{p^2}$ chain. As shown above, this FDLU is not Clifford with
	respect to the original $\mathbb Z_{p^2}$ Pauli operators. No coarse
	graining is needed.
\end{example}

Thus the hypothesis of a finite-length free resolution in
Theorem~\ref{thm:topological-rigidity} identifies the sector in which the
present Clifford-based argument applies. Viewed as a $p^2$-torsion
$\mathbb Z[x^{\pm1}]$-system, this chain also lies in the invertible, or
trivial-charge, stabilizer sector studied by Geiko and
Shuklin~\cite{GeikoShuklin2025Invertible}. Their final one-dimensional
relative-$L$-theory quotient, which allows stabilization and condensation
beyond fixed-Pauli Clifford moves and further quotients lower-dimensional
subsystems, is trivial, consistent with the FDLU disentangler above. The
present example nevertheless retains a finer
obstruction under stable Clifford circuits that preserve the original
$\mathbb Z_{p^2}$ Pauli structure. The relation of their algebraic quotient
to finite-depth-circuit equivalence remains conjectural. For simplicity, this
paper restricts to Clifford connectivity and leaves a general FDLU treatment
over $\mathbb Z_{p^m}$ to future work.

\section{Uniform coarse splitting of qubit toric-code extensions}
\label{app:coarse-splitting}

\begin{proof}[Proof of Fact~\ref{fact:coarse-splitting}]
	Set \(\Lambda=2\mathbb Z^3\) and
	\(R_\Lambda=\mathbb Z_2[\Lambda]\). As an \(R_\Lambda\)-module,
	\begin{equation}
		R=\bigoplus_{\boldsymbol\epsilon\in\{0,1\}^3}
		R_\Lambda\mathbf x^{\boldsymbol\epsilon},
		\qquad
		\mathbf x^{\boldsymbol\epsilon}
		=x_1^{\epsilon_1}x_2^{\epsilon_2}x_3^{\epsilon_3}.
	\end{equation}
	Define the block-local \(R_\Lambda\)-linear map
	\(\mathsf T_j\) by
	\begin{equation}
		\mathsf T_j(r\mathbf x^{\boldsymbol\epsilon})
		=
		\begin{cases}
			r x_j^{-1}\mathbf x^{\boldsymbol\epsilon},&\epsilon_j=1,\\
			0,&\epsilon_j=0,
		\end{cases}
		\qquad r\in R_\Lambda.
	\end{equation}
	For \(f_\ell=x_\ell-1\), it satisfies
	\begin{equation}
		\mathsf T_jf_j+f_j\mathsf T_j=\operatorname{id},
		\qquad
		[\mathsf T_j,f_\ell]=0\quad(\ell\neq j).
	\end{equation}

	Use the cyclic basis
	\(e_{\hat1}=e_{23}\), \(e_{\hat2}=e_{31}\),
	\(e_{\hat3}=e_{12}\). For each cyclic triple
	\((i,j,k)=(1,2,3),(2,3,1),(3,1,2)\), define \(R_\Lambda\)-linear maps
	\(G_i\colon K_3\to K_2\),
	\(H_i\colon K_2\to K_1\), and
	\(Q_i\colon K_1\to K_0\) by their nonzero actions
	\begin{equation}
		\begin{aligned}
			G_i(r e_{123})
			&=\mathsf T_j(r)e_{\hat{k}},\\
			H_i(r e_{\hat{j}})
			&=\mathsf T_j(r)e_i,
			&
			H_i(r e_{\hat{i}})
			&=\mathsf T_j(r)e_j,\\
			Q_i(r e_k)&=\mathsf T_j(r),
		\end{aligned}
	\end{equation}
	where \(r\in R\). Direct substitution into
	Eq.~\eqref{eq:3d-differentials} gives
	\begin{equation}
		\partial_2G_i+H_i\partial_3=\iota_i,
		\qquad
		Q_i\partial_2+\partial_1H_i=\pi_i,
	\end{equation}
	where \(\iota_i(e_{123})=e_i\) and \(\pi_i(e_{\hat i})=1\), with all
	other basis elements sent to zero.
	For \(\mathbf a=(a_1,a_2,a_3)^{\mathsf T}\), set
	\(G=\sum_i a_iG_i\), \(H=\sum_i a_iH_i\), and
	\(Q=\sum_i a_iQ_i\). Then
	\begin{equation}
		\mathbf a=\partial_2G+H\partial_3,
		\qquad
		\mathbf a^{\mathsf T}=Q\partial_2+\partial_1H.
	\end{equation}
	Hence the upper-triangular changes of basis
	\begin{equation}
		V_2=\begin{pmatrix}1&G\\0&1\end{pmatrix},
		\qquad
		V_1=\begin{pmatrix}1&H\\0&1\end{pmatrix},
		\qquad
		V_0=\begin{pmatrix}1&Q\\0&1\end{pmatrix}
	\end{equation}
	satisfy
	\begin{equation}
		V_1\mathcal D_{2,\mathbf a}
		=\mathcal D_{2,\mathbf0}V_2,
		\qquad
		V_0\mathcal D_{1,\mathbf a}
		=\mathcal D_{1,\mathbf0}V_1.
	\end{equation}
	The maps \(V_0\) and \(V_2\) relabel stabilizer generators, while the
	cell-local physical shear \(V_1\) is implemented by a finite-depth
	\(\Lambda\)-translation-invariant CNOT circuit. Thus
	\(\mathcal C_{\mathbf a}\sim_{R_{\Lambda},\mathrm{Cl}}
	\mathcal C_{\mathbf0}\), without adding ancillas.
	Since blocking changes only the unit cell, the split model retains one
	degree-1 and one degree-2 toric-code factor, proving the intrinsic-order
	statement as well.
\end{proof}

\bibliography{references}
\end{document}